\documentclass{article}

\usepackage[preprint]{neurips_2026}
\usepackage[utf8]{inputenc}
\usepackage[T1]{fontenc}
\usepackage{graphicx} 
\usepackage[title]{appendix}
\usepackage{amsmath}
\usepackage{amsfonts}
\usepackage{amssymb}
\usepackage{amsthm}
\usepackage{booktabs}
\usepackage{multirow}
\usepackage{enumitem}
\usepackage{setspace}
\usepackage{algorithm}
\usepackage{algorithmic}
\usepackage{xcolor}
\usepackage{colortbl}
\usepackage{float}
\usepackage{hyperref}
\usepackage{tikz}
\usetikzlibrary{arrows.meta, calc, positioning, decorations.pathreplacing}
\usepackage[capitalize,noabbrev]{cleveref}

\newtheorem{theorem}{Theorem}
\newtheorem{assumption}{Assumption}
\newtheorem{corollary}{Corollary}
\newtheorem{lemma}{Lemma}
\newtheorem{proposition}{Proposition}
\newtheorem{remark}{Remark}

\title{TEA-Time: \textbf{T}ransporting \textbf{E}ffects \textbf{A}cross \textbf{Time}}

\author{
  Harsh Parikh\thanks{Corresponding author: \texttt{harsh.parikh@yale.edu}} \\
  Amazon SCOT, Seattle, USA\\
  Yale University, New Haven, USA
  \And
  Gabriel Levin-Konigsberg \\
  Amazon SCOT,
  Seattle, USA
  \And
  Dominique Perrault-Joncas \\
  Amazon SCOT,
  Seattle, USA
  \And
  Alexander Volfovsky\thanks{Corresponding author: \texttt{alexander.volfovsky@duke.edu}} \\
  Amazon SCOT, Seattle, USA\\
  Duke University, Durham, USA
}

\begin{document}

\maketitle

\begin{abstract}
Treatment effects estimated from a randomized controlled trial are local not only to the study population but also to the time at which the trial was conducted. The literature on generalizing experimental findings to new populations is extensive, yet transporting effects across time has received far less attention, and even defining the target estimand is nonobvious. We formalize the transported average treatment effect under a separable temporal effects assumption, derive two identification strategies---replicated trials and common arm---and develop doubly robust, semiparametrically efficient estimators for each. Applied to a large archive of headline A/B tests, the common arm strategy is substantially more precise but exhibits systematic bias when the temporal factor depends on the gap between intervention and measurement rather than on measurement time alone, while the replicated trials strategy, which allows this dependence, tracks the ground truth more faithfully. Simulation studies investigate when each strategy is reliable and when it silently fails.
\end{abstract}

\section{Introduction}
\label{sec:introduction}
Causal effects identified by randomized controlled trials (RCTs) are inherently local---not only to the population studied but also to the time at which the trial was conducted. A marketing intervention tested during summer may perform differently in the holiday season; a job training program evaluated during economic expansion may have different effects during recession; a drug's efficacy may vary with seasonal disease prevalence. Despite the ubiquity of such temporal variation, the causal inference literature has devoted considerably more attention to transporting effects across populations than across time.

This paper develops a framework for \emph{temporal transportation}: using data from trials conducted at one time to estimate treatment effects at different times. The setting we consider is one where an organization has access to multiple historical RCTs---potentially comparing different interventions---and seeks to predict what the effect of a specific treatment would be if administered at a new target time. Applications include e-commerce promotions, digital advertising, labor market policy, and clinical trials.

\begin{figure}[!h]
\centering
\begin{tikzpicture}[
    trial/.style={rectangle, rounded corners=1.5pt, minimum height=0.45cm, draw, thick, font=\scriptsize},
    primary/.style={trial, fill=blue!5, draw=blue!90},
    anchorA/.style={trial, fill=orange!50, draw=orange!20!black!50},
    anchorB/.style={trial, fill=green!50!black!10, draw=green!20!black!50},
    transported/.style={trial, fill=blue!15, draw=blue!60, dashed},
    infoarrow/.style={-{Stealth[length=2mm]}, thick, black!70},
    temporalarrow/.style={-{Stealth[length=2mm]}, thick, densely dashed, black!50},
]

\def\rowsep{0.75}
\def\colwidth{1.8}

\node[font=\scriptsize\bfseries, black!70] at (0.9, 0.5) {Source};
\node[font=\scriptsize\bfseries, gray!70] at (0.9+\colwidth, 0.5) {$\to \dots \to$};
\node[font=\scriptsize\bfseries, black!70] at (0.9+2*\colwidth, 0.5) {Target};

\node[font=\scriptsize, right] at (-1.6, 0.5*\rowsep) {\underline{Arms}};
\node[font=\scriptsize, right] at (-1.6, 0) {$a$ vs $b$};
\node[font=\scriptsize, right] at (-1.6, -\rowsep) {$c$ vs $d$};
\node[font=\scriptsize, right] at (-1.6, -2*\rowsep) {$e$ vs $b$};

\node[primary, minimum width=1.1cm] (P1) at (0.9, 0) {Trial $k^\star$};
\node[transported, minimum width=0.9cm] (T1) at (0.9+2*\colwidth, 0) {TATE};

\node[anchorA, minimum width=0.7cm] (A1) at (0.9, -\rowsep) {};
\node[anchorA, minimum width=1.6cm] (A2) at (0.9+\colwidth, -\rowsep) {};
\node[anchorA, minimum width=0.7cm] (A3) at (0.9+2*\colwidth, -\rowsep) {};

\node[anchorB, minimum width=1.7cm] (B1) at (1.2, -2*\rowsep) {};
\node[anchorB, minimum width=0.8cm] (B3) at (0.9+2*\colwidth, -2*\rowsep) {};

\draw[infoarrow] (P1.south) -- (A1.north);
\draw[infoarrow] ($(P1.south)-(0.1,0)$) -- ($(A1.north)-(0.1,0)$) -- ($(B1.north)-(0.1,0)$);

\draw[temporalarrow] (A1.east) -- (A2.west);
\draw[temporalarrow] (A2.east) -- (A3.west);
\draw[temporalarrow] (B1.east) -- (B3.west);

\draw[infoarrow] (A3.north) -- (T1.south);
\draw[infoarrow] ($(B3.north)+(0.1,0)$) -- ($(A3.south)+(0.1,0)$) -- ($(T1.south)+(0.1,0)$);

\draw[decorate, decoration={brace, amplitude=4pt, raise=2pt}, gray] 
    (0, -2*\rowsep-0.3) -- (0, -\rowsep+0.3);
\node[font=\tiny, gray, left] at (-0.15, -1.5*\rowsep) {Anchors};

\end{tikzpicture}
\caption{Temporal transportation uses anchor trials to extrapolate treatment effects across time. Rows are treatment comparisons; columns are times. Solid boxes are ATEs directly estimable from each trial's data. Solid arrows denote information flow within a time period (from the target trial to contemporary anchors). The target TATE is identified by chaining anchor ATEs across time without requiring the primary comparison at the target time.}
\label{fig:transport_schematic}
\end{figure}

The challenge of temporal transportation differs fundamentally from cross-population generalization. When transporting across populations, we typically observe covariates in both source and target and can reweight to adjust for distributional shifts \citep{dahabreh2019generalizing, stuart2011use}. When transporting across time, we cannot observe outcomes under the target timing---by definition, we are extrapolating to a period where the relevant trial was not conducted. Identification therefore requires structural assumptions about how treatment effects vary with time. Our key insight is that \emph{other} trials---comparing potentially different treatments---can serve as temporal anchors (Figure~\ref{fig:transport_schematic}): if anchor trials testing other treatment comparisons were conducted at both source and target times, we can use them to learn how outcomes or effects evolve temporally. To the best of our knowledge, this is the first work to formalize temporal transportation of treatment effects as an identification problem, distinct from cross-population transportability \citep{dahabreh2019generalizing, stuart2011use} (which assumes shared covariate support) and from meta-analysis and panel-data methods (which estimate effects where treatments \emph{were} administered).

\textbf{Contributions.} Our paper makes four contributions. First, we formalize the temporal transportation problem and introduce the \emph{transported average treatment effect} (TATE) as the target estimand. Under a \emph{separable temporal effects} assumption---a multiplicative decomposition of conditional-mean potential outcomes into treatment- and covariate-specific effects multiplied by a shared temporal effect---we show that the TATE equals the product of an observed average treatment effect and an identifiable temporal ratio. Second, we provide two identification strategies with distinct data requirements. The \emph{replicated trials} strategy uses pairs of trials comparing identical treatments at different times, permitting flexible temporal structures but requiring exact replication. The \emph{common arm} strategy identifies the temporal ratio from any treatment arm observed at multiple times---such as a control arm appearing across many trials---but imposes a stronger restriction that temporal effects operate only through measurement time. Third, since separability is structural and cannot be falsified from a single pair of time points, we derive \emph{partial identification bounds} on the TATE as a function of the violation magnitude; the resulting \emph{breakdown frontier} summarizes robustness that a practitioner can report alongside the point estimate. Fourth, we develop doubly robust estimators for the TATE under each strategy that are consistent if either the outcome model or the propensity scores are correctly specified and achieve the semiparametric efficiency bound when both are correct; for the common arm strategy with multiple anchors, we derive the optimally weighted combination. We evaluate both strategies through Monte Carlo simulations across three data generating processes and an application to over 32,000 A/B tests from the Upworthy Research Archive \citep{matias2021upworthy}: the common arm approach offers greater precision but incurs bias when the temporal effect depends on both intervention and measurement times, and the breakdown frontier quantifies exactly how fragile each conclusion is.

\section{Setup and Assumptions}
\label{sec:preliminaries}

\textbf{Notation.} Let $\mathcal{T}$ denote discrete time periods, $\mathcal{P}$ a population of units, and $\mathcal{A} \cup \{0\}$ the set of treatments with $0$ as control. We observe a collection of RCTs $\mathcal{K}$. Each trial $k \in \mathcal{K}$ draws a sample $\mathcal{S}_k \subset \mathcal{P}$ of $n_k$ units, compares treatment $a_k$ against baseline $b_k$, administers treatment at time $t_{0k}$, and measures outcomes at time $t_{1k} \geq t_{0k}$. Units in trial $k$ are randomly assigned to $A \in \{a_k, b_k\}$. Let $X$ denote pre-treatment covariates and $S \in \mathcal{K}$ denote trial membership.

Define $Y_{t_1}(a, t_0)$ as the potential outcome under treatment $a$ administered at $t_0$ and measured at $t_1$. The observed outcome for a unit in trial $k$ is $Y_{t_{1k}} = Y_{t_{1k}}(A, t_{0k})$.

\textbf{Standard assumptions.} We maintain: (i) \emph{random assignment} $A \perp \{Y_{t_1}(a, t_0)\} \mid S = k, X$ for each trial $k$\footnote{This conditional form accommodates both simple randomization (where $A \perp X \mid S = k$) and covariate-adaptive designs.}; (ii) \emph{SUTVA}: no interference, no hidden treatment versions; (iii) \emph{trial-timing exogeneity}: trial timing $(t_{0k}, t_{1k})$ is independent of the potential outcomes at the scheduling stage---automatically satisfied by predetermined cadences (Appendix~\ref{app:trial_timing}); and (iv) \emph{compositional stability}: $X \perp S$ at the super-population level, so the pooled mean $\mathbb{E}[\mu_{a,k}(X)]$ equals the trial-$k$ marginal $\mathbb{E}[Y\mid A=a, S=k]$. Under drift ($X \not\perp S$), the trial-$k^\star$ marginal requires a density-ratio reweighting of our doubly robust score (Appendix~\ref{app:assumption_discussion}). \emph{Individual non-stationarity}---$\boldsymbol{\theta}_a(X)$ changing over time for the same unit---violates separability at the unit level and cannot be resolved through reweighting.

\textbf{Separable temporal effects.} To enable temporal transportation, we impose multiplicative structure on how outcomes vary with timing.

\begin{assumption}[Separable Temporal Effects]
\label{ass:separable}
For a fixed rank $R \geq 1$, the conditional mean of potential outcomes factorizes as
\[
\mathbb{E}[Y_{t_1}(a, t_0) \mid X] \;=\; \boldsymbol{\theta}_a(X)^\top \boldsymbol{\Lambda}(t_0, t_1),
\]
where $\boldsymbol{\theta}_a(X) \in \mathbb{R}^R$ is a vector of treatment- and covariate-specific effects and $\boldsymbol{\Lambda}(t_0, t_1) \in \mathbb{R}^R$ is a vector of temporal effects common across units and treatments.
\end{assumption}

Assumption~\ref{ass:separable} restricts only the conditional-mean surface of $Y$; it is agnostic about the noise structure and implies nothing about higher moments. All identification, partial-identification, and efficient-estimation results in this paper use only this mean-level statement. We occasionally write the equivalent generative form $Y_{t_1}(a, t_0) = \boldsymbol{\theta}_a(X)^\top \boldsymbol{\Lambda}(t_0, t_1) + \epsilon_{t_1}(a)$ with $\mathbb{E}[\epsilon_{t_1}(a) \mid X, S] = 0$ when a unit-level expression clarifies an argument, but nothing beyond the mean-zero conditional noise is used.

This is a rank-$R$ factor model: outcomes are an inner product of $R$ (effect, temporal-effect) pairs. The shared temporal-effect vector $\boldsymbol{\Lambda}$ is the ingredient that couples treatment arms across time. Separability is a priori plausible when a small number of external drivers multiplicatively scale heterogeneous unit-level outcomes---e.g., seasonality in digital A/B testing \citep{bai2009panel, athey2021matrix}; disease prevalence in clinical effectiveness; macroeconomic cycles in labor-market policy \citep{abadie2010synthetic}---and less credible when treatment-specific temporal dynamics dominate (quantified by Proposition~\ref{prop:sensitivity}). Appendix~\ref{app:impossibility} shows this coupling is essential: without cross-treatment structural restriction the TATE is not point-identified, and local-smoothness substitutes yield identified-set widths that grow linearly with the temporal gap to the nearest observed trial of the target treatment. We treat $R$ as fixed and known; rank selection and the implied treatment-contrast identity are in Appendix~\ref{app:assumption_discussion}. The framework operates in discrete time and accommodates arbitrary structural breaks in $\boldsymbol{\Lambda}$, provided anchor trials bridge them.

\textbf{Target estimand.} The \emph{transported average treatment effect} (TATE) for trial $k^\star$ is $\tau_{k^\star}(\delta_0, \delta_1) \equiv \tau(a_{k^\star},b_{k^\star},t_{0k^\star},t_{1k^\star},\delta_0, \delta_1)$, defined as
\begin{align}
 \mathbb{E}\left[
\begin{bmatrix}
    Y_{t_{1k^\star} + \delta_1}(a_{k^\star}, t_{0k^\star} + \delta_0) - Y_{t_{1k^\star} + \delta_1}(b_{k^\star}, t_{0k^\star} + \delta_0) 
\end{bmatrix} \mid S = k^\star \right],
\end{align}
representing the ATE for the trial $k^\star$ population had treatment occurred at time $t_{0k^\star} + \delta_0$ with outcomes measured at $t_{1k^\star} + \delta_1$.

\section{Identification}
\label{sec:identification}

We identify the TATE in two steps. First, we decompose it into an observed ATE and a temporal ratio. Second, we show that two different auxiliary-data configurations identify the temporal ratio under different structural assumptions. All proofs are in Appendix~\ref{app:proofs}.

\begin{theorem}[TATE Decomposition]
\label{thm:decomposition}
Under random assignment, SUTVA, and Assumption~\ref{ass:separable}:
\begin{align}
\label{eq:tate_decomp}
\tau_{k^\star}(\delta_0, \delta_1) = \bar{\tilde{\boldsymbol{\theta}}}_{a_{k^\star}, b_{k^\star}, k^\star}^\top \boldsymbol{\Lambda}(t_{0k^\star} + \delta_0, t_{1k^\star} + \delta_1),
\end{align}
where $\bar{\tilde{\boldsymbol{\theta}}}_{a_{k^\star}, b_{k^\star}, k^\star} := \mathbb{E}[\tilde{\boldsymbol{\theta}}_{a_{k^\star}, b_{k^\star}}(X) \mid S = k^\star] \in \mathbb{R}^R$ is the mean effect contrast for trial $k^\star$, and $\tau_{k^\star}(0,0) = \bar{\tilde{\boldsymbol{\theta}}}_{a_{k^\star}, b_{k^\star}, k^\star}^\top \boldsymbol{\Lambda}(t_{0k^\star}, t_{1k^\star})$ is the observed source-time ATE.
\end{theorem}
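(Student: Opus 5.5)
The plan is to write the TATE as a difference of two conditional expectations given $S=k^\star$, push each term inside an inner expectation over $X$, and then apply Assumption~\ref{ass:separable}. Throughout, $\tilde{\boldsymbol{\theta}}_{a,b}(X) := \boldsymbol{\theta}_a(X) - \boldsymbol{\theta}_b(X)$ denotes the effect contrast.

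\textbf{Step 1 (linearity and iterated expectations).} By linearity and the tower property,
\[
\tau_{k^\star}(\delta_0,\delta_1) = \mathbb{E}\big[\, \mathbb{E}[Y_{t'_1}(a_{k^\star}, t'_0) \mid X, S=k^\star] - \mathbb{E}[Y_{t'_1}(b_{k^\star}, t'_0) \mid X, S=k^\star] \,\big|\, S=k^\star\big],
\]
where $t'_0 = t_{0k^\star}+\delta_0$ and $t'_1 = t_{1k^\star}+\delta_1$.

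\textbf{Step 2 (drop the conditioning on $S$).} Assumption~\ref{ass:separable} is stated conditional on $X$ only. I therefore need $\mathbb{E}[Y_{t_1}(a,t_0)\mid X, S=k^\star] = \mathbb{E}[Y_{t_1}(a,t_0)\mid X]$. I expect this step to be the main obstacle, because the theorem lists only random assignment, SUTVA and separability as hypotheses, and none of these literally states it. There are two ways to justify it:
\begin{itemize}
\item Invoke trial-timing exogeneity (iii), read as saying that trial membership carries no information about potential outcomes beyond $X$.
\item More cleanly, use the generative form $Y_{t_1}(a,t_0) = \boldsymbol{\theta}_a(X)^\top\boldsymbol{\Lambda}(t_0,t_1)+\epsilon_{t_1}(a)$ with $\mathbb{E}[\epsilon\mid X,S]=0$, which the paper adopts. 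This gives the $S$-conditional statement immediately.
\end{itemize}
I would state explicitly that separability is understood to hold conditional on $(X,S)$. Given that, the inner difference equals $\big(\boldsymbol{\theta}_{a_{k^\star}}(X) - \boldsymbol{\theta}_{b_{k^\star}}(X)\big)^\top \boldsymbol{\Lambda}(t'_0,t'_1) = \tilde{\boldsymbol{\theta}}_{a_{k^\star},b_{k^\star}}(X)^\top \boldsymbol{\Lambda}(t'_0,t'_1)$.

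\textbf{Step 3 (pull out $\boldsymbol{\Lambda}$).} The vector $\boldsymbol{\Lambda}(t'_0,t'_1)$ is deterministic and common across units. Linearity of expectation, applied componentwise in $\mathbb{R}^R$, therefore gives $\tau_{k^\star}(\delta_0,\delta_1) = \bar{\tilde{\boldsymbol{\theta}}}_{a_{k^\star},b_{k^\star},k^\star}^\top \boldsymbol{\Lambda}(t'_0,t'_1)$. Integrability of $\boldsymbol{\theta}_a(X)$ is implicitly assumed so that these means exist.

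\textbf{Step 4 (source-time case and observability).} Setting $\delta_0=\delta_1=0$ gives the second display. To call this quantity the observed ATE, I use SUTVA, so that the observed outcome $Y_{t_{1k^\star}}$ equals $Y_{t_{1k^\star}}(A,t_{0k^\star})$. Random assignment within trial $k^\star$ then gives
\[
\mathbb{E}[Y_{t_1}(a,t_0)\mid X,S=k^\star] = \mathbb{E}[Y\mid A=a,X,S=k^\star].
\]
Averaging over $X\mid S=k^\star$ shows that $\tau_{k^\star}(0,0)$ is identified as a contrast of the trial's arm-conditional outcome regressions, i.e.\ a standard observable ATE.
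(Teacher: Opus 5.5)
Your proposal is correct and follows essentially the same route as the paper's proof. Both iterate expectations within trial $k^\star$, apply Assumption~\ref{ass:separable} to the inner contrast to obtain $\tilde{\boldsymbol{\theta}}_{a_{k^\star},b_{k^\star}}(X)^\top \boldsymbol{\Lambda}(t'_0,t'_1)$, pull out the deterministic temporal vector, and use random assignment plus SUTVA to identify $\tau_{k^\star}(0,0)$ from arm-conditional regressions. Your Step~2 is more careful than the paper's proof, which moves from the $X$-conditional separability statement to an average given $S=k^\star$ without comment; the generative form with $\mathbb{E}[\epsilon_{t_1}(a)\mid X,S]=0$ (or $Y(a)\perp S\mid X$ from Assumption~\ref{ass:replicated}(i)) is the right way to close that step, and it does so more directly than the trial-timing-exogeneity reading.
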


Under separability, the observed and transported treatment effects share the same treatment- and covariate-specific effect contrast and differ only through the temporal effect. Identification therefore reduces to recovering $\boldsymbol{\Lambda}$ at the target time from auxiliary trials. Without a cross-treatment coupling of this kind, the TATE is not point-identified: Appendix~\ref{app:impossibility} shows the identified set is all of $\mathbb{R}$ in the nonparametric model, and has width proportional to the temporal gap to the nearest observed trial of the target treatment under any local-smoothness restriction---shared effect structure, not smoothness in time, is what delivers identification. In practice, $R = 1$ is the dominant case: outcomes have a single temporal effect, and~\eqref{eq:tate_decomp} reduces to a scalar product of observed ATE and a temporal ratio. Appendix~\ref{app:R1_case} collects the $R = 1$ specializations of every result that follows; the main text is stated for general $R$ throughout. We now give two ways to identify the temporal effect.

\textbf{Strategy 1: Replicated trials.} The first strategy uses trials that compare the \emph{same non-target treatment pair} at both source and target times as temporal anchors. To estimate the effect of $A$ vs $B$ at July when no such trial exists in July, we use a $C$-vs-$D$ trial conducted at both January and July: the ratio of its ATEs identifies how effects scale temporally, and multiplying the target trial's January $A$-vs-$B$ ATE by this ratio gives the transported effect. Formally, suppose $R$ anchor pairs $(a_r^*, b_r^*)$ are observed at both source and target times, each drawn by random sampling with nonzero mean treatment contrast, and their effect contrasts $\boldsymbol{\Theta} := [\bar{\tilde{\boldsymbol{\theta}}}_{a_1^*, b_1^*}, \ldots, \bar{\tilde{\boldsymbol{\theta}}}_{a_R^*, b_R^*}]^\top$ are jointly full rank (Assumption~\ref{ass:replicated} in Appendix~\ref{app:assumption_discussion}).

\begin{corollary}[Strategy 1 identification]
\label{cor:replicated}
Under Assumption~\ref{ass:separable} and Assumption~\ref{ass:replicated}, the temporal-effect vector at the target time is identified by $\boldsymbol{\Lambda}^{\mathrm{target}} = \boldsymbol{\Theta}^{-1} \boldsymbol{\tau}^{\mathrm{target}}$, where $\boldsymbol{\tau}^{\mathrm{target}}$ stacks the $R$ anchor-pair ATEs at the target time.
\end{corollary}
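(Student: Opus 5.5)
The plan is to apply Theorem~\ref{thm:decomposition} to each anchor trial rather than to the target trial $k^\star$, and then solve the resulting $R \times R$ linear system. For each $r = 1,\ldots,R$, let $k_r$ denote the anchor trial comparing $a_r^*$ against $b_r^*$ at the target timing $(t_0^{\mathrm{target}}, t_1^{\mathrm{target}})$.

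\emph{Step 1: express each anchor ATE through $\boldsymbol{\Lambda}^{\mathrm{target}}$.} Applying Theorem~\ref{thm:decomposition} with $\delta_0 = \delta_1 = 0$ to trial $k_r$ gives
\[
\tau_r^{\mathrm{target}} := \mathbb{E}[Y \mid A = a_r^*, S = k_r] - \mathbb{E}[Y \mid A = b_r^*, S = k_r] = \bar{\tilde{\boldsymbol{\theta}}}_{a_r^*, b_r^*, k_r}^\top \boldsymbol{\Lambda}^{\mathrm{target}}.
\]
Each $\tau_r^{\mathrm{target}}$ is identified from observed data. The reason is that random assignment within $k_r$ makes the arm-specific conditional means equal the conditional means of the potential outcomes, SUTVA lets us replace potential outcomes by observed ones, and averaging over $X \mid S = k_r$ yields the displayed contrast.

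\emph{Step 2: remove the dependence on the trial index.} The matrix $\boldsymbol{\Theta}$ in Assumption~\ref{ass:replicated} is built from population contrasts $\bar{\tilde{\boldsymbol{\theta}}}_{a_r^*, b_r^*}$ that carry no trial index. We therefore need $\mathbb{E}[\tilde{\boldsymbol{\theta}}_{a_r^*, b_r^*}(X) \mid S = k] = \mathbb{E}[\tilde{\boldsymbol{\theta}}_{a_r^*, b_r^*}(X)]$ for every $k$. This follows from compositional stability ($X \perp S$), or equivalently from the ``drawn by random sampling'' clause of Assumption~\ref{ass:replicated}.

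With this equality, the same vector $\bar{\tilde{\boldsymbol{\theta}}}_{a_r^*, b_r^*}$ appears in the source-time and target-time replicates of pair $r$. This is exactly what makes the replicated pair informative: the source-time replicate, together with the source-time decomposition, pins down the effect contrasts.

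Stacking the $R$ equations from Step 1 gives $\boldsymbol{\tau}^{\mathrm{target}} = \boldsymbol{\Theta}\, \boldsymbol{\Lambda}^{\mathrm{target}}$.

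\emph{Step 3: invert.} Since $\boldsymbol{\Theta}$ is full rank by Assumption~\ref{ass:replicated}, it is invertible, so $\boldsymbol{\Lambda}^{\mathrm{target}} = \boldsymbol{\Theta}^{-1}\boldsymbol{\tau}^{\mathrm{target}}$.

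\emph{Step 4: conclude identification.} Substituting into \eqref{eq:tate_decomp} identifies the TATE. For $R = 1$ this reduces to the familiar ratio $\tau_{k^\star}(0,0)\,\tau^{\mathrm{target}}/\tau^{\mathrm{source}}$.

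The main obstacle is that $\boldsymbol{\Theta}$ itself must be identified for the formula to be operational, and $\boldsymbol{\Lambda}$ is only defined up to an invertible linear reparametrization ($\boldsymbol{\theta} \mapsto M^\top \boldsymbol{\theta}$, $\boldsymbol{\Lambda} \mapsto M^{-1}\boldsymbol{\Lambda}$).

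I would handle this by fixing a normalization. The plan is to set $\boldsymbol{\Lambda}^{\mathrm{source}}$ so that the source-time anchor ATEs satisfy $\boldsymbol{\tau}^{\mathrm{source}} = \boldsymbol{\Theta}\boldsymbol{\Lambda}^{\mathrm{source}}$. For example, with $R = 1$ one takes $\Lambda^{\mathrm{source}} = 1$, so that $\Theta = \tau^{\mathrm{source}}$; in general one expresses $\boldsymbol{\Lambda}^{\mathrm{target}}$ relative to the source timings.

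It then remains to verify that the final TATE expression is invariant to the choice of $M$. This holds because the TATE depends on the model only through inner products $\boldsymbol{\theta}^\top\boldsymbol{\Lambda}$, and these are unchanged under the reparametrization.

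Nonzero mean contrasts together with the full-rank condition guarantee that the normalization is well defined and the inversion is valid.
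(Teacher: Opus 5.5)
Your Steps~1--3 are essentially the paper's proof: random sampling gives $X \perp S$, so each anchor contrast loses its trial index; the Theorem~\ref{thm:decomposition} argument applied to each anchor trial at the target time gives $\boldsymbol{\tau}^{\mathrm{target}} = \boldsymbol{\Theta}\boldsymbol{\Lambda}^{\mathrm{target}}$; and full rank of $\boldsymbol{\Theta}$ lets you invert. One caution on the material beyond the corollary: for $R \geq 2$, fixing $\boldsymbol{\Lambda}^{\mathrm{source}}$ at a single source time removes only $R$ of the $R^2$ reparametrization degrees of freedom and supplies only one scalar equation $\tau_{k^\star}(0,0) = \bar{\tilde{\boldsymbol{\theta}}}_{k^\star}^\top\boldsymbol{\Lambda}^{\mathrm{source}}$ for the $R$-vector $\bar{\tilde{\boldsymbol{\theta}}}_{k^\star}$, so in general this pins down neither $\boldsymbol{\Theta}$ nor the TATE of your Step~4 (your normalization and Step~4 are fine for $R = 1$), which is why the paper's rotation-invariance remark uses $R$ source times and only the observable product $\mathbf{T}_{\mathrm{anc}}^{\mathrm{source}} = \boldsymbol{\Theta}\mathbf{M}^\top$.
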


The anchor pair is \emph{different} from the target pair in the non-trivial case: if a trial testing the target pair already exists at the target time, no transportation is needed. This cross-treatment temporal borrowing is the step that distinguishes our framework from methods that estimate effects only where treatments are administered. The cost is requiring exact replication of some non-target treatment pair, which is feasible in repeat A/B tests and platform trials with standard arms but demanding elsewhere.

\textbf{Strategy 2: Common arm.} An alternative requires only that $R$ treatment arms $c_1^*, \ldots, c_R^*$ appear at both source and target measurement times, with nonzero mean outcome effects jointly spanning $\mathbb{R}^R$ (Assumption~\ref{ass:common}). This is often easier to satisfy---control arms typically appear across many trials---but requires an additional restriction:

\begin{assumption}[Measurement-Time Structure]
\label{ass:measurement}
The temporal modifier depends only on measurement time: $\boldsymbol{\Lambda}(t_0, t_1) = \boldsymbol{\Lambda}(t_1)$ for all $t_0 \leq t_1$.
\end{assumption}

Assumption~\ref{ass:measurement} is appropriate when temporal variation reflects conditions at outcome measurement (seasonal demand, economic conditions, user engagement cycles) rather than elapsed time since treatment administration; it rules out effect decay.

\begin{corollary}[Strategy 2 identification]
\label{cor:common_arm}
Under Assumptions~\ref{ass:separable}, \ref{ass:measurement}, and \ref{ass:common}, the temporal-effect vector at the target time is identified by $\boldsymbol{\Lambda}(t_{1k^\star} + \delta_1) = \boldsymbol{\Theta}_c^{-1} \boldsymbol{\mu}^{\mathrm{target}}$, where $\boldsymbol{\Theta}_c := [\bar{\boldsymbol{\theta}}_{c_1^*}, \ldots, \bar{\boldsymbol{\theta}}_{c_R^*}]^\top$ and $\boldsymbol{\mu}^{\mathrm{target}}$ stacks the $R$ anchor-arm conditional means at the target measurement time.
\end{corollary}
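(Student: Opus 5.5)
The argument runs parallel to Corollary~\ref{cor:replicated}, except that it works with arm-level conditional means instead of treatment contrasts. Assumption~\ref{ass:measurement} is what makes arm-level means usable: the intervention time of the anchor trial at the target period need not match $t_{0k^\star}+\delta_0$.

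\textbf{Step 1: express each anchor-arm mean in factor form.} Fix an anchor arm $c_r^*$ and a trial $k_r$ in which $c_r^*$ is administered and whose outcome is measured at $t_{1k_r} = t_{1k^\star}+\delta_1$. By random assignment and SUTVA, $\mathbb{E}[Y \mid A=c_r^*, S=k_r, X] = \mathbb{E}[Y_{t_{1k_r}}(c_r^*, t_{0k_r}) \mid S=k_r, X]$. Trial-timing exogeneity lets us drop the conditioning on $S$ inside the potential-outcome mean. Assumption~\ref{ass:separable} then gives $\boldsymbol{\theta}_{c_r^*}(X)^\top \boldsymbol{\Lambda}(t_{0k_r}, t_{1k_r})$. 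Assumption~\ref{ass:measurement} collapses this to $\boldsymbol{\theta}_{c_r^*}(X)^\top \boldsymbol{\Lambda}(t_{1k^\star}+\delta_1)$, whatever the anchor trial's intervention time was.

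\textbf{Step 2: average over covariates.} Average over $X$; compositional stability (iv) lets us use the same distribution as trial $k^\star$, or the density-ratio reweighting otherwise. Since $\boldsymbol{\Lambda}$ does not depend on $X$, linearity gives $\mu_r^{\mathrm{target}} = \bar{\boldsymbol{\theta}}_{c_r^*}^\top \boldsymbol{\Lambda}(t_{1k^\star}+\delta_1)$. Stacking over $r=1,\dots,R$ yields the linear system $\boldsymbol{\mu}^{\mathrm{target}} = \boldsymbol{\Theta}_c \boldsymbol{\Lambda}(t_{1k^\star}+\delta_1)$.

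\textbf{Step 3: invert the system.} By Assumption~\ref{ass:common}, $\boldsymbol{\Theta}_c$ is full rank, so $\boldsymbol{\Lambda}(t_{1k^\star}+\delta_1) = \boldsymbol{\Theta}_c^{-1}\boldsymbol{\mu}^{\mathrm{target}}$. The right-hand side involves only observable conditional means once $\boldsymbol{\Theta}_c$ is known.

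\textbf{Main obstacle: identifying $\boldsymbol{\Theta}_c$ itself.} Under the factor model, $\boldsymbol{\Theta}_c$ and $\boldsymbol{\Lambda}$ are identified only up to an invertible $R\times R$ reparametrization $\boldsymbol{\theta}\mapsto G^\top\boldsymbol{\theta}$, $\boldsymbol{\Lambda}\mapsto G^{-1}\boldsymbol{\Lambda}$. I will fix this by a normalization at the source measurement time.

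\emph{Normalization.} Take the source-time arm means $\boldsymbol{\mu}^{\mathrm{source}} = \boldsymbol{\Theta}_c\boldsymbol{\Lambda}(t_{1k^\star})$. For $R=1$, set $\Lambda(t_{1k^\star})=1$; then $\bar\theta_{c^*} = \mu^{\mathrm{source}}$, which is nonzero by assumption, and the temporal ratio is $\mu^{\mathrm{target}}/\mu^{\mathrm{source}}$. For general $R$, normalize using the source-time anchor observations of Assumption~\ref{ass:common}; for example, take the source-time temporal vectors to be canonical basis directions, so that $\boldsymbol{\Theta}_c$ is read off from the observed source-time means.

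\emph{Invariance check.} The identity is invariant under the reparametrization, since $(\boldsymbol{\Theta}_c G)^{-1}\boldsymbol{\mu} = G^{-1}\boldsymbol{\Theta}_c^{-1}\boldsymbol{\mu}$. Moreover, the TATE from Theorem~\ref{thm:decomposition} depends only on the invariant inner product $\bar{\tilde{\boldsymbol{\theta}}}^\top\boldsymbol{\Lambda}$. Hence the normalization choice does not affect the identified estimand. This invariance argument is the part I expect to require the most care.
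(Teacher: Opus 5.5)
Your Steps 1--3 follow the paper's proof. Separability together with Assumption~\ref{ass:measurement} reduces each anchor-arm mean to $\bar{\boldsymbol{\theta}}_{c_r^*}^\top \boldsymbol{\Lambda}(t_{1k^\star}+\delta_1)$ via random assignment and compositional stability; the $R$ equations stack into $\boldsymbol{\mu}^{\mathrm{target}} = \boldsymbol{\Theta}_c \boldsymbol{\Lambda}(t_{1k^\star}+\delta_1)$, and full rank of $\boldsymbol{\Theta}_c$ gives the inversion.

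Your normalization discussion goes beyond the paper. In this corollary the paper simply treats $\boldsymbol{\Theta}_c$ as given, and it deals with the rotation issue only later (Remark~\ref{rem:theta_ident} and the Strategy~2 estimator), through the observable matrix $\mathbf{T}_c^{\mathrm{source}}$ of anchor-arm means at $R$ source times. Your general-$R$ canonical-basis normalization has the same requirement: $R$ source measurement times with linearly independent temporal-effect vectors. Assumption~\ref{ass:common}(iii) provides only a single source time, so it does not supply this.
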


Strategy~2 identifies the temporal ratio from conditional \emph{means} rather than treatment \emph{contrasts}. Because contrasts typically have much higher relative noise than means, this yields the substantial efficiency gain quantified in Section~\ref{sec:estimation}. The anchor arms can be any treatments, not just control.

\textbf{Why separability, and why not something weaker?} Assumption~\ref{ass:separable} plays a sharp structural role. Without any cross-treatment coupling, the TATE is not point-identified: two DGPs can match every observed-data distribution yet disagree on the TATE by any real number (Appendix~\ref{app:impossibility}). Replacing separability with local-smoothness restrictions on $\mu_a$ in time---Lipschitz-$L$, bounded variation, Sobolev---does not help: the identified-set width is at least $2L\Delta_{k^\star}$, where $\Delta_{k^\star}$ is the temporal gap to the nearest observed trial of the target treatment. What resolves the non-identification is a \emph{shared} temporal-effect vector linking arms, and Assumption~\ref{ass:separable} is the minimal such restriction---$R$ independent anchor pairs (Strategy~1) or arms (Strategy~2) then suffice, collapsing to a single anchor when $R = 1$.

\subsection{Partial Identification under Separability Violations}
\label{sec:sensitivity}

Assumption~\ref{ass:separable} is a structural restriction that cannot be falsified from a single pair of time points. A practitioner therefore needs to know not only what the TATE is if separability holds but also how much the conclusion would move if it fails by some amount. We derive a breakdown-frontier characterization of the identified set as a function of the magnitude of separability violations, giving a Rosenbaum-style sensitivity summary tailored to temporal transportation. Consider the contaminated model
\begin{equation}
\label{eq:contaminated}
Y_{t_1}(a, t_0) = \boldsymbol{\theta}_a(X)^\top \boldsymbol{\Lambda}(t_0, t_1) + \Gamma_a(X, t_0, t_1) + \epsilon_{t_1},
\end{equation}
with outcome-level sensitivity parameter $\gamma := \sup_{a,x,t_0,t_1} |\Gamma_a(x, t_0, t_1)|$ and contrast-level sensitivity $\tilde{\gamma} := \sup_{a,b,t_0,t_1} |\mathbb{E}[\Gamma_a(X, t_0, t_1) - \Gamma_b(X, t_0, t_1)]| \leq 2\gamma$; separability corresponds to $\gamma = \tilde{\gamma} = 0$.

\begin{proposition}[Breakdown frontier, first-order]
\label{prop:sensitivity}
Under the contaminated model~\eqref{eq:contaminated}, the conditions of Corollary~\ref{cor:replicated}, and $\sigma_{\min}(\boldsymbol{\Theta}) > 0$, to first order in $\gamma$,
\begin{equation}
\label{eq:bias_bounds}
|\mathrm{Bias}_1| \;\leq\; 2\gamma \cdot \bigl(1 + \|\boldsymbol{\Lambda}^{\mathrm{target}}\|\bigr) \cdot \kappa(\boldsymbol{\Theta}) \cdot \left(1 + \frac{\|\bar{\tilde{\boldsymbol{\theta}}}_{k^\star}\|}{\sigma_{\min}(\boldsymbol{\Theta})}\right) + o(\gamma),
\end{equation}
where $\kappa(\boldsymbol{\Theta}) = \sigma_{\max}(\boldsymbol{\Theta})/\sigma_{\min}(\boldsymbol{\Theta})$ is the condition number of the anchor-pair effect-contrast matrix. The Strategy~2 bound takes the same form with $\boldsymbol{\Theta}$ replaced by $\boldsymbol{\Theta}_c$.
\end{proposition}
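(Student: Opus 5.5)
The approach is a first-order perturbation analysis of the plug-in identification formula. Write the Strategy~1 estimand as the composite map $\hat\tau = \tau^{\mathrm{obs}}_{k^\star}\text{-part} \times$ temporal correction. Concretely, under the contaminated model~\eqref{eq:contaminated} the population quantities that the identification formula actually uses are perturbed. Each anchor-pair ATE at source and target time picks up an additive error equal to a contrast of $\Gamma$'s, bounded by $\tilde\gamma\le 2\gamma$. The target-pair source-time ATE is perturbed in the same way. The formula of Theorem~\ref{thm:decomposition} combined with Corollary~\ref{cor:replicated} requires recovering $\bar{\tilde{\boldsymbol{\theta}}}_{k^\star}$ and $\boldsymbol{\Lambda}^{\mathrm{target}}$. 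I would write both through linear systems whose coefficient matrix is $\boldsymbol{\Theta}$ (itself only identified up to the source-time anchor contrasts, which are also contaminated), and take $\mathrm{Bias}_1$ to be the difference between the map evaluated at the contaminated inputs and at the clean ones.

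The steps, in order, are as follows.
\begin{enumerate}
\item Collect the perturbations. Let $\boldsymbol{e}^{\mathrm{src}},\boldsymbol{e}^{\mathrm{target}}\in\mathbb{R}^R$ denote the errors in the stacked anchor ATEs, and $e_{k^\star}$ the error in the target-pair source ATE. Each entry is bounded by $2\gamma$, so their norms are $O(\gamma)$.
\item Linearize $\boldsymbol{\Lambda}^{\mathrm{target}}=\boldsymbol{\Theta}^{-1}\boldsymbol{\tau}^{\mathrm{target}}$. Use the standard first-order expansion $(\boldsymbol{\Theta}+\boldsymbol{E})^{-1}=\boldsymbol{\Theta}^{-1}-\boldsymbol{\Theta}^{-1}\boldsymbol{E}\boldsymbol{\Theta}^{-1}+O(\|\boldsymbol{E}\|^2)$. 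This gives
\[
\delta\boldsymbol{\Lambda}^{\mathrm{target}}=\boldsymbol{\Theta}^{-1}\boldsymbol{e}^{\mathrm{target}}-\boldsymbol{\Theta}^{-1}\boldsymbol{E}\boldsymbol{\Lambda}^{\mathrm{target}}+o(\gamma),
\]
where $\boldsymbol{E}$ is the induced error in $\boldsymbol{\Theta}$, which enters through the source-time anchor ATEs and the normalization $\boldsymbol{\Lambda}^{\mathrm{src}}$. It follows that $\|\delta\boldsymbol{\Lambda}^{\mathrm{target}}\|\le 2\gamma(1+\|\boldsymbol{\Lambda}^{\mathrm{target}}\|)/\sigma_{\min}(\boldsymbol{\Theta})$ up to $o(\gamma)$.
\item Linearize the product. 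Write $\mathrm{Bias}_1=\delta(\bar{\tilde{\boldsymbol{\theta}}}_{k^\star})^\top\boldsymbol{\Lambda}^{\mathrm{target}}+\bar{\tilde{\boldsymbol{\theta}}}_{k^\star}^\top\delta\boldsymbol{\Lambda}^{\mathrm{target}}+o(\gamma)$.
\begin{itemize}
\item The second term is bounded by Cauchy--Schwarz using the bound from the previous step. This produces the factor $\|\bar{\tilde{\boldsymbol{\theta}}}_{k^\star}\|/\sigma_{\min}(\boldsymbol{\Theta})$.
\item The first term arises because $\bar{\tilde{\boldsymbol{\theta}}}_{k^\star}$ is recovered from contaminated source quantities through the same anchor system. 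Its perturbation therefore scales like $\|\boldsymbol{\Theta}\|\,\|\boldsymbol{\Theta}^{-1}\|\cdot 2\gamma(1+\|\boldsymbol{\Lambda}\|)$, which produces $\kappa(\boldsymbol{\Theta})$.
\end{itemize}
\item Combine. Collecting terms and using $\kappa\ge 1$ to absorb the cruder terms gives the factored form~\eqref{eq:bias_bounds}.
\item Strategy~2. Repeat the argument verbatim with anchor-arm means in place of contrasts, using Corollary~\ref{cor:common_arm}, with $\boldsymbol{\Theta}_c$ in place of $\boldsymbol{\Theta}$. Each mean error is bounded by $\gamma\le2\gamma$, so the same bound holds.
\end{enumerate}

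The main obstacle is the bookkeeping in step~3. One must pin down precisely which quantities are treated as identified inputs (observed ATEs) and which are derived (the effect contrasts and $\boldsymbol{\Theta}$), so that the perturbation of $\boldsymbol{\Theta}$ is genuinely $O(\gamma)$ and multiplies into the product $\kappa(\boldsymbol{\Theta})(1+\|\bar{\tilde{\boldsymbol{\theta}}}_{k^\star}\|/\sigma_{\min})$ rather than into a sum. I would first try to express everything as a single smooth map of the observed-ATE vector and bound its Jacobian operator norm. I would use the inequalities $\|\boldsymbol{\Theta}^{-1}\|=1/\sigma_{\min}$ and $\|\boldsymbol{\Theta}\|\|\boldsymbol{\Theta}^{-1}\|=\kappa$, and accept a loose constant (the leading $2$) to reach the stated factorization.
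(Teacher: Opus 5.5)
Your route (linearize the plug-in functional, bound each contamination contrast by $2\gamma$, combine by the triangle inequality) is the paper's (Appendix~\ref{app:sensitivity}). However, step~3, which you rightly flag as the crux, has two genuine errors.

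\textbf{First: the definition of the bias.} $\mathrm{Bias}_1$ is not ``the map at contaminated inputs minus the map at clean inputs.'' The TATE is defined through potential outcomes, so under~\eqref{eq:contaminated} the estimand is itself $\bar{\tilde{\boldsymbol\theta}}_{k^\star}^\top\boldsymbol\Lambda^{\mathrm{target}}+\bar{\tilde\Gamma}^{\mathrm{target}}_{k^\star}$. The bias therefore contains $-\bar{\tilde\Gamma}^{\mathrm{target}}_{k^\star}$, which is not a perturbation of any input to the identification formula. The paper keeps this term explicitly; at $R=1$ it is the constant $2\gamma\cdot1\cdot1$ piece of the bound.

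\textbf{Second: how the target contrast is recovered.} It is not recovered ``through the same anchor system.'' The objects $\boldsymbol\Theta$, $\bar{\tilde{\boldsymbol\theta}}_{k^\star}$, $\boldsymbol\Lambda^{\mathrm{target}}$ are only defined up to $\boldsymbol\theta_a\mapsto\mathbf P^{-\top}\boldsymbol\theta_a$, $\boldsymbol\Lambda\mapsto\mathbf P\boldsymbol\Lambda$ (Remark~\ref{rem:theta_ident}). You must therefore fix the gauge $\mathbf M=\mathbf I$, where $\mathbf M$ has rows $\boldsymbol\Lambda(s)^\top$ at the $R$ source times. This choice is what turns $1+\|\boldsymbol\Lambda^{\mathrm{target}}\|$ into $1+|\rho|$ at $R=1$.

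In that gauge, $\boldsymbol\Theta=\mathbf T^{\mathrm{source}}_{\mathrm{anc}}$ and $\bar{\tilde{\boldsymbol\theta}}_{k^\star}=\boldsymbol\tau^{\mathrm{source}}_{k^\star}$. So the target contrast is perturbed only by its own source-time contamination $\boldsymbol e_{k^\star}$, and nothing is perturbed through the normalization. The anchor source-time error $\boldsymbol E$ enters the invariant functional $(\boldsymbol\tau^{\mathrm{source}}_{k^\star})^\top(\mathbf T^{\mathrm{source}}_{\mathrm{anc}})^{-1}\boldsymbol\tau^{\mathrm{target}}$ exactly once, and your step~2 already captures it. Routing $\bar{\tilde{\boldsymbol\theta}}_{k^\star}$ through $\boldsymbol\Theta$ as well has two possible readings. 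Either it double-counts $\boldsymbol E$, or it silently switches to the gauge in which $\boldsymbol\Theta$ is fixed, where step~2 no longer holds. Either way, the $\|\boldsymbol\Theta\|\|\boldsymbol\Theta^{-1}\|$ scaling is asserted rather than derived.

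\textbf{What the correct expansion gives.} Expanding the observable functional directly (your own fallback) gives
\[
\mathrm{Bias}_1=\boldsymbol e_{k^\star}^\top\boldsymbol\Lambda^{\mathrm{target}}+\bar{\tilde{\boldsymbol\theta}}_{k^\star}^\top\boldsymbol\Theta^{-1}\bigl(\boldsymbol e^{\mathrm{target}}-\boldsymbol E\boldsymbol\Lambda^{\mathrm{target}}\bigr)-\bar{\tilde\Gamma}^{\mathrm{target}}_{k^\star}+O(\gamma^2),
\]
which reduces exactly to~\eqref{eq:bias_exact} at $R=1$. Bound this term by term, taking each contamination block to have norm at most $2\gamma$. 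The result is $2\gamma(1+\|\boldsymbol\Lambda^{\mathrm{target}}\|)(1+\|\bar{\tilde{\boldsymbol\theta}}_{k^\star}\|/\sigma_{\min}(\boldsymbol\Theta))$ directly. So $\kappa(\boldsymbol\Theta)\ge1$ is pure slack, and nothing needs to ``produce'' it. Your figure $\kappa\cdot2\gamma(1+\|\boldsymbol\Lambda^{\mathrm{target}}\|)$ happens to dominate the two correct terms it replaces, but for the wrong reason. Your step-2 term $-\boldsymbol\Theta^{-1}\boldsymbol E\boldsymbol\Lambda^{\mathrm{target}}$ is correct. It is in fact missing from the paper's three-term general-$R$ display, although it is present in the $R=1$ formula.

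\textbf{The norm constants.} ``Entries bounded by $2\gamma$, so norms are $O(\gamma)$'' does not deliver the stated constant. In Euclidean and spectral norms, $\|\boldsymbol e_{k^\star}\|,\|\boldsymbol e^{\mathrm{target}}\|\le2\gamma\sqrt R$ and $\|\boldsymbol E\|\le2\gamma R$, and these are attained. Take $\boldsymbol\Theta=\mathbf I$, take $\bar{\tilde{\boldsymbol\theta}}_{k^\star}$ and $\boldsymbol\Lambda^{\mathrm{target}}$ parallel to $\mathbf 1$, and choose sign-aligned contaminations. The first-order bias is then $2\gamma(1+\sqrt R\|\boldsymbol\Lambda^{\mathrm{target}}\|)(1+\sqrt R\|\bar{\tilde{\boldsymbol\theta}}_{k^\star}\|)$, so $\kappa\ge1$ cannot absorb these factors. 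For $R=1$ this is moot. For $R\ge2$ you need either an explicitly stated choice of norms or an $R$-dependent constant; the paper's ``submultiplicativity'' step passes over the same point.
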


The bound is informative in the well-conditioned regime ($\sigma_{\min}(\boldsymbol{\Theta})$ bounded away from zero) and is vacuous as $\sigma_{\min}(\boldsymbol{\Theta}) \to 0$; in that limit the population estimand itself diverges (Theorem~\ref{thm:decomposition} requires $\boldsymbol{\Theta}^{-1}$), so the blow-up is an honest signal rather than a slack in the bound. The reciprocal-of-$\sigma_{\min}$ factor is therefore a diagnostic for anchor near-collinearity: when it dominates the bound, the practitioner should choose anchor pairs with more linearly independent effect contrasts (Appendix~\ref{app:strategy_selection}) rather than rely on the reported $\gamma^\star$.

\begin{corollary}[$R = 1$ breakdown frontier]
\label{cor:sensitivity_r1}
When $R = 1$, the bound~\eqref{eq:bias_bounds} collapses to
\begin{equation}
\label{eq:bias_bound_r1}
|\mathrm{Bias}_1| \;\leq\; B(\gamma) + o(\gamma), \qquad B(\gamma) := 2\gamma \cdot (1 + |\rho|)\Bigl(1 + \tfrac{|\bar{\tilde{\theta}}|}{|\bar{\tilde{\theta}}_{\mathrm{anc}}|}\Bigr),
\end{equation}
and a tighter contrast-level bound $\tilde{B}(\tilde{\gamma}) = \tilde{\gamma} \cdot (1 + |\rho|)\bigl(1 + |\bar{\tilde{\theta}}|/|\bar{\tilde{\theta}}_{\mathrm{anc}}|\bigr)$ that vanishes under treatment-invariant contamination. The Strategy~2 bound is $|\mathrm{Bias}_2| \leq \gamma(1+|\rho_2|)\bigl(2+|\bar{\tilde{\theta}}|/|\bar{\theta}_{c^*}|\bigr) + o(\gamma)$.
\end{corollary}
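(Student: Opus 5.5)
The corollary follows from Proposition~\ref{prop:sensitivity} by specializing to $R=1$, but it is cleaner to redo the scalar first-order calculation directly. With $R=1$ the objects become scalars. The anchor matrix is $\boldsymbol{\Theta}=\bar{\tilde{\theta}}_{\mathrm{anc}}$, so $\kappa(\boldsymbol{\Theta})=1$ and $\sigma_{\min}(\boldsymbol{\Theta})=|\bar{\tilde{\theta}}_{\mathrm{anc}}|$. I read $\rho$ as the temporal ratio $\Lambda^{\mathrm{target}}/\Lambda^{\mathrm{source}}$. If $\Lambda$ is normalized so that $\Lambda^{\mathrm{source}}=1$, then $\|\boldsymbol{\Lambda}^{\mathrm{target}}\|=|\rho|$. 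Substituting into \eqref{eq:bias_bounds} gives $B(\gamma)$ at once. Because $\Lambda$ is identified only up to scale, I will work with this normalization throughout and check that the final expression is scale-invariant. It is: $\rho$ is a ratio of $\Lambda$'s and $\bar{\tilde{\theta}}/\bar{\tilde{\theta}}_{\mathrm{anc}}$ is a ratio of $\theta$'s.

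\textbf{Strategy 1, direct computation.} The plug-in estimand is $\hat\tau=\tau^{\mathrm{obs}}\cdot\tau^{\mathrm{anc,tgt}}/\tau^{\mathrm{anc,src}}$. Under \eqref{eq:contaminated}, each of the three observed ATEs equals its separable value plus a contrast perturbation $e_j=\mathbb{E}[\Gamma_a-\Gamma_b]$ with $|e_j|\le\tilde\gamma\le2\gamma$. The true TATE carries its own perturbation $e_0$. I will linearize the ratio in $(e_0,e_1,e_2,e_3)$, with $\tau^{\mathrm{obs}}=\bar{\tilde\theta}+e_1$, $\tau^{\mathrm{anc,src}}=\bar{\tilde\theta}_{\mathrm{anc}}+e_2$ and $\tau^{\mathrm{anc,tgt}}=\bar{\tilde\theta}_{\mathrm{anc}}\rho+e_3$. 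This gives
\[
\mathrm{Bias}_1=\rho e_1+\frac{\bar{\tilde\theta}}{\bar{\tilde\theta}_{\mathrm{anc}}}\,e_3-\frac{\bar{\tilde\theta}\rho}{\bar{\tilde\theta}_{\mathrm{anc}}}\,e_2-e_0+o(\gamma).
\]
Applying the triangle inequality bounds this by $\tilde\gamma\bigl(|\rho|+1+|\bar{\tilde\theta}/\bar{\tilde\theta}_{\mathrm{anc}}|(1+|\rho|)\bigr)=\tilde\gamma(1+|\rho|)\bigl(1+|\bar{\tilde\theta}|/|\bar{\tilde\theta}_{\mathrm{anc}}|\bigr)$. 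That is $\tilde B(\tilde\gamma)$. Using $\tilde\gamma\le2\gamma$ then yields $B(\gamma)$. If $\Gamma_a$ does not depend on $a$, every $e_j=0$ and $\tilde B$ vanishes, which is the claimed treatment-invariance property. The $o(\gamma)$ remainder is the second-order term of the ratio expansion. It is controlled because $\bar{\tilde\theta}_{\mathrm{anc}}\ne0$, and it requires $\gamma$ small relative to $|\bar{\tilde\theta}_{\mathrm{anc}}|$.

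\textbf{Strategy 2.} This is the step where care is needed, because the stated constant $(2+\cdot)$ differs from the mechanical substitution. Under Assumption~\ref{ass:measurement}, the estimand is $\tau^{\mathrm{obs}}\cdot\mu_{c}^{\mathrm{tgt}}/\mu_c^{\mathrm{src}}$. The anchor quantities are now means, not contrasts, so their perturbations are bounded by $\gamma$ rather than $2\gamma$. The observed and true TATE contrasts still carry perturbations bounded by $\tilde\gamma\le2\gamma$. The same linearization gives
\[
|\mathrm{Bias}_2|\le 2\gamma|\rho_2|+2\gamma+\frac{|\bar{\tilde\theta}|}{|\bar\theta_{c^*}|}\,\gamma(1+|\rho_2|)+o(\gamma)
=\gamma(1+|\rho_2|)\Bigl(2+\frac{|\bar{\tilde\theta}|}{|\bar\theta_{c^*}|}\Bigr)+o(\gamma).
\]
The main obstacle is bookkeeping rather than any deep step. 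The two points to get right are which perturbations are contrasts (bounded by $2\gamma$) and which are means (bounded by $\gamma$), and confirming that the normalization defining $\rho,\rho_2$ matches Proposition~\ref{prop:sensitivity}, so that the factor $(1+\|\boldsymbol{\Lambda}^{\mathrm{target}}\|)$ indeed becomes $(1+|\rho|)$.
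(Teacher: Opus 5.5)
Your proposal is correct and follows essentially the same route as the paper. The paper also linearizes the plug-in ratio to first order, though it writes the contamination as multiplicative ratios $\alpha_{k^\star},\beta^T_{\mathrm{anc}},\beta^S_{\mathrm{anc}}$ rather than as additive perturbations under $\Lambda^{\mathrm{source}}=1$; it reaches the same four-term bias expression and then bounds contrast contaminations by $2\gamma$ and single-arm anchor contaminations by $\gamma$, which is exactly how the $(2+\cdot)$ constant in the Strategy~2 bound arises. Your explicit normalization remark is a useful addition, since it is what justifies replacing $\|\boldsymbol{\Lambda}^{\mathrm{target}}\|$ by $|\rho|$ when collapsing the general-$R$ bound, a step the paper leaves implicit.
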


Ill-conditioned anchors (large $\kappa(\boldsymbol{\Theta})$) compound non-separability, motivating anchor pairs with linearly independent effect contrasts. The $R = 1$ contrast-level bound $\tilde{B}(\tilde{\gamma})$ is at least twice as tight as $B(\gamma)$ and captures the DGP-C phenomenon of Section~\ref{sec:simulations}: treatment-symmetric violations cancel in the replicated-trial ratio. Both bounds define a \emph{breakdown frontier}: for each contamination level, the identified set is $[\hat{\psi} - B, \hat{\psi} + B]$, and the \emph{breakdown value} $\gamma^*$---the smallest $\gamma$ at which this interval first includes zero---is the temporal analog of Rosenbaum's $\Gamma$ \citep{rosenbaum2002observational}. All quantities other than $\gamma$ are estimable, so practitioners can report a single robustness summary alongside the point estimate (Appendix~\ref{app:sensitivity}).

\subsection{Comparing the Two Strategies and Relationship to Prior Work}
\label{sec:strategy_comparison}

Strategy~1 permits $\boldsymbol{\Lambda}(t_0, t_1)$ to depend on both administration and measurement times, accommodating settings where elapsed time matters (e.g., effect decay), but requires $R$ trials comparing the same non-target treatment pairs at source and target times. Strategy~2 requires only that $R$ treatment arms appear at multiple measurement times---an easier data requirement---but imposes the stronger restriction $\boldsymbol{\Lambda}(t_0, t_1) = \boldsymbol{\Lambda}(t_1)$. When both strategies are feasible, reporting both provides a robustness check: large discrepancies suggest violations of Assumption~\ref{ass:measurement}, and a formal specification test is available when $m \geq 2$ anchor arms exist (Appendix~\ref{app:spec_test}). We give a concrete four-step protocol for strategy selection in Appendix~\ref{app:strategy_selection}.

\textbf{On external baselines.} To the best of our knowledge, this is the first work to formalize temporal transportation as an identification problem. Panel-data methods with related technical ingredients (discussed in Appendix~\ref{app:related}) target a different estimand---the effect where the intervention \emph{was} administered---and so do not constitute honest baselines for the TATE. Section~\ref{sec:simulations} accordingly benchmarks against the Oracle (true temporal ratio) rather than external methods.

\section{Estimation and Inference}
\label{sec:estimation}

\textbf{Doubly robust score.} Observe i.i.d.\ data $\{(Y_i, A_i, S_i, X_i)\}_{i=1}^n$; define nuisance functions $\pi_k(X) := P(S = k \mid X)$, $e_k(a, X) := P(A = a \mid S = k, X)$, and $\mu_{a,k}(X) := \mathbb{E}[Y \mid A = a, S = k, X]$. Under compositional stability, the trial-$k$ marginal mean $\bar{\mu}_{a,k} = \mathbb{E}[\mu_{a,k}(X)]$ admits the doubly robust score
\begin{equation}
\label{eq:score_mean}
\varphi_{a,k}(O; \eta) = \frac{\mathbf{1}[S = k]}{\pi_k(X)} \cdot \frac{\mathbf{1}[A = a]}{e_k(a, X)} \big(Y - \mu_{a,k}(X)\big) + \mu_{a,k}(X),
\end{equation}
with $\mu_{a,k}(X)$ outside the trial-membership weight, delivering Neyman orthogonality in all three nuisance functions (Appendix~\ref{app:dr_properties}). The ATE score is $\varphi_{\tau_k}(O; \eta) = \varphi_{a_k,k}(O; \eta) - \varphi_{b_k,k}(O; \eta)$. Under compositional drift, a density-ratio extension retargets to the trial-$k^\star$ marginal (Appendix~\ref{app:assumption_discussion}).

\textbf{Plug-in TATE estimators.} Given cross-fitted building blocks $\hat{\bar{\mu}}_{a,k} = \mathbb{P}_n[\varphi_{a,k}(O; \hat{\eta})]$ and $\hat{\tau}_k = \mathbb{P}_n[\varphi_{\tau_k}(O; \hat{\eta})]$, the Strategy~1 TATE is a rotation-invariant functional of three observable ATE quantities: $\boldsymbol{\tau}_{k^\star}^{\mathrm{source}} \in \mathbb{R}^R$ stacks the target trial's ATEs at $R$ source times; $\mathbf{T}_{\mathrm{anc}}^{\mathrm{source}} \in \mathbb{R}^{R \times R}$ stacks anchor-pair ATEs across those $R$ source times; $\boldsymbol{\tau}^{\mathrm{target}} \in \mathbb{R}^R$ stacks anchor-pair ATEs at the target time. The plug-in estimator is
\begin{equation}
\label{eq:est_S1_R}
\hat{\psi}_1 = (\hat{\boldsymbol{\tau}}_{k^\star}^{\mathrm{source}})^\top \bigl(\hat{\mathbf{T}}_{\mathrm{anc}}^{\mathrm{source}}\bigr)^{-1} \hat{\boldsymbol{\tau}}^{\mathrm{target}}.
\end{equation}
The latent effect contrasts $\boldsymbol{\Theta}$ and $\mathbf{M}$ are identified only up to a common rotation; only their product $\boldsymbol{\Theta}\mathbf{M}^\top = \mathbf{T}_{\mathrm{anc}}^{\mathrm{source}}$ (observable) appears. Strategy~2 has the same form with $\mathbf{T}_{\mathrm{anc}}^{\mathrm{source}}$ replaced by $\mathbf{T}_{c}^{\mathrm{source}}$ (anchor-arm source-time means) and $\boldsymbol{\tau}^{\mathrm{target}}$ by $\boldsymbol{\mu}^{\mathrm{target}}$. The $R = 1$ scalar specialization used in the experiments is consolidated in Appendix~\ref{app:R1_case}.

Applying the functional delta method to~\eqref{eq:est_S1_R} via $\mathrm{d}(\mathbf{A}^{-1}) = -\mathbf{A}^{-1}(\mathrm{d}\mathbf{A})\mathbf{A}^{-1}$ gives the EIF
\begin{align}
\phi_{\psi_1}(O) &= \bigl(\mathbf{T}_{\mathrm{anc}}^{\mathrm{source}}\bigr)^{-\top}\!\boldsymbol{\tau}^{\mathrm{target}} \cdot \boldsymbol{\phi}_{\boldsymbol{\tau}_{k^\star}^{\mathrm{source}}}(O) + (\boldsymbol{\tau}_{k^\star}^{\mathrm{source}})^\top \bigl(\mathbf{T}_{\mathrm{anc}}^{\mathrm{source}}\bigr)^{-1} \boldsymbol{\phi}_{\boldsymbol{\tau}^{\mathrm{target}}}(O) \notag \\
&\quad - (\boldsymbol{\tau}_{k^\star}^{\mathrm{source}})^\top \bigl(\mathbf{T}_{\mathrm{anc}}^{\mathrm{source}}\bigr)^{-1} \mathrm{d}\mathbf{T}_{\mathrm{anc}}^{\mathrm{source}}[\boldsymbol{\phi}] \bigl(\mathbf{T}_{\mathrm{anc}}^{\mathrm{source}}\bigr)^{-1}\!\boldsymbol{\tau}^{\mathrm{target}},
\label{eq:eif_general_R}
\end{align}
with $\mathrm{d}\mathbf{T}_{\mathrm{anc}}^{\mathrm{source}}[\boldsymbol{\phi}]$ assembled from building-block ATE EIFs (Appendix~\ref{app:proof_eif}).

\begin{theorem}[Asymptotic Properties]
\label{thm:asymptotic}
Under the regularity conditions of Appendix~\ref{app:proof_asymptotic} (overlap, bounded fourth moments, nuisance product-rate $\|\hat{\mu}_{a,k} - \mu_{a,k}\|_2 \cdot (\|\hat{\pi}_k - \pi_k\|_2 + \|\hat{e}_k - e_k\|_2) = o_p(n^{-1/2})$, per-trial sample sizes $n_k/n \to p_k > 0$, and $\sigma_{\min}(\boldsymbol{\Theta})$ bounded below) and the relevant identification assumptions:
(a)~$\sqrt{n}(\hat{\psi} - \psi) \xrightarrow{d} \mathcal{N}(0, V)$ with $V = \mathbb{E}[\phi_{\psi}(O)^2]$ the semiparametric efficiency bound, for $\hat{\psi} \in \{\hat{\psi}_1, \hat{\psi}_2, \hat{\psi}_2^*\}$; (b)~estimators are consistent whenever, \emph{uniformly across all trials $(k, a)$ involved in the TATE functional}, either $\mu_{a,k}$ is correctly specified, or the propensity pair $(\pi_k, e_k)$ is correctly specified; (c)~when all models are correct and the target trial $k^\star$ is distinct from every anchor trial, the estimators achieve the efficiency bound and $\hat{\psi}_2^*$ is minimum-variance among GLS combinations of the $R$ anchor components; if the target trial overlaps with an anchor trial (e.g., its control arm serves as a Strategy~2 anchor), the asymptotic variance acquires cross-trial covariance terms and the minimum-variance GLS weights use the full sandwich covariance derived in Appendix~\ref{app:proof_asymptotic}.
\end{theorem}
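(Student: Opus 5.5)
The plan is the standard two-layer argument. First establish joint asymptotic linearity of the finite collection of building blocks $\hat{\bar{\mu}}_{a,k}$ and $\hat{\tau}_k$. Then push it through the smooth map $(\boldsymbol{\tau}_{k^\star}^{\mathrm{source}}, \mathbf{T}_{\mathrm{anc}}^{\mathrm{source}}, \boldsymbol{\tau}^{\mathrm{target}}) \mapsto (\boldsymbol{\tau}_{k^\star}^{\mathrm{source}})^\top (\mathbf{T}_{\mathrm{anc}}^{\mathrm{source}})^{-1}\boldsymbol{\tau}^{\mathrm{target}}$ with the delta method. The map is differentiable at the truth because $\sigma_{\min}(\boldsymbol{\Theta})$ is bounded below, and $\mathbf{T}_{\mathrm{anc}}^{\mathrm{source}}=\boldsymbol{\Theta}\mathbf{M}^\top$ is then invertible; I would assume $\mathbf{M}$ is well conditioned as part of the regularity conditions. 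Strategy~2 is handled identically with $\mathbf{T}_c^{\mathrm{source}}$ and $\boldsymbol{\mu}^{\mathrm{target}}$. The GLS combination $\hat{\psi}_2^*$ is a smooth function of the $R$ (or $m$) anchor components with estimated weights. Because the weights multiply mean-zero terms to first order, their estimation error contributes only $o_p(n^{-1/2})$.

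For each cross-fitted block $\mathbb{P}_n[\varphi_{a,k}(O;\hat\eta)]$, I would use the usual decomposition
\[
\mathbb{P}_n\varphi(\hat\eta)-\bar\mu = (\mathbb{P}_n-P)\varphi(\eta) + (\mathbb{P}_n-P)\{\varphi(\hat\eta)-\varphi(\eta)\} + P\{\varphi(\hat\eta)-\varphi(\eta)\}.
\]
The middle term is $o_p(n^{-1/2})$ by cross-fitting, using $L_2$-consistency of the nuisances together with overlap and bounded fourth moments (Chebyshev conditional on the training fold). For the last term, iterated expectations give the exact second-order remainder
\[
P\Bigl[\bigl(\tfrac{\pi_k e_k}{\hat\pi_k\hat e_k}-1\bigr)(\mu_{a,k}-\hat\mu_{a,k})\Bigr],
\]
which by overlap and Cauchy--Schwarz is bounded by a constant times $\|\hat\mu_{a,k}-\mu_{a,k}\|_2(\|\hat\pi_k-\pi_k\|_2+\|\hat e_k-e_k\|_2)=o_p(n^{-1/2})$. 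This is the Neyman orthogonality of Appendix~\ref{app:dr_properties}. Under compositional stability, $P\varphi_{a,k}(\eta)=\mathbb{E}[\mu_{a,k}(X)]$ is the trial marginal. Stacking the finitely many blocks gives a multivariate CLT, with $n_k/n\to p_k>0$ keeping variances finite. The delta method then yields (a) with influence function~\eqref{eq:eif_general_R}. That this influence function is the \emph{efficient} one follows because each $\varphi_{a,k}-\bar\mu_{a,k}$ is the EIF of $\bar\mu_{a,k}$ in the nonparametric model for $(Y,A,S,X)$, and the chain rule preserves efficiency for smooth functionals; by Theorem~\ref{thm:decomposition} and Corollaries~\ref{cor:replicated}--\ref{cor:common_arm}, $\psi$ is exactly this functional.

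For (b), I would use the same remainder expression. If $\hat\mu_{a,k}$ converges to the true $\mu_{a,k}$, the remainder vanishes whatever $(\pi_k,e_k)$ converge to. If instead $(\hat\pi_k,\hat e_k)$ converge to the truth, the ratio factor tends to $1$ and the remainder vanishes again. Either way each block is consistent, and the continuous mapping theorem applied to the inverse-product map gives consistency of $\hat\psi$. The uniformity qualifier is needed because one inconsistent block corrupts the product.

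For (c), I would compute $V=\mathbb{E}[\phi_\psi^2]$. When $k^\star$ shares no units with any anchor trial, the block influence functions carry disjoint indicators $\mathbf{1}[S=k]$, so cross-trial products vanish and $V$ is block-diagonal. The GLS optimality of $\hat\psi_2^*$ is then the Gauss--Markov-type fact that, among weights $w$ with $\mathbf{1}^\top w=1$, the quadratic form $w^\top\Sigma w$ is minimized at $\Sigma^{-1}\mathbf{1}/(\mathbf{1}^\top\Sigma^{-1}\mathbf{1})$. When an anchor arm lies inside trial $k^\star$, the products no longer vanish, and the same argument goes through with the full sandwich $\Sigma$.

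I expect the main obstacle to be the careful bookkeeping in (c) and the EIF claim: showing that the GLS weight estimation is first-order negligible, and that shared-unit covariance terms are correctly captured. The empirical-process and remainder steps are routine given cross-fitting.
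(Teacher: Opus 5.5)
Your architecture is the paper's: asymptotic linearity of each doubly robust building block, the delta method through $(\mathbf u,\mathbf A,\mathbf v)\mapsto\mathbf u^\top\mathbf A^{-1}\mathbf v$, double robustness read off from the bias of the score, and Lagrangian/GLS weights in (c). In several places you are tighter than the paper.
\begin{itemize}
\item Your exact remainder $P[(\pi_k e_k/(\hat\pi_k\hat e_k)-1)(\mu_{a,k}-\hat\mu_{a,k})]$ shows that the product-rate condition alone controls the bias. The paper's expansion carries an $O_p(\|\hat\eta-\eta\|_2^2)$ term that the product rate does not cover.
\item You separate out the cross-fitted empirical-process term.
\item You handle general $R$ directly rather than only $g(x,y,z)=xy/z$.
\item You correctly note that inverting $\mathbf T_{\mathrm{anc}}^{\mathrm{source}}=\boldsymbol\Theta\mathbf M^\top$ also needs $\mathbf M$ nonsingular, which the stated conditions omit.
\end{itemize}
If you quote \eqref{eq:eif_general_R}, note that the $\mathbf u$-gradient of $\mathbf u^\top\mathbf A^{-1}\mathbf v$ is $\mathbf A^{-1}\mathbf v$, not $\mathbf A^{-\top}\mathbf v$.

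One step fails as written: the variance decomposition in (c), where the block influence functions ``carry disjoint indicators $\mathbf 1[S=k]$, so cross-trial products vanish.'' The score \eqref{eq:score_mean} deliberately places $\mu_{a,k}(X)$ outside the trial-membership weight and averages it over the pooled sample. So $\phi_{\bar\mu_{a,k}}$ contains $\mu_{a,k}(X)-\bar\mu_{a,k}$ with no indicator. For $k\neq\ell$ the residual parts are orthogonal, but $\mathbb E[\phi_{\bar\mu_{a,k}}\phi_{\bar\mu_{c,\ell}}]=\mathrm{Cov}(\mu_{a,k}(X),\mu_{c,\ell}(X))$, which is nonzero in general (in DGP~A every $\theta_a$ loads on $X_1$). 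If such terms survived in $\phi_{\psi_2^*}$, its variance would pick up $2R_2^*\tau_{k^\star}\sum_j w_j c_j$ with $c_j=\mathbb E[\phi_{\tau_{k^\star}}\phi_{R_{2,j}}]$, and $\mathbf w^*$ would not be optimal.

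The conclusion is still true, but for a structural reason you need to supply.
\begin{itemize}
\item The $X$-part of $\phi_{R_{2,j}}$ is $\bar\mu_{c_j,\ell'_j}^{-1}\{\mu_{c_j,\ell_j}(X)-R_{2,j}\,\mu_{c_j,\ell'_j}(X)\}$; the centering constants cancel because $\bar\mu_{c_j,\ell_j}=R_{2,j}\bar\mu_{c_j,\ell'_j}$.
\item Under Assumptions~\ref{ass:separable} and~\ref{ass:measurement} with $R=1$, $\mu_{c,\ell}(X)=\theta_c(X)\Lambda^{\mathrm{target}}=R_2\,\mu_{c,\ell'}(X)$ pointwise, so this $X$-part vanishes identically.
\item The same cancellation holds for Strategy~1 contrasts under Assumption~\ref{ass:separable} alone.
\item For general $R$, the anchor blocks' $X$-contributions cancel in aggregate, because $\psi=\bar{\tilde{\boldsymbol\theta}}_{k^\star}^\top\boldsymbol\Lambda^{\mathrm{target}}$ for every covariate law.
\end{itemize}
Only then are the anchor components purely residual and orthogonal to $\phi_{\tau_{k^\star}}$ for distinct trials, so the Lagrangian argument applies. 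The paper's proof asserts the same disjoint-support property, so this repair is needed there as well.

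In the overlapping case $\mathbf c=(c_j)_j\neq\mathbf 0$ genuinely. Your phrase ``the same argument with the full sandwich $\Sigma$'' must then mean minimizing the full quadratic. Its minimizer is $\mathbf V_{\mathbf R}^{-1}(\lambda\mathbf 1-(R_2/\tau_{k^\star})\mathbf c)$, with $\lambda$ fixed by $\mathbf 1^\top\mathbf w=1$, which is not the anchor-block inverse-variance formula.

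Finally, your chain-rule efficiency argument, like the paper's, certifies efficiency in the locally nonparametric observed-data model. The factor structure restricts the observed law (e.g.\ $\mu_{c,\ell}(X)/\mu_{c,\ell'}(X)$ is constant in $X$), so the bound in the structural model can be strictly smaller. That qualifier belongs in the statement.
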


The asymptotic variance inflates by a factor of $\kappa(\boldsymbol{\Theta})^2$ when the anchor effect contrasts are near-collinear, so ill-conditioned anchors degrade precision even when the $\sqrt{n}$ rate is achieved. In randomized trials, $e_k(a, X)$ is known by design, so the double robustness of part (b) is automatic. Any nuisance estimator with $n^{-1/4}$ $L_2$ error suffices to meet the product-rate condition.

\textbf{Why Strategy 2 is typically more efficient.} When target and anchor trials are distinct, the influence-function components have disjoint support and asymptotic variances decompose accordingly (scaled by $\kappa(\boldsymbol{\Theta})^2$ for general $R$; Appendix~\ref{app:proof_asymptotic}). Two compounding factors favor Strategy~2: under randomization, contrasts carry roughly twice the variance of means ($V_{\tau_k} \approx V_{\bar{\mu}_{a_k,k}} + V_{\bar{\mu}_{b_k,k}}$), and treatment effects are typically smaller than outcome means ($|\tau_k| \ll |\bar{\mu}_{a,k}|$), so the contrast coefficient of variation is much larger. The efficiency gain comes at the cost of Assumption~\ref{ass:measurement}; when that fails, Strategy~2 is biased while Strategy~1 remains valid.

\textbf{Multi-anchor combination and specification test.} When $m > R$ anchor arms satisfy Assumption~\ref{ass:common}, the system $\boldsymbol{\mu}^{\mathrm{target}} = \boldsymbol{\Theta}_c \boldsymbol{\Lambda}^{\mathrm{target}}$ is overdetermined. The efficient combination is the generalized least squares estimator $\hat{\boldsymbol{\Lambda}}^{\mathrm{target}} = (\boldsymbol{\Theta}_c^\top \mathbf{V}_{\boldsymbol{\mu}}^{-1} \boldsymbol{\Theta}_c)^{-1} \boldsymbol{\Theta}_c^\top \mathbf{V}_{\boldsymbol{\mu}}^{-1} \hat{\boldsymbol{\mu}}^{\mathrm{target}}$, which minimizes asymptotic variance among linear combinations of the overidentified system (Appendix~\ref{app:proof_asymptotic}). The same overdetermined system supports a specification test for Assumption~\ref{ass:measurement}: the GLS residual $\hat{\mathbf{r}} = \hat{\boldsymbol{\mu}}^{\mathrm{target}} - \boldsymbol{\Theta}_c \hat{\boldsymbol{\Lambda}}^{\mathrm{target}}$ is asymptotically mean-zero under $H_0$, and the Wald statistic $Q = n\,\hat{\mathbf{r}}^\top \hat{\mathbf{V}}_{\boldsymbol{\mu}}^{-1}\hat{\mathbf{r}}$ has a $\chi^2_{m - R}$ null distribution (Appendix~\ref{app:spec_test}).

\section{Simulation Study}
\label{sec:simulations}

We conduct Monte Carlo simulations to investigate finite-sample performance of the TEA-Time estimators. The study is organized around three data generating processes (DGPs) that progressively stress-test the framework's assumptions: DGP~A satisfies all assumptions and serves as a baseline; DGP~B violates the measurement-time restriction (Assumption~\ref{ass:measurement}); and DGP~C violates separability itself (Assumption~\ref{ass:separable}).

\textbf{Design.} All three DGPs share covariates $X = (X_1, X_2)^\top$ with $X_1 \sim \mathcal{N}(0, 1)$, $X_2 \sim \text{Bernoulli}(0.5)$, three treatments $a \in \{0, 1, 2\}$ with linear-in-$X$ effects $\theta_a(X)$, seasonal temporal effect $\Lambda(t) = 1 + 0.3 \sin(2\pi t/12)$, Gaussian noise, and $K = 6$ trials with the target trial $k^\star = 1$ comparing treatment~1 against control; the TATE target is the same comparison at measurement time $t_1 = 9$ (full $\theta_a$ expressions and trial tables in Appendix~\ref{app:sim_design}). Throughout, we set $R = 1$, so a single anchor pair (Strategy~1) or a single anchor arm (Strategy~2) suffices. The DGPs differ only in how the temporal effect enters potential outcomes: \textbf{DGP~A} is fully separable with measurement-time-only structure ($\Lambda(t_0, t_1) = \Lambda(t_1)$), so both strategies are correctly specified; \textbf{DGP~B} adds lag-dependent decay $\Lambda(t_0, t_1) = e^{-\beta(t_1 - t_0)}\Lambda(t_1)$ with $\beta = 0.1$ while preserving separability, violating Assumption~\ref{ass:measurement} and biasing Strategy~2 only; \textbf{DGP~C} adds a treatment-specific contamination $\delta_c \sin(2\pi t/12 + \varphi_a)$ with treatment-specific phases, violating separability itself (Assumption~\ref{ass:separable}). We compare five estimators---\textbf{Oracle} (true ratio), \textbf{S1}, \textbf{S2-C} (control anchor), \textbf{S2-T} (treatment-2 anchor), and \textbf{S2-M} (inverse-variance combination)---using gradient-boosted nuisance models with 5-fold cross-fitting, $B = 500$ Monte Carlo replications, and $n \in \{600, 1200, 2400, 4800\}$.

\textbf{Results.} Table~\ref{tab:sim_results} reveals a clear hierarchy. Under \emph{DGP~A}, all estimators are approximately unbiased with near-nominal coverage, and the Strategy~2 variants achieve roughly $2\times$ lower RMSE than Strategy~1 (at $n = 2400$, $0.07$ vs.\ $0.15$)---the efficiency gain predicted by Section~\ref{sec:estimation}. Under \emph{DGP~B}, Strategy~1 remains valid (bias shrinks with $n$, coverage near 95\%), but all three Strategy~2 variants develop a persistent bias of $\approx\!-0.05$ that does not decay; as their CIs tighten around this wrong target, coverage deteriorates from near $90\%$ at $n = 600$ to $80\%$--$83\%$ at $n = 4800$---the signature of biased estimators with correctly sized intervals. Under \emph{DGP~C}, Strategy~2's control-arm ($c^* = 0$) and multi-anchor combinations fail catastrophically (bias $\approx -0.66$, coverage $0\%$), while the treatment-2-anchor variant ($c^* = 2$) is less biased ($\approx -0.24$) but its coverage still collapses from $78\%$ to $6\%$ as $n$ grows. Strategy~1 survives at every $n$ (bias $\leq +0.06$, coverage $\geq 92\%$): the treatment-specific contamination cancels in the replicated-trial ratio because both target and anchor compare the same pair. When the separability violation is treatment-specific, Strategy~1 absorbs it through symmetric contamination; when it is covariate- or trial-specific, neither strategy is consistent, and the breakdown frontier of Proposition~\ref{prop:sensitivity} quantifies the worst-case bias.

\textbf{Degradation under violations.} Figure~\ref{fig:diagnostics} sweeps each of the framework's four principal stressors one at a time at $n = 1800$. \emph{(a)}~Lag-decay rate $\beta$ (Assumption~\ref{ass:measurement} violated at $\beta > 0$): Strategy~1's bias stays below Strategy~2's at every $\beta$. \emph{(b)}~Contamination radius $\gamma$ (Assumption~\ref{ass:separable} violated): Strategy~1's empirical bias stays well below the Proposition~\ref{prop:sensitivity} bound $B(\gamma)$ for both contamination types; the covariate-time curve is near zero because $\mathbb{E}[\Gamma_a - \Gamma_b] = 0$ cancels the contrast. \emph{(c)}~Rank-$2$ misspecification $\delta$: both strategies stay well below the no-temporal-adjustment baseline, so a rank-$1$ correction buys most of the value even when the truth is higher-rank. \emph{(d)}~Target-time covariate drift $\delta_{\mathrm{drift}}$: the DR estimator keeps coverage near $95\%$ while the unadjusted estimator falls into the failure region, confirming Theorem~\ref{thm:asymptotic}(b). The specification test of Appendix~\ref{app:spec_test} complements panel~(a) and is documented in Appendix~\ref{app:simulations_extra} along with double-robustness and $\Lambda$-shape invariance studies.

\section{Empirical Application: Upworthy Headline Tests}
\label{sec:casestudy}

\begin{table*}[!htbp]
\centering
\caption{Simulation results across three DGPs ($B = 500$ Monte Carlo replications per cell). Each entry reports Bias / RMSE / Coverage of the 95\% CI. Bold marks the lower-RMSE strategy among non-Oracle; \colorbox{red!10}{red shading} marks coverage below 90\% (the failure regime). Oracle is undefined under DGP~C (separability itself is violated). The red shading is the paper's central empirical finding: Strategy~2 fails silently under assumption violations, with \emph{worsening} coverage at larger $n$ because its tighter intervals center on a biased estimand.}
\label{tab:sim_results}
\small
\begin{tabular}{@{}cl ccc ccc ccc @{}}
\toprule
& & \multicolumn{3}{c}{\textbf{DGP~A}: assumptions hold} & \multicolumn{3}{c}{\textbf{DGP~B}: lag-dependent $\Lambda$} & \multicolumn{3}{c}{\textbf{DGP~C}: trt--time interaction} \\
\cmidrule(lr){3-5} \cmidrule(lr){6-8} \cmidrule(lr){9-11}
$n$ & Est. & Bias & RMSE & Cov. & Bias & RMSE & Cov. & Bias & RMSE & Cov. \\
\midrule
\multirow{5}{*}{600}
    & Oracle & $-.00$ & .13 & .98 & $-.00$ & .11 & .95 & --- & --- & --- \\
    & S1     & $+.03$ & .34 & .93 & $+.03$ & .32 & .92 & $+.06$ & \textbf{.46} & .92 \\
    & S2-C   & $-.00$ & .16 & .96 & $-.05$ & .14 & .91 & $-.66$ & .67 & \colorbox{red!10}{.00} \\
    & S2-T   & $+.00$ & .16 & .98 & $-.04$ & .13 & .90 & $-.24$ & \textbf{.30} & \colorbox{red!10}{.78} \\
    & S2-M   & $-.02$ & \textbf{.15} & .97 & $-.06$ & \textbf{.13} & .90 & $-.64$ & .65 & \colorbox{red!10}{.01} \\
\midrule
\multirow{5}{*}{2400}
    & Oracle & $+.00$ & .06 & .97 & $+.00$ & .05 & .97 & --- & --- & --- \\
    & S1     & $+.01$ & .15 & .97 & $+.01$ & .13 & .97 & $+.02$ & \textbf{.19} & .97 \\
    & S2-C   & $+.00$ & .08 & .99 & $-.05$ & .07 & .90 & $-.66$ & .66 & \colorbox{red!10}{.00} \\
    & S2-T   & $-.00$ & \textbf{.07} & .98 & $-.05$ & .08 & \colorbox{red!10}{.89} & $-.25$ & .26 & \colorbox{red!10}{.34} \\
    & S2-M   & $-.00$ & \textbf{.07} & .99 & $-.05$ & \textbf{.07} & \colorbox{red!10}{.88} & $-.66$ & .66 & \colorbox{red!10}{.00} \\
\midrule
\multirow{5}{*}{4800}
    & Oracle & $+.00$ & .04 & .97 & $+.00$ & .03 & .96 & --- & --- & --- \\
    & S1     & $+.01$ & .09 & .97 & $+.01$ & .09 & .96 & $+.01$ & \textbf{.12} & .97 \\
    & S2-C   & $+.00$ & \textbf{.05} & .99 & $-.05$ & .06 & \colorbox{red!10}{.83} & $-.66$ & .66 & \colorbox{red!10}{.00} \\
    & S2-T   & $+.00$ & \textbf{.05} & .99 & $-.05$ & \textbf{.06} & \colorbox{red!10}{.82} & $-.24$ & .25 & \colorbox{red!10}{.06} \\
    & S2-M   & $-.00$ & \textbf{.05} & 1.0 & $-.05$ & \textbf{.06} & \colorbox{red!10}{.80} & $-.66$ & .67 & \colorbox{red!10}{.00} \\
\bottomrule
\end{tabular}
\end{table*}
\begin{figure}[!htbp]
\centering
\includegraphics[width=\linewidth]{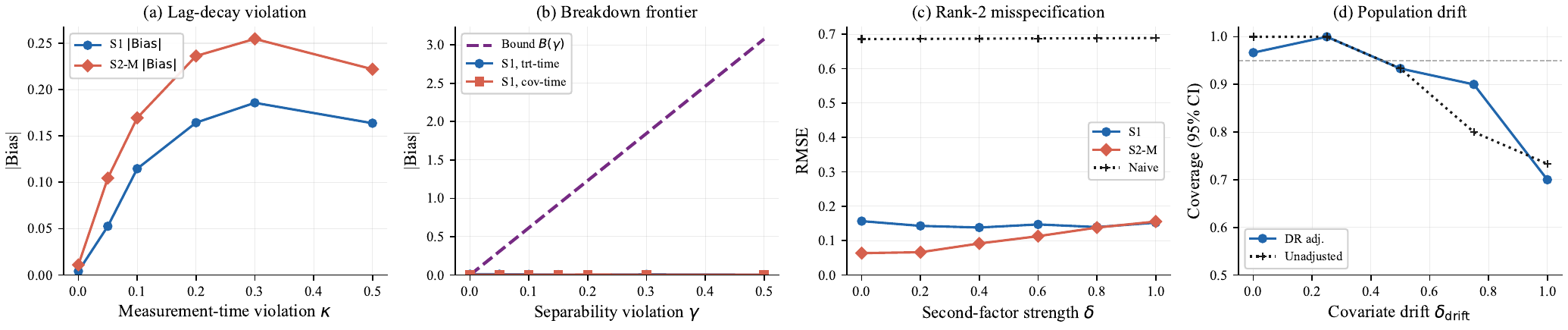}
\caption{Degradation under four assumption violations ($n = 1800$, $B = 60$). \textbf{(a)}~$|\mathrm{Bias}|$ vs.\ lag-decay rate $\beta$ (violates Assumption~\ref{ass:measurement}). \textbf{(b)}~Strategy~1 $|\mathrm{Bias}|$ vs.\ contamination radius $\gamma$ (violates Assumption~\ref{ass:separable}) with the Proposition~\ref{prop:sensitivity} bound $B(\gamma)$ as upper envelope. \textbf{(c)}~RMSE vs.\ second-factor strength $\delta$ (rank misspecification) vs.\ no-adjustment baseline. \textbf{(d)}~$95\%$ CI coverage vs.\ target-time covariate drift $\delta_{\mathrm{drift}}$.}
\label{fig:diagnostics}
\end{figure}

We apply our framework to headline A/B tests from the Upworthy Research Archive \citep{matias2021upworthy}, one of the largest publicly available collections of randomized digital experiments. Full details appear in Appendix~\ref{app:upworthy}.

\textbf{Data and setup.} The Upworthy Research Archive \citep{matias2021upworthy} contains 32{,}487 headline A/B tests from January 2013 to April 2015; we analyze the 105{,}551-arm confirmatory subset. Following \citep{matias2024upworthy_update}, we exclude tests from June 25, 2013 to January 10, 2014, a window with cache-driven block-like assignment. Because each test uses unique headline text, we cluster semantically similar headlines using Sentence-BERT embeddings \citep{reimers2019sentence} via constrained Hungarian matching into 50 clusters (Appendix~\ref{app:upworthy_clustering}).

\textbf{Research question.} We estimate TATEs for two A/B tests (Trial~A: February 2014, Trial~B: January 2014), transporting observed effects to the remaining months of 2014. For each target month, we compare both strategies against a ``ground-truth'' TATE pooled from all available comparisons of the target cluster pair at that month.

\textbf{Results.} Figure~\ref{fig:tate_comparison} and Table~\ref{tab:summary_stats} show a variance-bias tradeoff that manifests differently across the two trials. Strategy~2 yields $3\times$ tighter intervals in both trials, as predicted. On \emph{Trial~A}, Strategy~1 tracks the per-month ground truth closely ($+0.97$ correlation, RMSE $0.0027$) while Strategy~2 sits near a constant and anti-tracks ($-0.12$). On \emph{Trial~B}, neither strategy recovers the ground truth cleanly: Strategy~2 is again near-flat ($+0.26$ correlation, lowest RMSE) while Strategy~1 is more variable but \emph{anti}-correlates with the ground truth ($-0.87$), indicating that for this cluster pair the temporal ratio estimated from the target pair's own history is not a useful signal for the true dynamics. Both diagnostics are informative: they tell a practitioner that the transportation assumptions do not cleanly hold on Trial~B. The breakdown frontier quantifies this honestly: Trial~A Strategy~1 has $\gamma^\star \!\approx\! 0.00024$ ($\sim 2\%$ of the cluster-10 mean control CTR); Strategy~2's numerically larger $\gamma^\star \!\approx\! 0.00142$ reflects its tighter intervals rather than genuine robustness; Trial~B Strategy~1's observed effect is already within $1.96\!\cdot\!\mathrm{SE}$ of zero so $\gamma^\star = 0$, and Strategy~2 has $\gamma^\star \!\approx\! 0.00083$. For calibration, the median month-over-month $|\Delta \mathrm{CTR}|$ across Upworthy control clusters in our window is $0.00223$ ($\approx 15\%$ of cluster mean)---several times larger than any reported $\gamma^\star$. The breakdown frontier thus flags the TATE estimates as fragile to violations comparable to the empirical monthly drift the archive exhibits. Results are robust to clustering configuration ($K \in \{30, 50, 75, 100\}$, alternative embedding; Appendix~\ref{app:clustering_sensitivity}). The framework's intended use is exactly this workflow: compute both strategies, inspect agreement, read off $\gamma^\star$, and decide whether to deploy the finding.

\begin{table}[!htbp]
\centering
\caption{Estimation performance against the ground-truth TATE (per-month pooled across the target cluster pair). ``Corr'' is the Pearson correlation across target months; source months excluded.}
\label{tab:summary_stats}
\small
\begin{tabular}{@{}lcccc@{}}
\toprule
& \multicolumn{2}{c}{Trial A ($n=11$ months)} & \multicolumn{2}{c}{Trial B ($n=12$ months)} \\
\cmidrule(lr){2-3} \cmidrule(lr){4-5}
& S1 & S2 & S1 & S2 \\
\midrule
RMSE         & \textbf{.0027}  & .0105           & .0134           & \textbf{.0081} \\
Bias         & $+.0002$        & $+.0074$        & $+.0040$        & $-.0014$ \\
Avg Std Err  & .0054           & \textbf{.0019}  & .0052           & \textbf{.0018} \\
Corr         & $\mathbf{+0.97}$ & $-0.12$        & $-0.87$         & $+0.26$ \\
\bottomrule
\end{tabular}
\end{table}

\begin{figure*}[!htbp]
    \centering
    \includegraphics[width=0.48\linewidth]{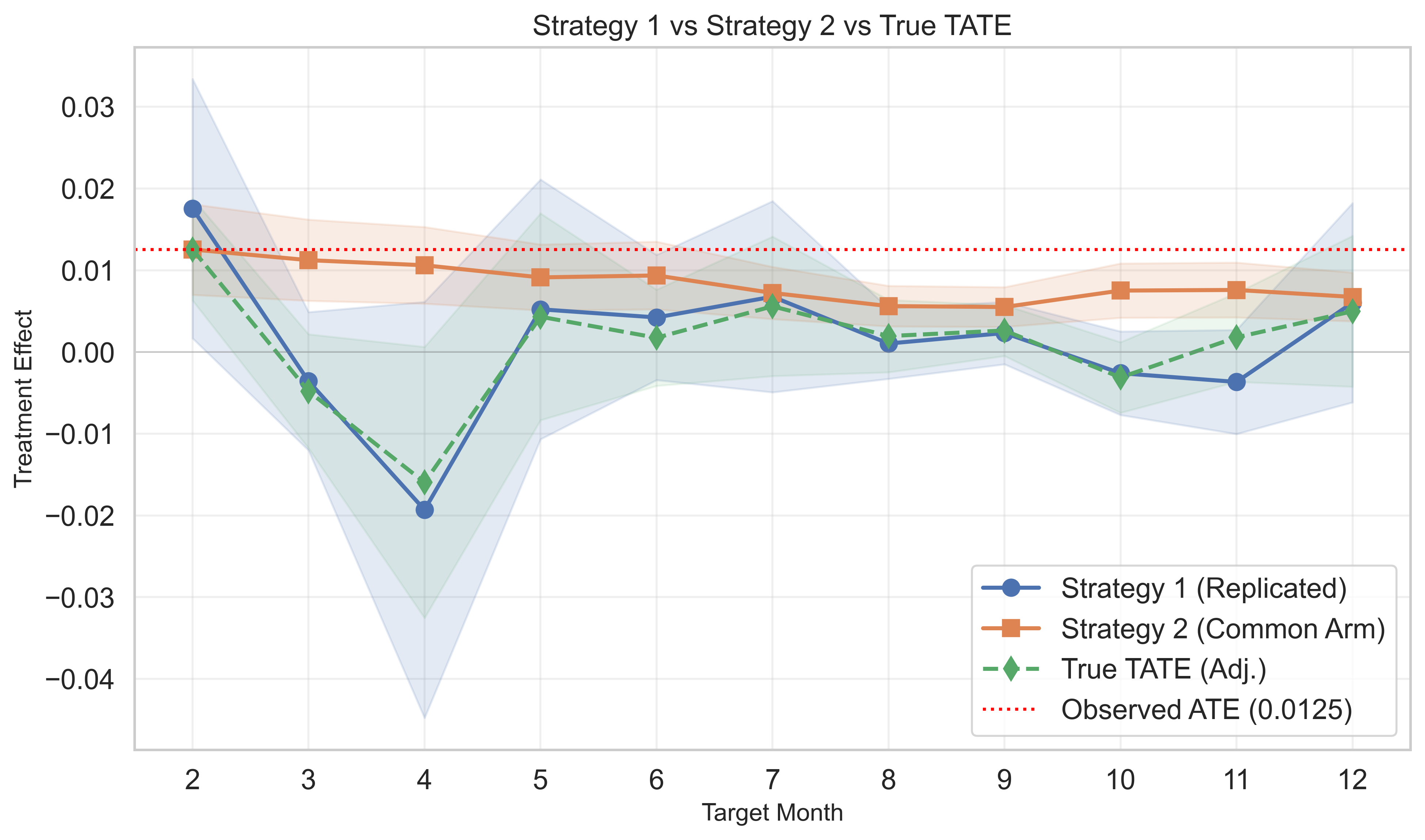}
    \hfill
    \includegraphics[width=0.48\linewidth]{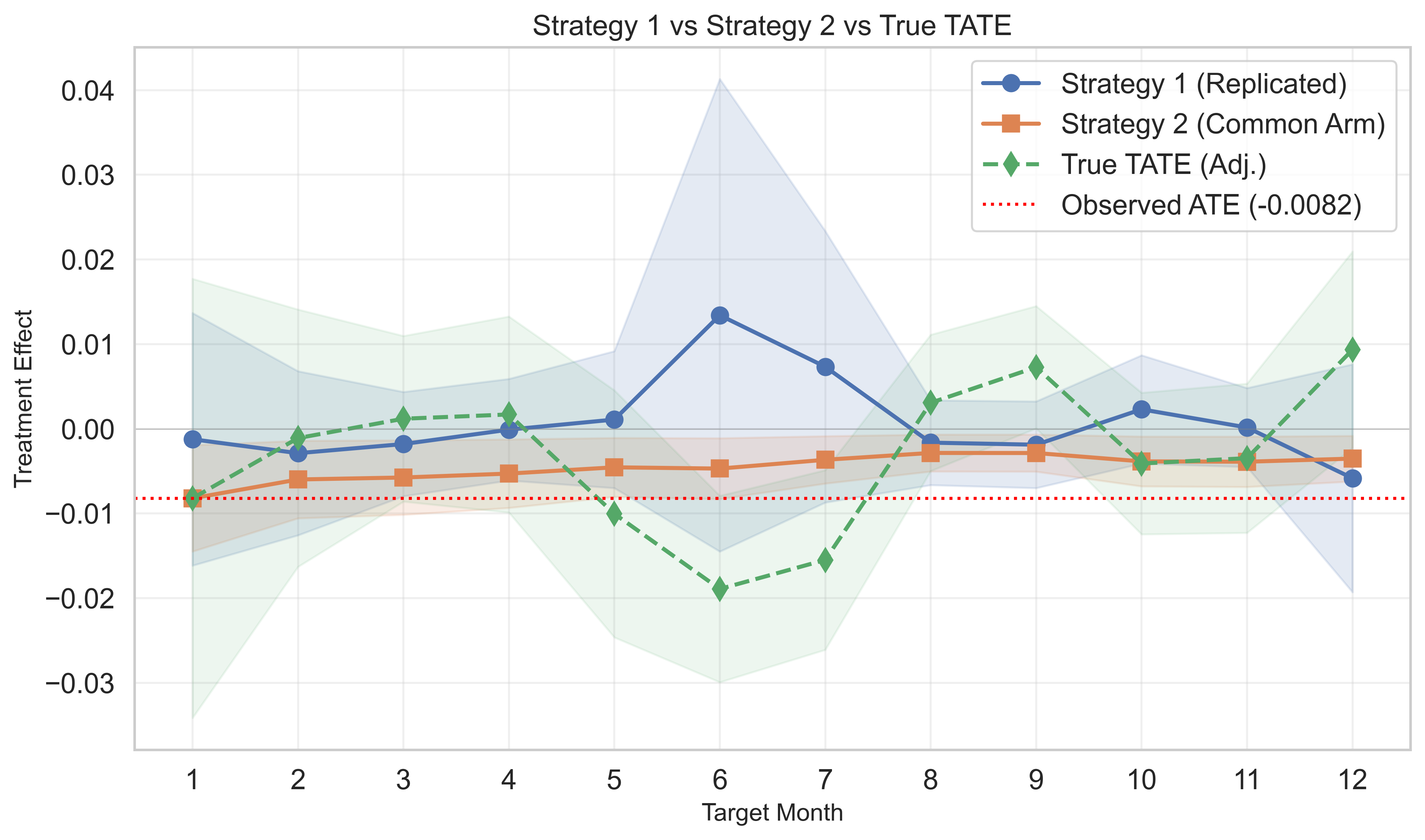}
    \caption{TATE estimates by target month for both trials. Each panel shows Strategy~1 (blue circles), Strategy~2 (orange squares), and true TATE (green diamonds) with 95\% confidence bands; dashed red line marks the observed ATE at the source time. \textbf{(a)}~Trial~A (Cluster~17 vs.\ 10), source time February 2014. \textbf{(b)}~Trial~B (Cluster~46 vs.\ 12), source time January 2014.}
    \label{fig:tate_comparison}
\end{figure*}

\section{Discussion and Conclusion}\label{sec:conclusion}

We introduced the transported average treatment effect (TATE) and gave two identification strategies---replicated trials and common arm---with doubly robust, semiparametrically efficient estimators and a first-order breakdown frontier for separability violations. The simulation study showed the common-arm strategy's efficiency gains are real but fail when the measurement-time structure fails, while replicated trials remains valid under measurement-time violations and robust to treatment-symmetric contamination.

\textbf{Scope and limitations.} Separability is structural and untestable at a single pair of time points; the specification test, Proposition~\ref{prop:sensitivity}, and the multi-anchor Strategy-1 disagreement diagnostic provide only partial safeguards, and the framework interpolates rather than extrapolates---anchor trials at the target time are required. The rank $R$ is treated as fixed and known, and time as discrete; data-driven rank selection and continuous-time extensions are natural next steps. The framework is most credible when a common temporal driver (seasonality, cohort composition, macroeconomic conditions) governs outcome levels across heterogeneous treatments, and is not appropriate when treatment--time interactions dominate or when the true temporal rank exceeds the available anchors. On the deployment side, the variance-bias tradeoff carries a risk that practitioners drawn to Strategy~2's tighter CIs may silently prefer a biased estimator when its extra measurement-time assumption fails; reporting both estimates together with $\gamma^\star$ is the intended mitigation.

\bibliographystyle{plainnat}
\bibliography{reference}

\newpage
\appendix

\section{Assumption Discussion}
\label{app:assumption_discussion}

This appendix collects supporting material on Assumption~\ref{ass:separable} cut from the main text, and makes explicit the two identification assumptions (Replicated Trials and Common Arm) used in Corollaries~\ref{cor:replicated}--\ref{cor:common_arm}.

\paragraph{Rank and treatment contrasts.}
The rank $R$ in Assumption~\ref{ass:separable} governs the tradeoff between model flexibility and data requirements: larger $R$ accommodates richer treatment-time interactions but requires more anchor trials; $R$ anchor pairs (or arms) with linearly independent effect contrasts suffice (Corollaries~\ref{cor:replicated}--\ref{cor:common_arm}). Under Assumption~\ref{ass:separable}, the individual treatment effect takes the contrast form
\begin{equation*}
Y_{t_1}(a,t_0) - Y_{t_1}(b,t_0) = \tilde{\boldsymbol{\theta}}_{a,b}(X)^\top \boldsymbol{\Lambda}(t_0,t_1), \qquad \tilde{\boldsymbol{\theta}}_{a,b}(X) := \boldsymbol{\theta}_a(X) - \boldsymbol{\theta}_b(X) \in \mathbb{R}^R,
\end{equation*}
so the TATE decomposition (Theorem~\ref{thm:decomposition}) follows because target and source-time ATEs share the same mean contrast and differ only through the temporal effect.

\paragraph{Rank selection in practice.}
When $R$ is not known a priori, a natural data-driven approach is to inspect the spectral structure of the observable anchor-ATE matrix $\mathbf{T}_{\mathrm{anc}}^{\mathrm{source}} \in \mathbb{R}^{m \times T}$ (anchor pairs $\times$ observed source times): the number of singular values clearly separated from the noise floor is a scale-free estimate of $R$. Cross-validation on held-out anchor trials provides an alternative when sufficient anchors are available. In our experiments and Upworthy application we take $R = 1$ throughout; rank-misspecification robustness is checked in Appendix~\ref{app:simulations_extra} (rank-2 DGP, estimator with $R=1$), where both strategies degrade gracefully.

\paragraph{Regularity and rates for Theorem~\ref{thm:asymptotic}.}
The asymptotic results in the main text rely on standard semiparametric regularity conditions:
\begin{assumption}[Regularity]
\label{ass:regularity}
(i) \emph{Overlap}: $\pi_k(X) > \epsilon$ and $e_k(a, X) > \epsilon$ a.s.\ for some $\epsilon > 0$.
(ii) \emph{Bounded moments}: $\mathbb{E}[Y^4] < \infty$.
(iii) \emph{Non-degeneracy and conditioning}: for $R = 1$, $\tau_{j'} \neq 0$ (Strategy 1) or $\bar{\mu}_{c^*,\ell'} \neq 0$ (Strategy 2). For general $R$, the effect-contrast matrix $\boldsymbol{\Theta}$ (resp.\ $\boldsymbol{\Theta}_c$) has smallest singular value bounded below: $\sigma_{\min}(\boldsymbol{\Theta}) \geq c > 0$ for some constant $c$. Equivalently, the condition number $\kappa(\boldsymbol{\Theta}) = \sigma_{\max}(\boldsymbol{\Theta})/\sigma_{\min}(\boldsymbol{\Theta})$ is bounded. Near-collinear anchors inflate the asymptotic variance by a factor of $\kappa(\boldsymbol{\Theta})^2$ even when the $n^{-1/2}$ rate is achieved.
\end{assumption}
\begin{assumption}[Nuisance Estimation Rates]
\label{ass:rates}
$\|\hat{\mu}_{a,k} - \mu_{a,k}\|_2 \cdot (\|\hat{\pi}_k - \pi_k\|_2 + \|\hat{e}_k - e_k\|_2) = o_p(n^{-1/2})$. This product-rate condition is weaker than individual $n^{-1/2}$ rates and is satisfied when each nuisance converges at $n^{-1/4}$, a rate achievable by many machine learning methods with appropriate regularization.
\end{assumption}

\begin{assumption}[Replicated Trials]
\label{ass:replicated}
(i) Each trial draws an independent random sample from $\mathcal{P}$, with $Y(a) \perp S \mid X$. (ii) There exist $R$ anchor pairs $(a_r^*, b_r^*)$, $r = 1, \ldots, R$, each with $\bar{\tilde{\boldsymbol{\theta}}}_{a_r^*, b_r^*} := \mathbb{E}[\tilde{\boldsymbol{\theta}}_{a_r^*, b_r^*}(X)] \neq \mathbf{0}$ and $\boldsymbol{\Lambda}(t_0, t_1) \neq \mathbf{0}$ for all relevant timing pairs, such that the $R \times R$ matrix $\boldsymbol{\Theta} := [\bar{\tilde{\boldsymbol{\theta}}}_{a_1^*, b_1^*}, \ldots, \bar{\tilde{\boldsymbol{\theta}}}_{a_R^*, b_R^*}]^\top$ has full rank. (iii) Both source and target timing belong to the observed timing sets for each anchor pair.
\end{assumption}

\begin{assumption}[Common Arm]
\label{ass:common}
(i) Each trial draws an independent random sample from $\mathcal{P}$. (ii) There exist $R$ anchor arms $c_1^*, \ldots, c_R^* \in \mathcal{A} \cup \{0\}$ with $\bar{\boldsymbol{\theta}}_{c_r^*} := \mathbb{E}[\boldsymbol{\theta}_{c_r^*}(X)] \neq \mathbf{0}$, such that the $R \times R$ matrix $\boldsymbol{\Theta}_c := [\bar{\boldsymbol{\theta}}_{c_1^*}, \ldots, \bar{\boldsymbol{\theta}}_{c_R^*}]^\top$ has full rank. (iii) The source and target measurement times lie in the observed measurement-time set for each anchor arm.
\end{assumption}

The random sampling condition ensures that trial-specific expectations coincide with population expectations. When $R = 1$, both assumptions reduce to requiring a single anchor pair (resp.\ arm) with nonzero mean contrast (resp.\ mean outcome effect). Assumption~\ref{ass:common}(ii) is typically weaker than Assumption~\ref{ass:replicated}(ii) when outcomes are naturally positive, since nonzero mean outcome effects are easier to guarantee than nonzero mean treatment contrasts. Identification proofs under these assumptions are in Appendix~\ref{app:proofs}.

\paragraph{Population drift and non-stationarity.}
Two distinct violations of population stability merit separate treatment. \emph{Compositional drift}---trial-specific covariate distributions $P(X \mid S = k)$ differ across trials, so $X \not\perp S$ at the super-population level---breaks the equivalence between the pooled mean $\mathbb{E}[\mu_{a,k}(X)]$ (the target of score~\eqref{eq:score_mean} under stability) and the trial-$k^\star$ marginal $\mathbb{E}[\mu_{a,k}(X) \mid S = k^\star]$ (the TATE-relevant quantity). To target the trial-$k^\star$ marginal under drift, the score must be re-weighted by the density ratio $\omega(X) := P(S = k^\star \mid X)/P(S = k^\star)$, giving
\begin{align*}
\varphi_{a,k}^{\mathrm{drift}}(O; \eta, \omega) = \omega(X)\left[\frac{\mathbf{1}[S = k]\mathbf{1}[A = a]}{\pi_k(X) e_k(a, X)} (Y - \mu_{a,k}(X)) + \mu_{a,k}(X)\right],
\end{align*}
which targets $\mathbb{E}[\mu_{a,k}(X) \mid S = k^\star]$ consistently (the additional nuisance $\omega$ inherits the usual positivity and product-rate conditions). This is the analog of covariate-shift transportability \citep{dahabreh2019generalizing, stuart2011use}. The population-drift simulation in Appendix~\ref{app:simulations_extra} confirms that without the density-ratio reweighting, covariate drift causes systematic bias whereas the drift-corrected score recovers near-nominal coverage. \emph{Individual non-stationarity}---$\boldsymbol{\theta}_a(X)$ itself changes over time for the same unit---violates the separable-effects structure at the unit level and cannot be resolved by any reweighting. When multiple anchor arms are available, the specification test of Appendix~\ref{app:spec_test} can detect certain non-stationarity patterns: if compositional changes differentially affect anchor arms, the single-anchor ratios $R_{2,j}$ diverge and the test rejects.

\paragraph{Trial-timing exogeneity.}
\label{app:trial_timing}
Identification also requires that the trial scheduling process be independent of the potential outcomes at the population level---the temporal analog of the ignorability assumption in observational causal inference. Formally, let $T_k := (t_{0k}, t_{1k})$ be the random pair encoding when trial $k$ is run, viewed as a function of the scheduling process (editorial calendar, protocol timeline, platform cadence, etc.). Trial-timing exogeneity requires $T_k \perp \{Y_{t_1}(a, t_0)\}_{a, t_0, t_1}$ jointly across the super-population; conditioning on trial membership is not the content of the assumption (conditioning on $S = k$ merely fixes the realization of $T_k$). This is automatically satisfied when trials are scheduled on a predetermined cadence, which is the norm in industrial A/B-testing platforms (daily/weekly/monthly launches) and in phased clinical development (fixed protocol timing). It can fail when experimenters or editors select trial timing based on prior beliefs about effect magnitude---for example, scheduling high-variance headline tests during high-traffic periods to maximize learning, or running drug trials during seasons when disease prevalence is expected to be high. In such settings, the observed ATEs at different times mix temporal scaling with selection-on-expected-effect, and our identification formulas do not recover the TATE. Where available, including temporal-context covariates $Z_t$ (day-of-week indicators, macroeconomic state, traffic volume) in the adjustment set can partially mitigate selection on expected effect. For the Upworthy analysis (Appendix~\ref{app:upworthy}), we follow the editorial-calendar documentation of the archive: tests appear to have been scheduled based on editorial publishing pipelines rather than expected-effect selection, consistent with exogeneity.

\section{Strategy Selection Protocol}
\label{app:strategy_selection}

When both strategies are feasible, we recommend the following four-step protocol:
\begin{enumerate}[leftmargin=*, itemsep=2pt]
    \item Compute both $\hat{\psi}_1$ (Strategy~1) and $\hat{\psi}_2$ or $\hat{\psi}_2^*$ (Strategy~2, optionally with multiple anchors).
    \item If $m \geq 2$ anchor arms are available, run the specification test of Appendix~\ref{app:spec_test} for Assumption~\ref{ass:measurement}.
    \item If the test rejects (or a substantive prior on lag-dependent dynamics suggests it should), prefer $\hat{\psi}_1$: it accommodates $\boldsymbol{\Lambda}(t_0, t_1)$ with arbitrary dependence on both administration and measurement times.
    \item If the test does not reject and the measurement lag $t_1 - t_0$ is fixed by design (the norm in digital experimentation, where measurement windows are standardized), prefer $\hat{\psi}_2$ for its $2$--$3\times$ efficiency advantage.
\end{enumerate}
In all cases, report both estimates: substantial disagreement is itself a diagnostic. When the specification test is inconclusive and both estimates are similar, an inverse-variance-weighted shrinkage combination $\hat{\psi}_{\mathrm{shr}} = w \hat{\psi}_1 + (1 - w)\hat{\psi}_2$ with $w = V_{\hat{\psi}_2}/(V_{\hat{\psi}_1} + V_{\hat{\psi}_2})$ provides a finite-sample compromise, though it loses the robustness guarantees of $\hat{\psi}_1$ when Assumption~\ref{ass:measurement} fails. We do not pursue shrinkage further in this paper.

\paragraph{Anchor selection and conditioning.}
The asymptotic variance of both strategies and the partial-identification bound of Proposition~\ref{prop:sensitivity} scale with the condition number $\kappa(\boldsymbol{\Theta}) = \sigma_{\max}(\boldsymbol{\Theta})/\sigma_{\min}(\boldsymbol{\Theta})$ (or $\kappa(\boldsymbol{\Theta}_c)$ for Strategy~2). Ill-conditioned anchors simultaneously inflate $V$ by a factor of $\kappa^2$ (Theorem~\ref{thm:asymptotic}) and the first-order bias bound $B_R(\gamma)$ by a factor of $\kappa \cdot (1 + \|\bar{\tilde{\boldsymbol{\theta}}}_{k^\star}\|/\sigma_{\min}(\boldsymbol{\Theta}))$ (Proposition~\ref{prop:sensitivity}). In the limit $\sigma_{\min}(\boldsymbol{\Theta}) \to 0$ both diverge---consistent with the underlying population estimand itself becoming unbounded because Corollary~\ref{cor:replicated} requires $\boldsymbol{\Theta}^{-1}$. When multiple candidate anchor sets are available we therefore recommend: (i) estimate $\hat{\boldsymbol{\Theta}}$ on source-time data using the DR score of Equation~\eqref{eq:score_mean} applied to every candidate anchor pair, (ii) prefer the anchor set that maximizes $\sigma_{\min}(\hat{\boldsymbol{\Theta}})$ (equivalently, minimizes $\kappa(\hat{\boldsymbol{\Theta}})$) among sets with sufficient sample size per arm, and (iii) report $\kappa(\hat{\boldsymbol{\Theta}})$ alongside $\gamma^\star$ as a transparent precision-robustness summary. When no candidate set achieves a well-conditioned $\hat{\boldsymbol{\Theta}}$ (e.g., all candidate anchor pairs yield $\kappa(\hat{\boldsymbol{\Theta}}) > 10$ for $R = 2$ or beyond), the data do not support rank-$R$ identification with the available anchors, and the practitioner should lower $R$ or report an honestly wide identified set rather than a point estimate whose precision is artefactual.

\section{$R = 1$ Specialization: A Consolidated Reference}
\label{app:R1_case}

The main text states every result for general $R$. In practice, $R = 1$ is the dominant regime: outcomes have a single temporal effect, and all matrices collapse to scalars. This appendix collects the scalar specializations of the paper's main results in one place so practitioners implementing the $R = 1$ case can read off the estimators directly.

\paragraph{Notation for this appendix.}
When $R = 1$, $\boldsymbol{\Lambda}(t_0, t_1) \in \mathbb{R}$, $\boldsymbol{\Theta} = \bar{\tilde{\theta}}_{\mathrm{anc}} \in \mathbb{R}$ (the mean anchor-pair contrast), $\boldsymbol{\Theta}_c = \bar{\theta}_{c^*} \in \mathbb{R}$ (the mean anchor-arm effect), and $\mathbf{M} = \Lambda^{\mathrm{source}} \in \mathbb{R}$. Let $j, j'$ index the anchor pair at source and target times for Strategy~1, and let $\ell, \ell'$ index source and target measurement times for Strategy~2.

\paragraph{TATE decomposition (Theorem~\ref{thm:decomposition}).}
The decomposition~\eqref{eq:tate_decomp} becomes a scalar product of the observed source-time ATE and a temporal ratio:
\begin{equation}
\label{eq:tate_decomp_r1}
\tau_{k^\star}(\delta_0, \delta_1) = \tau_{k^\star}(0, 0) \cdot \frac{\Lambda(t_{0k^\star} + \delta_0, t_{1k^\star} + \delta_1)}{\Lambda(t_{0k^\star}, t_{1k^\star})}.
\end{equation}

\paragraph{Strategy 1 identification (Corollary~\ref{cor:replicated}).}
The matrix identity $\boldsymbol{\Lambda}^{\mathrm{target}} = \boldsymbol{\Theta}^{-1}\boldsymbol{\tau}^{\mathrm{target}}$ becomes a scalar ratio of ATEs from the replicated anchor pair:
\begin{equation}
\label{eq:S1_ratio_r1}
\frac{\Lambda^{\mathrm{target}}}{\Lambda^{\mathrm{source}}} = \frac{\tau_j}{\tau_{j'}},
\end{equation}
where $\tau_{j'}$ is the anchor-pair ATE at source time and $\tau_j$ at target time. The TATE estimand is $\psi_1 = \tau_{k^\star} \cdot \tau_j / \tau_{j'}$.

\paragraph{Strategy 2 identification (Corollary~\ref{cor:common_arm}).}
The temporal ratio is identified from the anchor-arm marginal means at the two measurement times:
\begin{equation}
\label{eq:S2_ratio_r1}
\frac{\Lambda(t_{1k^\star} + \delta_1)}{\Lambda(t_{1k^\star})} = \frac{\bar{\mu}_{c^*, \ell}}{\bar{\mu}_{c^*, \ell'}},
\end{equation}
and $\psi_2 = \tau_{k^\star} \cdot \bar{\mu}_{c^*,\ell}/\bar{\mu}_{c^*,\ell'}$.

\paragraph{Efficient influence functions (Proposition~\ref{prop:eif}).}
The general-$R$ EIF~\eqref{eq:eif_general} collapses to a three-term scalar expression for each strategy. Writing $R_1 := \tau_j/\tau_{j'}$ and $R_2 := \bar{\mu}_{c^*,\ell}/\bar{\mu}_{c^*,\ell'}$,
\begin{align}
\phi_{\psi_1}(O) &= R_1 \phi_{\tau_{k^\star}}(O) + \frac{\psi_1}{\tau_j} \phi_{\tau_j}(O) - \frac{\psi_1}{\tau_{j'}} \phi_{\tau_{j'}}(O), \label{eq:eif_psi1} \\
\phi_{\psi_2}(O) &= R_2 \phi_{\tau_{k^\star}}(O) + \frac{\psi_2}{\bar{\mu}_{c^*,\ell}} \phi_{\bar{\mu}_{c^*,\ell}}(O) - \frac{\psi_2}{\bar{\mu}_{c^*,\ell'}} \phi_{\bar{\mu}_{c^*,\ell'}}(O). \label{eq:eif_psi2}
\end{align}
Both follow from the chain rule applied to $g(x, y, z) = xy/z$; see the derivation below.

\begin{proof}[Derivation of~\eqref{eq:eif_psi1}]
With $R = 1$, $\psi_1 = g(\tau_{k^\star}, \tau_j, \tau_{j'}) = \tau_{k^\star}\tau_j/\tau_{j'}$. The gradient is $\nabla g = (\tau_j/\tau_{j'},\; \tau_{k^\star}/\tau_{j'},\; -\tau_{k^\star}\tau_j/\tau_{j'}^2)^\top = (R_1,\; \psi_1/\tau_j,\; -\psi_1/\tau_{j'})^\top$. Applying the functional delta method with building-block EIFs $\phi_{\tau_{k^\star}}$, $\phi_{\tau_j}$, $\phi_{\tau_{j'}}$ yields~\eqref{eq:eif_psi1}. Strategy~2 is identical with $(\tau_j, \tau_{j'})$ replaced by $(\bar{\mu}_{c^*,\ell}, \bar{\mu}_{c^*,\ell'})$. This is the $R=1$ specialization of the rotation-invariant form~\eqref{eq:eif_general}: with $\mathbf{T}_{\mathrm{anc}}^{\mathrm{source}} \to \tau_{j'}$ (scalar) and $\boldsymbol{\tau}_{k^\star}^{\mathrm{source}} \to \tau_{k^\star}$ (scalar), the first term becomes $(\tau_{j'})^{-1}\tau_j \phi_{\tau_{k^\star}} = R_1\phi_{\tau_{k^\star}}$, the second becomes $(\tau_{k^\star}/\tau_{j'})\phi_{\tau_j} = (\psi_1/\tau_j)\phi_{\tau_j}$, and the third becomes $-(\tau_{k^\star}/\tau_{j'})(1/\tau_{j'})\tau_j\phi_{\tau_{j'}} = -(\psi_1/\tau_{j'})\phi_{\tau_{j'}}$.
\end{proof}

\paragraph{Multi-anchor combination (Section~\ref{sec:estimation}, \protect\eqref{eq:est_S1_R}).}
With $R = 1$ and $m$ anchor arms, the GLS solution $\hat{\boldsymbol{\Lambda}}^{\mathrm{target}} = (\boldsymbol{\Theta}_c^\top \mathbf{V}_{\boldsymbol{\mu}}^{-1} \boldsymbol{\Theta}_c)^{-1} \boldsymbol{\Theta}_c^\top \mathbf{V}_{\boldsymbol{\mu}}^{-1} \hat{\boldsymbol{\mu}}^{\mathrm{target}}$ reduces to inverse-variance weighting over the $m$ single-anchor ratios $R_{2,j} = \bar{\mu}_{c_j, \ell}/\bar{\mu}_{c_j, \ell'}$:
\begin{equation}
w^* = \frac{\mathbf{V}_R^{-1}\mathbf{1}}{\mathbf{1}^\top \mathbf{V}_R^{-1}\mathbf{1}}, \qquad \hat{\psi}_2^* = \hat{\tau}_{k^\star} \cdot {w^*}^\top \hat{\mathbf{R}},
\end{equation}
where $\hat{\mathbf{R}} = (\hat{R}_{2,1}, \ldots, \hat{R}_{2,m})^\top$ and $\mathbf{V}_R$ is the asymptotic covariance of $\hat{\mathbf{R}}$. The specification statistic $Q$ of Section~\ref{sec:estimation} retains its $\chi^2_{m - 1}$ null distribution.

\paragraph{Breakdown frontier (Proposition~\ref{prop:sensitivity}, Corollary~\ref{cor:sensitivity_r1}).}
The general-$R$ bound $B_R(\gamma)$ in~\eqref{eq:bias_bounds} collapses to the scalar form $B(\gamma) = 2\gamma(1+|\rho|)(1 + |\bar{\tilde{\theta}}|/|\bar{\tilde{\theta}}_{\mathrm{anc}}|)$ in Corollary~\ref{cor:sensitivity_r1}: $\kappa(\boldsymbol{\Theta}) = 1$, $\sigma_{\min}(\boldsymbol{\Theta}) = |\bar{\tilde{\theta}}_{\mathrm{anc}}|$, and the condition-number amplification disappears. The tighter contrast-level bound $\tilde{B}(\tilde{\gamma}) = \tilde{\gamma}\, (1+|\rho|)(1 + |\bar{\tilde{\theta}}|/|\bar{\tilde{\theta}}_{\mathrm{anc}}|)$ uses the contrast-level sensitivity $\tilde{\gamma} \leq 2\gamma$ and vanishes under treatment-invariant contamination. The Strategy~2 analog is $|\mathrm{Bias}_2| \leq \gamma(1+|\rho_2|)(2 + |\bar{\tilde{\theta}}|/|\bar{\theta}_{c^*}|)$.

\section{Related Work}
\label{app:related}

Having established our framework, we now position it relative to existing literature. Our contribution sits at the intersection of several research streams, but differs from each in important ways.

\paragraph{Transportability and external validity.} The literature on generalizing experimental findings to new populations \citep{huang2024towards, degtiar2023review, bareinboim2016causal, cole2010generalizing, westreich2017transportability} typically assumes access to covariate information in the target population and leverages selection-on-observables and positivity assumptions. Our setting differs in that the ``target'' is defined by \textit{temporal shift} rather than \textit{covariate shift}. 

\paragraph{Meta-analysis.}
Meta-analytic methods \citep{borenstein2021introduction, higgins2009re, riley2011interpretation, parikh2025double} synthesize evidence across studies and can incorporate time as a study-level moderator. However, the goal is typically \emph{pooling}---estimating an average effect or characterizing heterogeneity across observed studies---rather than \emph{extrapolation} to counterfactual timing. Hierarchical Bayesian meta-analysis \citep{meager2019} is the closest relative: a model of the form $\hat{\tau}_k \sim \mathcal{N}(\theta_k, \sigma_k^2),\; \theta_k \sim \mathcal{N}(\alpha + \beta t_k, \tau^2)$ does produce a posterior for the effect at an unobserved target time, so it shares our extrapolation target. The tradeoff is where each approach places its structural commitment: our frequentist factor model commits to multiplicative separability but derives the temporal transport from \emph{observed} anchor trials at both source and target times; the Bayesian alternative commits to a parametric prior over how effects drift (e.g., linear in time with normal residuals) and lets the prior do the temporal extrapolation even when no anchor data spans source and target. Neither dominates: practitioners with credible anchor trials at target time should prefer the identification-based approach we develop, while practitioners with credible domain priors on temporal structure but no such anchors should prefer the Bayesian alternative. A detailed empirical comparison is beyond the scope of this paper's identification-focused contribution.

\paragraph{Factor models, synthetic control, and difference-in-differences.}
Our separable temporal effects assumption (Assumption~\ref{ass:separable}) shares structural similarities with several literatures that leverage factor structure for causal inference. Interactive fixed effects models in panel econometrics \citep{bai2009panel} decompose outcomes into unit-specific loadings and time-specific factors; matrix completion approaches \citep{athey2021matrix} view the potential outcome matrix as approximately low-rank; and synthetic control \citep{abadie2010synthetic, abadie2015comparative} and difference-in-differences methods \citep{callaway2021difference, sun2021estimating, goodman2021difference, dechaisemartin2020two, borusyak2024revisiting} implicitly rely on factor structure to construct counterfactuals, a connection made explicit by \citet{xu2017generalized} and \citet{arkhangelsky2021synthetic}. 

Our separability assumption (A.\ref{ass:separable}) is similar to this broader literature. However, our setting and goals differ in three key respects. \emph{First}, these methods estimate effects or impute missing outcomes \emph{within} the observed temporal support of the study; we \emph{extrapolate} to timing configurations where the relevant trial was not conducted. \emph{Second}, our setup involves integrating information from multiple trials comparing different intervention or treatment pairs. \emph{Third}, existing methods use outcomes from control units to impute missing potential outcome for the treated unit; we instead use outcomes from trials with \emph{entirely different treatments} to identify how effects scale across time.

\paragraph{Semiparametric estimation.}
Our estimation approach builds on the semiparametric efficiency literature \citep{robins1994estimation, bang2005doubly, kennedy2022semiparametric, parikh2025double}, particularly doubly robust methods and debiased machine learning \citep{chernozhukov2018double}.

\paragraph{Sensitivity analysis and partial identification.}
Our breakdown frontier framework draws on Rosenbaum's sensitivity analysis for unmeasured confounding \citep{rosenbaum2002observational}, Manski's partial identification theory \citep{manski2003partial}, and recent work on breakdown frontiers \citep{masten2021salvaging, cinelli2020making, dorn2022sharp}. We adapt this approach to the temporal transportation setting, where the sensitivity parameter measures non-separable contamination of treatment effects rather than unmeasured confounding.

\section{Proofs for Identification Results}
\label{app:proofs}

\begin{proof}[Proof of Theorem~\ref{thm:decomposition}]
Under random assignment and SUTVA, the observed ATE is identified by
\begin{align*}
\tau_{k^\star}(0,0) &= \mathbb{E}[Y_{t_{1k^\star}}(a_{k^\star}, t_{0k^\star}) - Y_{t_{1k^\star}}(b_{k^\star}, t_{0k^\star}) \mid S = k^\star] \\
&= \mathbb{E}[\mathbb{E}[Y_{t_{1k^\star}} \mid A = a_{k^\star}, S = k^\star, X] - \mathbb{E}[Y_{t_{1k^\star}} \mid A = b_{k^\star}, S = k^\star, X] \mid S = k^\star].
\end{align*}
(Under simple randomization, $A \perp X \mid S=k^\star$ and this iterated expectation collapses to the unadjusted difference; under covariate-adaptive designs the iterated-expectation form is the correct one.) Under Assumption~\ref{ass:separable} with rank $R$,
\begin{align*}
\mathbb{E}[Y_{t_1}(a, t_0) - Y_{t_1}(b, t_0) \mid X] &= \boldsymbol{\theta}_a(X)^\top \boldsymbol{\Lambda}(t_0, t_1) - \boldsymbol{\theta}_b(X)^\top \boldsymbol{\Lambda}(t_0, t_1) \\
&= \tilde{\boldsymbol{\theta}}_{a,b}(X)^\top \boldsymbol{\Lambda}(t_0, t_1),
\end{align*}
where $\tilde{\boldsymbol{\theta}}_{a,b}(X) := \boldsymbol{\theta}_a(X) - \boldsymbol{\theta}_b(X) \in \mathbb{R}^R$. Therefore
\begin{align*}
\tau_{k^\star}(0,0) &= \mathbb{E}[\tilde{\boldsymbol{\theta}}_{a_{k^\star}, b_{k^\star}}(X)^\top \boldsymbol{\Lambda}(t_{0k^\star}, t_{1k^\star}) \mid S = k^\star] \\
&= \bar{\tilde{\boldsymbol{\theta}}}_{a_{k^\star}, b_{k^\star}, k^\star}^\top \boldsymbol{\Lambda}(t_{0k^\star}, t_{1k^\star}),
\end{align*}
where $\bar{\tilde{\boldsymbol{\theta}}}_{a_{k^\star}, b_{k^\star}, k^\star} := \mathbb{E}[\tilde{\boldsymbol{\theta}}_{a_{k^\star}, b_{k^\star}}(X) \mid S = k^\star] \in \mathbb{R}^R$.

Similarly, the transported ATE satisfies:
\begin{align*}
\tau_{k^\star}(\delta_0, \delta_1) &= \mathbb{E}[Y_{t_{1k^\star} + \delta_1}(a_{k^\star}, t_{0k^\star} + \delta_0) - Y_{t_{1k^\star} + \delta_1}(b_{k^\star}, t_{0k^\star} + \delta_0) \mid S = k^\star] \\
&= \bar{\tilde{\boldsymbol{\theta}}}_{a_{k^\star}, b_{k^\star}, k^\star}^\top \boldsymbol{\Lambda}(t_{0k^\star} + \delta_0, t_{1k^\star} + \delta_1),
\end{align*}
which yields~\eqref{eq:tate_decomp}. When $R = 1$, dividing gives the scalar ratio form: $\tau_{k^\star}(\delta_0, \delta_1)/\tau_{k^\star}(0,0) = \Lambda(t_{0k^\star} + \delta_0, t_{1k^\star} + \delta_1)/\Lambda(t_{0k^\star}, t_{1k^\star})$.
\end{proof}

\begin{proof}[Proof of Corollary~\ref{cor:replicated}]
Under the random sampling condition in Assumption~\ref{ass:replicated}(i), each trial draws an independent sample from the common super-population $\mathcal{P}$, so $X \perp S$ at the population level. Consequently, for any trial $\ell$ comparing anchor pair $(a_r^*, b_r^*)$:
\begin{align*}
\bar{\tilde{\boldsymbol{\theta}}}_{a_r^*, b_r^*, \ell} := \mathbb{E}[\tilde{\boldsymbol{\theta}}_{a_r^*, b_r^*}(X) \mid S = \ell] = \mathbb{E}[\tilde{\boldsymbol{\theta}}_{a_r^*, b_r^*}(X)] =: \bar{\tilde{\boldsymbol{\theta}}}_{a_r^*, b_r^*},
\end{align*}
where the second equality follows purely from $X \perp S$ (not from transportability of $Y(a)$).

By the same argument as in Theorem~\ref{thm:decomposition}, for any trial $\ell$ comparing $(a_r^*, b_r^*)$ at time $(t_{0\ell}, t_{1\ell})$:
\begin{align*}
\tau_\ell(0,0) = \bar{\tilde{\boldsymbol{\theta}}}_{a_r^*, b_r^*}^\top \boldsymbol{\Lambda}(t_{0\ell}, t_{1\ell}).
\end{align*}

Stacking the $R$ anchor pairs at the target time yields the linear system:
\begin{align*}
\boldsymbol{\tau}^{\mathrm{target}} = \boldsymbol{\Theta} \boldsymbol{\Lambda}^{\mathrm{target}},
\end{align*}
where $\boldsymbol{\Theta} = [\bar{\tilde{\boldsymbol{\theta}}}_{a_1^*, b_1^*}, \ldots, \bar{\tilde{\boldsymbol{\theta}}}_{a_R^*, b_R^*}]^\top \in \mathbb{R}^{R \times R}$. By Assumption~\ref{ass:replicated}(ii), $\boldsymbol{\Theta}$ has full rank, so $\boldsymbol{\Lambda}^{\mathrm{target}} = \boldsymbol{\Theta}^{-1}\boldsymbol{\tau}^{\mathrm{target}}$.

When $R = 1$, $\boldsymbol{\Theta}$ reduces to the scalar $\bar{\tilde{\theta}}_{a_1^*, b_1^*} = \bar{\tilde{\theta}}_{\mathrm{anc}}$, and the system becomes $\Lambda^{\mathrm{target}} = \tau_\ell / \bar{\tilde{\theta}}_{\mathrm{anc}}$. Taking the ratio with the source-time equation gives $\Lambda^{\mathrm{target}}/\Lambda^{\mathrm{source}} = \tau_\ell/\tau_{\ell'}$.
\end{proof}

\begin{proof}[Proof of Corollary~\ref{cor:common_arm}]
Under Assumption~\ref{ass:separable} with treatment $c_r^*$:
\begin{align*}
\mathbb{E}[Y_{t_1}(c_r^*, t_0) \mid X] = \boldsymbol{\theta}_{c_r^*}(X)^\top \boldsymbol{\Lambda}(t_0, t_1).
\end{align*}

Under Assumption~\ref{ass:measurement}, $\boldsymbol{\Lambda}(t_0, t_1) = \boldsymbol{\Lambda}(t_1)$, so:
\begin{align*}
\mathbb{E}[Y_{t_1}(c_r^*, t_0) \mid X] = \boldsymbol{\theta}_{c_r^*}(X)^\top \boldsymbol{\Lambda}(t_1).
\end{align*}

For units receiving treatment $c_r^*$ in trial $\ell \in \mathcal{K}[c_r^*]$, by random assignment:
\begin{align*}
\mathbb{E}[Y_{t_{1\ell}} \mid A = c_r^*, S = \ell] &= \mathbb{E}[Y_{t_{1\ell}}(c_r^*, t_{0\ell}) \mid S = \ell] \\
&= \mathbb{E}\bigl[\boldsymbol{\theta}_{c_r^*}(X)^\top \boldsymbol{\Lambda}(t_{1\ell}) \bigm| S = \ell\bigr] \\
&= \bar{\boldsymbol{\theta}}_{c_r^*}^\top \boldsymbol{\Lambda}(t_{1\ell}),
\end{align*}
where the last equality uses the compositional-stability condition $\bar{\boldsymbol{\theta}}_{c_r^*} := \mathbb{E}[\boldsymbol{\theta}_{c_r^*}(X) \mid S = \ell] = \mathbb{E}[\boldsymbol{\theta}_{c_r^*}(X)]$ for all $\ell$.

Stacking the $R$ anchor arms at the target measurement time yields:
\begin{align*}
\boldsymbol{\mu}^{\mathrm{target}} = \boldsymbol{\Theta}_c \boldsymbol{\Lambda}(t_{1k^\star} + \delta_1),
\end{align*}
where $\boldsymbol{\Theta}_c = [\bar{\boldsymbol{\theta}}_{c_1^*}, \ldots, \bar{\boldsymbol{\theta}}_{c_R^*}]^\top \in \mathbb{R}^{R \times R}$. By Assumption~\ref{ass:common}(ii), $\boldsymbol{\Theta}_c$ has full rank, so $\boldsymbol{\Lambda}(t_{1k^\star} + \delta_1) = \boldsymbol{\Theta}_c^{-1}\boldsymbol{\mu}^{\mathrm{target}}$.

When $R = 1$, $\boldsymbol{\Theta}_c$ is the scalar $\bar{\theta}_{c^*}$, and the ratio of target-time to source-time equations gives $\Lambda(t_{1k^\star} + \delta_1)/\Lambda(t_{1k^\star}) = \bar{\mu}_{c^*,\ell}/\bar{\mu}_{c^*,\ell'}$.
\end{proof}

\section{Proofs for Estimation Results}
\label{app:estimation_proofs}

\subsection{Properties of Doubly Robust Scores}
\label{app:dr_properties}

\begin{lemma}[Properties of Doubly Robust Scores]
\label{lem:dr_scores}
Under random assignment, SUTVA, and Assumption~\ref{ass:regularity}(i)--(ii), the score $\varphi_{a,k}(O; \eta)$ defined in~\eqref{eq:score_mean} satisfies:
\begin{enumerate}
    \item[(i)] \emph{Unbiasedness}: $\mathbb{E}[\varphi_{a,k}(O; \eta)] = \bar{\mu}_{a,k}$.
    \item[(ii)] \emph{Neyman orthogonality}: The pathwise derivative with respect to each nuisance function vanishes at the truth.
    \item[(iii)] \emph{Bounded variance}: $\mathbb{E}[\varphi_{a,k}(O; \eta)^2] < \infty$.
\end{enumerate}
Consequently, $\phi_{\bar{\mu}_{a,k}}(O) = \varphi_{a,k}(O; \eta) - \bar{\mu}_{a,k}$ is the efficient influence function for $\bar{\mu}_{a,k}$.
\end{lemma}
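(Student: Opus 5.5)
The argument proceeds by direct iterated-expectation calculations on the score~\eqref{eq:score_mean}, followed by the standard tangent-space characterization of the efficient influence function. Write $W := \mathbf{1}[S=k]\mathbf{1}[A=a]/(\pi_k(X)e_k(a,X))$, so that $\varphi_{a,k} = W(Y-\mu_{a,k}(X)) + \mu_{a,k}(X)$.

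\emph{Unbiasedness.} Condition on $X$ and note that
\[
\mathbb{E}[\mathbf{1}[S=k]\mathbf{1}[A=a](Y-\mu_{a,k}(X))\mid X] = \pi_k(X)e_k(a,X)\,\mathbb{E}[Y-\mu_{a,k}(X)\mid A=a,S=k,X] = 0
\]
by the definition of $\mu_{a,k}$. Overlap in Assumption~\ref{ass:regularity}(i) guarantees that the division by $\pi_k e_k$ is legitimate, so the weighted residual term has mean zero. What remains is $\mathbb{E}[\mu_{a,k}(X)] = \bar{\mu}_{a,k}$. Random assignment and SUTVA are needed only for the interpretation: they give $\mu_{a,k}(X) = \mathbb{E}[Y_{t_{1k}}(a,t_{0k})\mid S=k,X]$, so that $\bar{\mu}_{a,k}$ is the causal marginal mean.

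\emph{Neyman orthogonality.} Perturb each nuisance along a path $\eta_r = \eta + r h$ and differentiate at $r=0$.
\begin{itemize}
\item For $\mu$: the derivative is $\mathbb{E}[(1-W)h_\mu(X)]$, and $\mathbb{E}[W\mid X]=1$, so this vanishes.
\item For $\pi_k$: the derivative is $-\mathbb{E}[W\,h_\pi(X)/\pi_k(X)\,(Y-\mu_{a,k}(X))]$.
\item For $e_k$: the derivative is $-\mathbb{E}[W\,h_e(X)/e_k(a,X)\,(Y-\mu_{a,k}(X))]$.
\end{itemize}
The last two vanish for the same reason: conditioning on $(X, S=k, A=a)$ kills the residual. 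This orthogonality is also what the mixed-bias product-rate condition in Theorem~\ref{thm:asymptotic} later exploits. Expanding the error of the score at $(\hat\eta)$ gives a second-order remainder of the form $\mathbb{E}[(\pi_k e_k/\hat\pi_k\hat e_k - 1)(\mu_{a,k}-\hat\mu_{a,k})]$, and I would record that identity here as well.

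\emph{Bounded variance.} Use $(u+v)^2\le 2u^2+2v^2$ together with $W\le \epsilon^{-2}$ from overlap. Then $\mathbb{E}[\varphi^2]\lesssim \epsilon^{-4}\mathbb{E}[Y^2] + \mathbb{E}[\mu_{a,k}(X)^2]$. The second term is bounded by $\mathbb{E}[Y^2]$ by Jensen's inequality, and $\mathbb{E}[Y^2]$ is finite because $\mathbb{E}[Y^4]<\infty$ by Assumption~\ref{ass:regularity}(ii).

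\emph{Efficiency claim (the main obstacle).} The functional $\bar{\mu}_{a,k} = \mathbb{E}[\mu_{a,k}(X)]$ must be treated in the nonparametric model for $O=(X,S,A,Y)$. I would factor the density as $p(x)\,p(s\mid x)\,p(a\mid s,x)\,p(y\mid a,s,x)$; the tangent space is then the orthogonal sum of the four corresponding score spaces. Next I compute the pathwise derivative of $\Psi(P_r)=\int \mu_{a,k,r}(x)\,dP_r(x)$ along a submodel with score $g = g_X + g_S + g_A + g_Y$. It equals
\[
\mathbb{E}[(\mu_{a,k}(X)-\bar\mu_{a,k})g_X] + \mathbb{E}\bigl[\mathbb{E}[(Y-\mu_{a,k})g_Y\mid A=a,S=k,X]\bigr],
\]
and the second piece is rewritten as $\mathbb{E}[W(Y-\mu_{a,k})g_Y]$. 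Checking that $\varphi_{a,k}-\bar\mu_{a,k}$ reproduces this derivative, i.e.\ that its inner product with $g_S$ and $g_A$ is zero, identifies it as a gradient. Because the model is locally saturated, this gradient is the unique one and hence the efficient influence function. Two subtleties need care. First, $e_k$ is known by design in an RCT; restricting $g_A$ does not change the projection, since $\varphi$ is already orthogonal to the $A$-scores. Second, the target is the pooled mean rather than the trial-$k$ conditional, which is why $\mu_{a,k}$ sits outside the trial weight and no $1/P(S=k)$ factor appears.
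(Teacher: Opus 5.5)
Your arguments for (i)--(iii) match the paper's proof: iterated expectations for unbiasedness, Gateaux derivatives along linear paths in $\mu_{a,k}$, $e_k$ and $\pi_k$ with the residual removed by conditioning on $(X,S=k,A=a)$, and overlap plus the moment condition for the variance.

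The genuine difference is the efficiency clause. The paper gives no argument beyond the word ``consequently''. Implicitly it relies on two facts. First, unbiasedness at every $P$ together with orthogonality along all nuisance paths makes $\varphi_{a,k}-\bar{\mu}_{a,k}$ represent the pathwise derivative, so it is a gradient. Second, gradients are unique in the saturated model. Your tangent-space computation supplies both steps explicitly. You factorize the density, obtain the derivative $\mathbb{E}[(\mu_{a,k}(X)-\bar{\mu}_{a,k})g_X]+\mathbb{E}[W(Y-\mu_{a,k}(X))g_Y]$, and check that $\varphi_{a,k}-\bar{\mu}_{a,k}$ lies in the span of the $X$-scores and the $Y\mid(A,S,X)$-scores.

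This buys more than rigor. Because the gradient lies in that span, it remains the efficient influence function when $p(a\mid s,x)$ is known by design or $p(s\mid x)$ is restricted by compositional stability. It also makes explicit that the claim is relative to a model with no cross-trial restriction on the outcome regressions; imposing separability would shrink the $Y$-score space.

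Your version has two further advantages. In (i) you avoid the paper's appeal to compositional stability, which the lemma does not list among its hypotheses and which is needed only to equate the pooled mean $\mathbb{E}[\mu_{a,k}(X)]$ with the trial-$k$ marginal. And your recorded remainder $\mathbb{E}[(\pi_k e_k/(\hat{\pi}_k\hat{e}_k)-1)(\mu_{a,k}-\hat{\mu}_{a,k})]$ is exactly what the later product-rate condition exploits.

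One small correction in (iii). Jensen gives $\mu_{a,k}(X)^2\le\mathbb{E}[Y^2\mid A=a,S=k,X]$, but you then average over the marginal law of $X$, not the law of $X$ given $A=a,S=k$. The correct bound is therefore $\mathbb{E}[\mu_{a,k}(X)^2]\le\mathbb{E}[WY^2]\le\epsilon^{-2}\mathbb{E}[Y^2]$. Overlap is used a second time, but finiteness is unaffected.
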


\begin{proof}
\textit{Part (i): Unbiasedness.} We compute the expectation of each term in~\eqref{eq:score_mean}. For the first term:
\begin{align*}
\mathbb{E}\left[\frac{\mathbf{1}[S = k]}{\pi_k(X)} \cdot \frac{\mathbf{1}[A = a]}{e_k(a, X)} (Y - \mu_{a,k}(X))\right]
&= \mathbb{E}\!\left[\frac{\mathbf{1}[S = k]}{\pi_k(X)} \cdot \frac{\mathbf{1}[A = a]}{e_k(a, X)} \mathbb{E}[Y - \mu_{a,k}(X) \mid A, S, X]\right] \\
&= 0,
\end{align*}
since $\mathbb{E}[Y \mid A = a, S = k, X] = \mu_{a,k}(X)$ by definition. The second term directly yields the target:
\[
\mathbb{E}[\mu_{a,k}(X)] = \bar{\mu}_{a,k}
\]
under compositional stability ($X \perp S$), which equates the pooled mean $\mathbb{E}[\mu_{a,k}(X)]$ with the trial-$k$ marginal $\mathbb{E}[Y \mid A = a, S = k]$.

\textit{Part (ii): Neyman orthogonality.} We verify that the pathwise derivative with respect to each nuisance component vanishes at the true parameter values.

\textit{Derivative with respect to $\mu_{a,k}$}: Let $\mu_{a,k}^r = \mu_{a,k} + r(\tilde{\mu}_{a,k} - \mu_{a,k})$ for $r \in [0,1]$. Differentiating both terms of~\eqref{eq:score_mean}:
\begin{align*}
\frac{\partial}{\partial r} \mathbb{E}[\varphi_{a,k}(O; \pi_k, e_k, \mu_{a,k}^r)]\Big|_{r=0} = \mathbb{E}\left[\left(1 - \frac{\mathbf{1}[S = k]}{\pi_k(X)} \cdot \frac{\mathbf{1}[A = a]}{e_k(a, X)}\right) (\tilde{\mu}_{a,k}(X) - \mu_{a,k}(X))\right].
\end{align*}
Conditioning on $X$:
\begin{align*}
\mathbb{E}\left[\frac{\mathbf{1}[S = k]\mathbf{1}[A = a]}{\pi_k(X) e_k(a, X)} \,\Big|\, X\right] &= \frac{P(S = k \mid X)\cdot P(A = a \mid S = k, X)}{\pi_k(X) e_k(a, X)} = 1,
\end{align*}
so the derivative equals zero.

\textit{Derivative with respect to $e_k$}: Let $e_k^r(a, X) = e_k(a, X) + r(\tilde{e}_k(a, X) - e_k(a, X))$. Only the first term of~\eqref{eq:score_mean} depends on $e_k$:
\begin{align*}
\frac{\partial}{\partial r} \mathbb{E}[\varphi_{a,k}(O; \pi_k, e_k^r, \mu_{a,k})]\Big|_{r=0}
&= \mathbb{E}\!\left[\frac{\mathbf{1}[S = k]}{\pi_k(X)} \cdot \frac{-\mathbf{1}[A = a]}{e_k(a, X)^2} (Y - \mu_{a,k}(X)) \right. \\
&\qquad \left. \cdot (\tilde{e}_k(a, X) - e_k(a, X))\right].
\end{align*}
Conditioning on $(S, X)$ and using $\mathbb{E}[Y - \mu_{a,k}(X) \mid A = a, S = k, X] = 0$ makes the conditional expectation zero.

\textit{Derivative with respect to $\pi_k$}: Let $\pi_k^r(X) = \pi_k(X) + r(\tilde{\pi}_k(X) - \pi_k(X))$. Only the first term of~\eqref{eq:score_mean} depends on $\pi_k$:
\begin{align*}
\frac{\partial}{\partial r} \mathbb{E}[\varphi_{a,k}(O; \pi_k^r, e_k, \mu_{a,k})]\Big|_{r=0}
&= \mathbb{E}\!\left[\frac{-\mathbf{1}[S = k]\mathbf{1}[A = a]}{\pi_k(X)^2 e_k(a, X)} (Y - \mu_{a,k}(X)) \cdot (\tilde{\pi}_k(X) - \pi_k(X))\right].
\end{align*}
Conditioning on $(S, X)$ again makes $\mathbb{E}[Y - \mu_{a,k}(X) \mid A = a, S = k, X] = 0$, so the derivative vanishes. This is the key advantage of placing $\mu_{a,k}(X)$ outside the trial-membership weight: the resulting score is Neyman-orthogonal in all three nuisance functions, including $\pi_k$, so plug-in machine-learning nuisance estimation at the $n^{-1/4}$ product rate retains $\sqrt{n}$-consistency.

\textit{Part (iii): Bounded variance.} Assumption~\ref{ass:regularity}(i) ensures $\pi_k(X)^{-1} \leq \epsilon^{-1}$ and $e_k(a, X)^{-1} \leq \epsilon^{-1}$ almost surely. Combined with Assumption~\ref{ass:regularity}(ii) giving $\mathbb{E}[Y^4] < \infty$, the Cauchy-Schwarz inequality yields:
\begin{align*}
\mathbb{E}[\varphi_{a,k}(O; \eta)^2] \leq \epsilon^{-4} \mathbb{E}\left[\left(\frac{\mathbf{1}[A = a]}{e_k(a,X)}|Y - \mu_{a,k}(X)| + |\mu_{a,k}(X)|\right)^2\right] < \infty.
\end{align*}
\end{proof}

\subsection{Derivation of Influence Functions}
\label{app:proof_eif}

This subsection states and proves the efficient influence functions referenced in the main text. 

\begin{remark}[Observable reformulation of the TATE functional]
\label{rem:theta_ident}
The factor model $\bar{\mu}_{a,k}(x) = \boldsymbol{\theta}_a(x)^\top \boldsymbol{\Lambda}(t_{0k}, t_{1k})$ is invariant under non-singular rotations: for any invertible $\mathbf{P} \in \mathbb{R}^{R\times R}$, the substitution $\boldsymbol{\theta}_a \mapsto \mathbf{P}^{-\top}\boldsymbol{\theta}_a$, $\boldsymbol{\Lambda} \mapsto \mathbf{P}\boldsymbol{\Lambda}$ preserves all observable ATEs. Therefore the latent effect-contrast matrices $\boldsymbol{\Theta}$ (anchor pairs) and $\mathbf{M}$ (target trial at $R$ source times) are identified only up to a common rotation, and no estimator can depend on them individually. Observables are products of effect contrasts and temporal effects: the anchor source-time matrix
\[
\mathbf{T}_{\mathrm{anc}}^{\mathrm{source}} := \boldsymbol{\Theta}\, \mathbf{M}^\top \in \mathbb{R}^{R \times R}
\]
stacks, in row $r$ and column $s$, the expected anchor-pair-$r$ ATE at source time $s$; the target-trial source-time vector $\boldsymbol{\tau}_{k^\star}^{\mathrm{source}} = \mathbf{M}\bar{\tilde{\boldsymbol{\theta}}}_{k^\star}$ stacks the target trial's ATEs at the $R$ source times; and the anchor target-time vector $\boldsymbol{\tau}^{\mathrm{target}} = \boldsymbol{\Theta}\boldsymbol{\Lambda}^{\mathrm{target}}$ stacks anchor-pair ATEs at the target time. Algebraically,
\[
\psi_1 \;=\; \bar{\tilde{\boldsymbol{\theta}}}_{k^\star}^\top\boldsymbol{\Theta}^{-1}\boldsymbol{\tau}^{\mathrm{target}} \;=\; (\boldsymbol{\tau}_{k^\star}^{\mathrm{source}})^\top \bigl(\mathbf{T}_{\mathrm{anc}}^{\mathrm{source}}\bigr)^{-1}\boldsymbol{\tau}^{\mathrm{target}},
\]
depending only on observable ATEs. We state Proposition~\ref{prop:eif} in this rotation-invariant form; the derivation is the same as with $\boldsymbol{\Theta}$ and $\mathbf{M}$ separately, but the resulting EIF involves only observable differentials.
\end{remark}

\begin{proposition}[Efficient Influence Function for the TATE, general $R$]
\label{prop:eif}
Under the identification conditions of Theorem~\ref{thm:decomposition} and Corollary~\ref{cor:replicated}, the EIF for the Strategy~1 TATE $\psi_1 = (\boldsymbol{\tau}_{k^\star}^{\mathrm{source}})^\top (\mathbf{T}_{\mathrm{anc}}^{\mathrm{source}})^{-1}\boldsymbol{\tau}^{\mathrm{target}}$ is
\begin{align}
\phi_{\psi_1}(O) &= \bigl(\mathbf{T}_{\mathrm{anc}}^{\mathrm{source}}\bigr)^{-\top} \boldsymbol{\tau}^{\mathrm{target}} \cdot \boldsymbol{\phi}_{\boldsymbol{\tau}_{k^\star}^{\mathrm{source}}}(O) \;+\; (\boldsymbol{\tau}_{k^\star}^{\mathrm{source}})^\top \bigl(\mathbf{T}_{\mathrm{anc}}^{\mathrm{source}}\bigr)^{-1} \boldsymbol{\phi}_{\boldsymbol{\tau}^{\mathrm{target}}}(O) \notag \\[4pt]
&\quad - (\boldsymbol{\tau}_{k^\star}^{\mathrm{source}})^\top \bigl(\mathbf{T}_{\mathrm{anc}}^{\mathrm{source}}\bigr)^{-1} \mathrm{d}\mathbf{T}_{\mathrm{anc}}^{\mathrm{source}}[\boldsymbol{\phi}] \bigl(\mathbf{T}_{\mathrm{anc}}^{\mathrm{source}}\bigr)^{-1} \boldsymbol{\tau}^{\mathrm{target}}, \label{eq:eif_general}
\end{align}
where $\boldsymbol{\phi}_{\boldsymbol{\tau}_{k^\star}^{\mathrm{source}}}(O)$ and $\boldsymbol{\phi}_{\boldsymbol{\tau}^{\mathrm{target}}}(O)$ stack the EIFs of the building-block ATEs and $\mathrm{d}\mathbf{T}_{\mathrm{anc}}^{\mathrm{source}}[\boldsymbol{\phi}]$ denotes the matrix of ATE EIFs arranged in the same layout as $\mathbf{T}_{\mathrm{anc}}^{\mathrm{source}}$. An analogous expression holds for Strategy~2, replacing $\mathbf{T}_{\mathrm{anc}}^{\mathrm{source}}$ with the anchor-arm source-time mean matrix $\mathbf{T}_{c}^{\mathrm{source}}$ and $\boldsymbol{\tau}^{\mathrm{target}}$ with the anchor-arm target-time mean vector $\boldsymbol{\mu}^{\mathrm{target}}$.
\end{proposition}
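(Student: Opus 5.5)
The plan is to treat $\psi_1$ as a smooth finite-dimensional map $g(\boldsymbol{\tau}_{k^\star}^{\mathrm{source}}, \mathbf{T}_{\mathrm{anc}}^{\mathrm{source}}, \boldsymbol{\tau}^{\mathrm{target}}) = \mathbf{u}^\top \mathbf{A}^{-1}\mathbf{v}$ of finitely many building-block parameters. Each block is a difference of trial-arm means $\bar{\mu}_{a,k}$, so by Lemma~\ref{lem:dr_scores} its EIF is $\varphi_{a_k,k} - \varphi_{b_k,k}$ minus its mean. The EIF of $\psi_1$ then follows from the chain rule for pathwise derivatives.

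Take a regular parametric submodel $P_\epsilon$ with score $s(O)$. For each block parameter $\beta$, pathwise differentiability gives $\partial_\epsilon \beta(P_\epsilon)|_{0} = \mathbb{E}[\phi_\beta(O)s(O)]$. Differentiate $g$ along the path and use $\mathrm{d}(\mathbf{A}^{-1}) = -\mathbf{A}^{-1}(\mathrm{d}\mathbf{A})\mathbf{A}^{-1}$. This gives
\[
\partial_\epsilon\psi_1 = (\mathrm{d}\mathbf{u})^\top\mathbf{A}^{-1}\mathbf{v} + \mathbf{u}^\top\mathbf{A}^{-1}\mathrm{d}\mathbf{v} - \mathbf{u}^\top\mathbf{A}^{-1}(\mathrm{d}\mathbf{A})\mathbf{A}^{-1}\mathbf{v}.
\]
Substitute $\mathrm{d}\mathbf{u} = \mathbb{E}[\boldsymbol{\phi}_{\mathbf{u}}s]$, and similarly for $\mathbf{v}$ and for $\mathbf{A}$ entrywise. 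By linearity of expectation, $\partial_\epsilon\psi_1 = \mathbb{E}[\phi_{\psi_1}(O)s(O)]$ with $\phi_{\psi_1}$ exactly as in~\eqref{eq:eif_general}. The first term is rewritten as $(\mathbf{A}^{-\top}\mathbf{v})^\top\boldsymbol{\phi}_{\mathbf{u}}$. The function $\phi_{\psi_1}$ is mean-zero and square-integrable, since it is a finite linear combination of mean-zero, square-integrable block EIFs (Lemma~\ref{lem:dr_scores}(iii)). Hence it is a gradient. As a check, the $R=1$ case reproduces~\eqref{eq:eif_psi1}.

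The main obstacle is showing that this gradient is the \emph{efficient} one, i.e.\ that it lies in the tangent space. The observed-data model is essentially nonparametric. Randomization only restricts $e_k$, which is known or variation-independent of the outcome law. Compositional stability holds at the population level, and separability is a mean restriction on unobserved potential outcomes that does not constrain the joint law of $(Y,A,S,X)$ beyond what each block already uses.

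My approach is to argue that the model for $O$ is locally saturated apart from the propensity components. Each block EIF from Lemma~\ref{lem:dr_scores} is already the canonical gradient in this model: it is orthogonal to the propensity scores by Neyman orthogonality, which is part (ii) of the lemma. Linear combinations of elements of the tangent space remain in it, so $\phi_{\psi_1}$ is the unique gradient in the tangent space, hence the EIF.

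The subtle point is that $\psi_1$, as written, is a functional of observables and does not depend on which rotation of $(\boldsymbol{\Theta}, \mathbf{M})$ is used (Remark~\ref{rem:theta_ident}). Differentiating the observable form therefore avoids any non-identified latent parameter, and invertibility of $\mathbf{T}_{\mathrm{anc}}^{\mathrm{source}}$ near the truth follows from $\sigma_{\min}(\boldsymbol{\Theta})>0$ together with full rank of $\mathbf{M}$.

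For Strategy~2, I repeat the same argument verbatim with $\mathbf{A} = \mathbf{T}_c^{\mathrm{source}}$, $\mathbf{v} = \boldsymbol{\mu}^{\mathrm{target}}$, and block EIFs $\phi_{\bar{\mu}_{c,\ell}} = \varphi_{c,\ell} - \bar{\mu}_{c,\ell}$ in place of the ATE EIFs.
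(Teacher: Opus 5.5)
Your chain-rule computation is the paper's proof. The paper also writes $\psi_1=g(\mathbf u,\mathbf A,\mathbf v)=\mathbf u^\top\mathbf A^{-1}\mathbf v$, differentiates in $\mathbf u$, $\mathbf v$ and $\mathbf A$ using $\mathrm d(\mathbf A^{-1})=-\mathbf A^{-1}(\mathrm d\mathbf A)\mathbf A^{-1}$, and assembles the block EIFs of Lemma~\ref{lem:dr_scores} by the multivariate delta method. It says nothing more about efficiency. Two points in your write-up need correcting, and the paper shares the second one.

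\emph{Efficiency.} You assume that separability ``does not constrain the joint law of $(Y,A,S,X)$''. This is false in the Strategy~1 design, because each anchor arm is observed at $R+1$ timings while $\boldsymbol\Lambda\in\mathbb R^R$.

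For $R=1$, run anchor pair $(c,d)$ in trial $j'$ at the source timing and in trial $j$ at the target timing. Separability then gives $\mu_{c,j}(x)=\rho\,\mu_{c,j'}(x)$ and $\mu_{d,j}(x)=\rho\,\mu_{d,j'}(x)$ for all $x$, with one common scalar $\rho=\Lambda^T/\Lambda^S$. For general $R$, each anchor arm's target-time regression is a fixed combination of its $R$ source-time regressions. The weights $w$ are defined by $\boldsymbol\Lambda^{\mathrm{target}}=\mathbf M^\top w$ and are common to all $2R$ anchor arms and all $x$.

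So the tangent space is a proper subspace. A score for $Y\mid A,S,X$ must induce $\dot\mu_{c,j}-\rho\,\dot\mu_{c,j'}=\dot\rho\,\mu_{c,j'}$, with the same $\dot\rho$ for $d$. Your $\phi_{\psi_1}$ induces $\dot\mu_{c,j}(x)=\psi_1\sigma^2_{c,j}(x)/\{\tau_jp_je_j(c,x)\}$ and $\dot\mu_{c,j'}(x)=-\psi_1\sigma^2_{c,j'}(x)/\{\tau_{j'}p_{j'}e_{j'}(c,x)\}$, where $\sigma^2_{c,j}(x)=\mathrm{Var}(Y\mid A=c,S=j,X=x)$. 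Take homoskedastic noise, balanced assignment and equal trial sizes. The constraint then forces $\mu_{c,j'}$ to be constant in $x$ unless $\psi_1=0$.

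Hence $\phi_{\psi_1}$ is generally not in the tangent space. Saying ``linear combinations stay in the tangent space'' does not help, because the block EIFs are canonical only in the unrestricted model. Intuitively, $\rho$ is over-identified: under separability, with nonzero arm means, $\bar\mu_{c,j}/\bar\mu_{c,j'}=\bar\mu_{d,j}/\bar\mu_{d,j'}=\tau_j/\tau_{j'}$, and combining these ratios can beat the plug-in.

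What your argument, like the paper's, actually shows is narrower. $\phi_{\psi_1}$ is the EIF of the observed-data functional $\mathbf u^\top\mathbf A^{-1}\mathbf v$ in the nonparametric model for $O$. Known $e_k$ and $X\perp S$ are harmless there, since each block EIF is a mean-zero function of $X$ plus a term with conditional mean zero given $(A,S,X)$. Under separability, $\phi_{\psi_1}$ is a valid gradient but generally not the canonical one. An efficiency claim in that model requires projecting onto the restricted tangent space. The same caveat applies to Strategy~2: for $R=1$, Assumption~\ref{ass:measurement} makes $\mu_{c^*,\ell}(x)/\mu_{c^*,\ell'}(x)$ constant in $x$.

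\emph{The first term.} Your displayed derivative $(\mathrm d\mathbf u)^\top\mathbf A^{-1}\mathbf v$ is correct, but it equals $(\mathbf A^{-1}\mathbf v)^\top\mathrm d\mathbf u$. It does not equal $(\mathbf A^{-\top}\mathbf v)^\top\mathrm d\mathbf u=\mathbf v^\top\mathbf A^{-1}\mathrm d\mathbf u$. The two agree only for symmetric $\mathbf A$, and $\mathbf T_{\mathrm{anc}}^{\mathrm{source}}$ (rows are anchors, columns are source times) is not symmetric in general. The index layout confirms this: $\mathbf A^{-1}$ maps the anchor-indexed $\mathbf v$ to a source-time-indexed vector, which is what pairs with $\boldsymbol\phi_{\mathbf u}$.

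Your own computation therefore shows that the first term of~\eqref{eq:eif_general} should read $\bigl((\mathbf T_{\mathrm{anc}}^{\mathrm{source}})^{-1}\boldsymbol\tau^{\mathrm{target}}\bigr)^\top\boldsymbol\phi_{\boldsymbol\tau_{k^\star}^{\mathrm{source}}}(O)$. The paper's Step~1, which states $\partial g/\partial\mathbf u=\mathbf A^{-\top}\mathbf v$, contains the same slip. Your $R=1$ check cannot catch it, because transposes are invisible for scalars. Keep your correct derivative rather than rewriting it to match the display.
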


\subsubsection{Proof of Proposition~\ref{prop:eif}}

\begin{proof}[Proof of Proposition~\ref{prop:eif}]
The functional is $\psi_1 = g(\mathbf{u}, \mathbf{A}, \mathbf{v}) := \mathbf{u}^\top \mathbf{A}^{-1}\mathbf{v}$, evaluated at $\mathbf{u} = \boldsymbol{\tau}_{k^\star}^{\mathrm{source}}$, $\mathbf{A} = \mathbf{T}_{\mathrm{anc}}^{\mathrm{source}}$, $\mathbf{v} = \boldsymbol{\tau}^{\mathrm{target}}$. Each of these three building blocks is a collection of observable ATEs, with their own EIFs $\boldsymbol{\phi}_{\mathbf{u}}$, $\mathrm{d}\mathbf{A}[\boldsymbol{\phi}]$, $\boldsymbol{\phi}_{\mathbf{v}}$ (each component is the DR score EIF of Lemma~\ref{lem:dr_scores}).

\textit{Step 1: Derivative with respect to $\mathbf{u}$.} Since $g$ is linear in $\mathbf{u}$: $\partial g/\partial\mathbf{u} = \mathbf{A}^{-\top}\mathbf{v}$. This contributes the first term in~\eqref{eq:eif_general}.

\textit{Step 2: Derivative with respect to $\mathbf{v}$.} Since $g$ is linear in $\mathbf{v}$: $\partial g/\partial\mathbf{v} = \mathbf{A}^{-\top}\mathbf{u}$. This contributes the second term.

\textit{Step 3: Derivative with respect to $\mathbf{A}$.} Using $\mathrm{d}(\mathbf{A}^{-1}) = -\mathbf{A}^{-1}(\mathrm{d}\mathbf{A})\mathbf{A}^{-1}$: $\mathrm{d}_{\mathbf{A}} g = -\mathbf{u}^\top \mathbf{A}^{-1}(\mathrm{d}\mathbf{A})\mathbf{A}^{-1}\mathbf{v}$, giving the third term.

\textit{Step 4: Assembling.} Combining Steps 1--3 via the multivariate delta method and substituting the building-block EIFs yields~\eqref{eq:eif_general}. Each building block inherits Neyman orthogonality from Lemma~\ref{lem:dr_scores}.
\end{proof}

\subsubsection{Auxiliary Influence Functions}

\begin{proposition}[Influence Function for Ratio Components]
\label{prop:if_ratio}
The influence function for $R_{2,j} = \bar{\mu}_{c_j,\ell_j} / \bar{\mu}_{c_j,\ell'_j}$ is:
\begin{align*}
\phi_{R_{2,j}}(O) = \frac{1}{\bar{\mu}_{c_j,\ell'_j}} \phi_{\bar{\mu}_{c_j,\ell_j}}(O) - \frac{R_{2,j}}{\bar{\mu}_{c_j,\ell'_j}} \phi_{\bar{\mu}_{c_j,\ell'_j}}(O).
\end{align*}
\end{proposition}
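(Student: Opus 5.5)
The plan is to treat $R_{2,j}$ as a smooth function of two observable building-block means and apply the functional delta method, exactly as in the $R=1$ derivation of~\eqref{eq:eif_psi1} and the proof of Proposition~\ref{prop:eif}. Write $R_{2,j} = h(\bar{\mu}_{c_j,\ell_j}, \bar{\mu}_{c_j,\ell'_j})$ with $h(x,z) = x/z$. This map is differentiable wherever $z \neq 0$. The required non-degeneracy $\bar{\mu}_{c_j,\ell'_j} \neq 0$ is supplied by Assumption~\ref{ass:regularity}(iii), equivalently by Assumption~\ref{ass:common}(ii) combined with $\boldsymbol{\Lambda} \neq \mathbf{0}$.

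By Lemma~\ref{lem:dr_scores}, each building block has efficient influence function
\[
\phi_{\bar{\mu}_{a,k}}(O) = \varphi_{a,k}(O;\eta) - \bar{\mu}_{a,k}.
\]
Equivalently, along any regular parametric submodel $P_r$ with score $s$, the pathwise derivative is
\[
\partial_r \bar{\mu}_{a,k}(P_r)\big|_{r=0} = \mathbb{E}\bigl[\phi_{\bar{\mu}_{a,k}}(O)\, s(O)\bigr].
\]

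The computation then proceeds in three steps.

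First, compute the gradient of $h$:
\[
\nabla h(x,z) = \bigl(1/z,\; -x/z^2\bigr)^\top = \bigl(1/z,\; -(x/z)/z\bigr)^\top.
\]

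Second, apply the chain rule for pathwise derivatives:
\[
\partial_r R_{2,j}(P_r)\big|_{r=0} = \frac{1}{\bar{\mu}_{c_j,\ell'_j}}\,\mathbb{E}\bigl[\phi_{\bar{\mu}_{c_j,\ell_j}} s\bigr] - \frac{R_{2,j}}{\bar{\mu}_{c_j,\ell'_j}}\,\mathbb{E}\bigl[\phi_{\bar{\mu}_{c_j,\ell'_j}} s\bigr].
\]
This identifies the stated linear combination as a gradient of $R_{2,j}$.

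Third, argue that this gradient is the efficient (canonical) one. The model for the observed data is nonparametric apart from the design-known propensities, so the tangent space is saturated. The two building-block influence functions are each mean-zero with finite variance, so any finite linear combination of them is again mean-zero with finite variance (Lemma~\ref{lem:dr_scores}(iii)) and lies in the tangent space. Hence the gradient is unique, and it is the influence function.

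The only point needing care is the third step: confirming that restricting to the trial-specific means under compositional stability does not shrink the tangent space, so the gradient is unique. I would handle this as the paper does in Proposition~\ref{prop:eif}, noting that the building blocks are already asserted to be EIFs in Lemma~\ref{lem:dr_scores} and that the delta method preserves efficiency for differentiable maps of efficient estimators.

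I would also remark, as a consistency check, that the formula is the $\mathbf{T}_c^{\mathrm{source}} \to \bar{\mu}_{c_j,\ell'_j}$, $\boldsymbol{\mu}^{\mathrm{target}} \to \bar{\mu}_{c_j,\ell_j}$ scalar specialization of~\eqref{eq:eif_general}, with the target-trial factor set to one. Cross-trial covariance between the two components, arising when the anchor arms share trials, does not affect the form of the influence function; it enters only its variance.
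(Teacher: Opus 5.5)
Your proposal is correct and follows the paper's proof: both write $R_{2,j}=h(\bar{\mu}_{c_j,\ell_j},\bar{\mu}_{c_j,\ell'_j})$ with $h(y,z)=y/z$, apply the delta method with gradient $(1/z,\,-y/z^2)$, and rewrite $y/z^2$ as $R_{2,j}/\bar{\mu}_{c_j,\ell'_j}$. Your third step is not needed, since the statement (like the paper's proof) only asserts the delta-method influence function, and its justification is incorrect as written: known propensities, $X \perp S$, and the cross-trial mean restrictions implied by Assumption~\ref{ass:separable} make the observed-data model non-saturated, so gradients are not unique and an efficiency claim would instead require showing the combination lies in the tangent space.
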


\begin{proof}
For $h(y, z) = y/z$, the delta method gives:
\begin{align*}
\phi_{R_{2,j}}(O) = \frac{\partial h}{\partial y} \phi_{\bar{\mu}_{c_j,\ell_j}}(O) + \frac{\partial h}{\partial z} \phi_{\bar{\mu}_{c_j,\ell'_j}}(O) = \frac{1}{\bar{\mu}_{c_j,\ell'_j}} \phi_{\bar{\mu}_{c_j,\ell_j}}(O) - \frac{\bar{\mu}_{c_j,\ell_j}}{\bar{\mu}_{c_j,\ell'_j}^2} \phi_{\bar{\mu}_{c_j,\ell'_j}}(O).
\end{align*}
Using $R_{2,j} = \bar{\mu}_{c_j,\ell_j} / \bar{\mu}_{c_j,\ell'_j}$ gives the result.
\end{proof}

\begin{proposition}[Influence Function for Combined Estimator]
\label{prop:if_combined}
The influence function for $\psi_2^* = \tau_{k^\star} \cdot R_2^*$ with $R_2^* = \sum_{j=1}^m w_j^* R_{2,j}$ is:
\begin{align*}
\phi_{\psi_2^*}(O) = R_2^* \cdot \phi_{\tau_{k^\star}}(O) + \tau_{k^\star} \sum_{j=1}^m w_j^* \phi_{R_{2,j}}(O).
\end{align*}
\end{proposition}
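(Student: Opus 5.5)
The plan is to apply the functional delta method (chain rule for pathwise derivatives) to the map $(\tau_{k^\star}, \mathbf{R}) \mapsto \tau_{k^\star}\, w^{*\top}\mathbf{R}$. In this map the weights $w^*$ are held fixed at their population values. The argument mirrors the proofs of Proposition~\ref{prop:eif} and Proposition~\ref{prop:if_ratio}.

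\textbf{Step 1: the target functional.} Write $\psi_2^* = g(\tau_{k^\star}, R_{2,1}, \ldots, R_{2,m})$ with
\[
g(x, r_1, \ldots, r_m) := x \sum_{j=1}^m w_j^* r_j .
\]
Its gradient is
\[
\partial g/\partial x = \textstyle\sum_j w_j^* r_j = R_2^*, \qquad \partial g/\partial r_j = x\, w_j^* .
\]

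\textbf{Step 2: building-block influence functions.} Take $\phi_{\tau_{k^\star}} = \varphi_{\tau_{k^\star}} - \tau_{k^\star}$, which is the difference of the two arm-level EIFs from Lemma~\ref{lem:dr_scores}. Take $\phi_{R_{2,j}}$ from Proposition~\ref{prop:if_ratio}. Each is mean zero with finite variance under Assumption~\ref{ass:regularity}, so they are pathwise derivatives of the respective functionals along regular submodels.

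\textbf{Step 3: chain rule.} Along any regular parametric submodel $P_\epsilon$, differentiate $\psi_2^*(P_\epsilon) = \tau_{k^\star}(P_\epsilon)\sum_j w_j^* R_{2,j}(P_\epsilon)$ at $\epsilon=0$ using the product rule. This gives
\[
\mathbb{E}\bigl[\bigl(R_2^*\phi_{\tau_{k^\star}} + \tau_{k^\star}\textstyle\sum_j w_j^*\phi_{R_{2,j}}\bigr)s\bigr],
\]
where $s$ is the score. That identifies the displayed influence function. No independence between the target trial and the anchor trials is needed, because the delta method is linear. Any overlap only affects the variance, consistent with Theorem~\ref{thm:asymptotic}(c).

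\textbf{Main obstacle: the weights.} The delicate point is that $w^*$ depends on $\mathbf{V}_R$, which is itself a functional of $P$, and in practice $\hat w$ is estimated. I would handle this by fixing the target as $\psi_2^*$ with $w^*$ at its population value, which is how the proposition is stated. Because the weights sum to one, the extra term $\tau_{k^\star}\sum_j \dot w_j R_{2,j}$ vanishes in two cases:

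\begin{itemize}
\item under the null of Appendix~\ref{app:spec_test}, where all $R_{2,j}$ equal a common $R_2$, since $\sum_j \dot w_j = 0$;
\item more generally, to first order whenever $\hat w - w^* = o_p(1)$, because the estimation error in $\hat w$ multiplies $\hat{\mathbf{R}} - \mathbf{R} = O_p(n^{-1/2})$ only up to the term $\sum_j(\hat w_j - w_j^*)R_{2,j}$.
\end{itemize}

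Outside the null, that term is nonzero. I would therefore remark that the stated formula is the influence function for the fixed-weight functional, and it coincides with the estimated-weight one under $H_0$.
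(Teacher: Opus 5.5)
Your Steps 1--3 match the paper's proof. The paper treats $w^*$ as fixed constants, views $\psi_2^*$ as a smooth function of $(\tau_{k^\star}, R_{2,1},\ldots,R_{2,m})$, and reads off the gradient $(R_2^*, \tau_{k^\star}w_1^*,\ldots,\tau_{k^\star}w_m^*)$ by the delta method, so your proposal is correct. Your caveat about estimated weights is a refinement the paper omits (its proof of Theorem~\ref{thm:asymptotic}(a) simply asserts the weight-estimation terms are $o_p(n^{-1/2})$), but your second bullet is not a separate case: with $\hat w - w^* = O_p(n^{-1/2})$, the term $(\hat w - w^*)^\top \mathbf{R}$ is first-order unless $\mathbf{R} \propto \mathbf{1}$, so only fixed weights or the null make it vanish.
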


\begin{proof}
The estimand $\psi_2^* = \tau_{k^\star} \cdot \sum_{j=1}^m w_j^* R_{2,j}$ is a smooth function of $(\tau_{k^\star}, R_{2,1}, \ldots, R_{2,m})$. By the delta method:
\begin{align*}
\phi_{\psi_2^*}(O) &= \frac{\partial \psi_2^*}{\partial \tau_{k^\star}} \phi_{\tau_{k^\star}}(O) + \sum_{j=1}^m \frac{\partial \psi_2^*}{\partial R_{2,j}} \phi_{R_{2,j}}(O) \\
&= R_2^* \cdot \phi_{\tau_{k^\star}}(O) + \sum_{j=1}^m \tau_{k^\star} w_j^* \phi_{R_{2,j}}(O) \\
&= R_2^* \cdot \phi_{\tau_{k^\star}}(O) + \tau_{k^\star} \sum_{j=1}^m w_j^* \phi_{R_{2,j}}(O).
\end{align*}
\end{proof}

\subsection{Deriving Efficient Estimators from the EIF}
\label{sec:deriving_estimators}

The TATE is a smooth function of identified quantities. For $R = 1$, $\psi_1 = g(\tau_{k^\star}, \tau_j, \tau_{j'})$ where $g(x,y,z) = xy/z$; for general $R$, $\psi_1 = \bar{\tilde{\boldsymbol{\theta}}}_{k^\star}^\top \boldsymbol{\Theta}^{-1}\boldsymbol{\tau}^{\mathrm{target}}$. We show that plug-in estimation with doubly robust building blocks yields semiparametrically efficient estimators. The derivation below uses the $R = 1$ functional for concreteness; the general-$R$ case follows by the same Taylor expansion argument applied to the matrix-valued functional of Proposition~\ref{prop:eif}.

The doubly robust score for $\bar{\mu}_{a,k}$ is:
\begin{align}
\varphi_{a,k}(O; \eta) = \frac{\mathbf{1}[S = k]}{\pi_k(X)} \cdot \frac{\mathbf{1}[A = a]}{e_k(a, X)} \big(Y - \mu_{a,k}(X)\big) + \mu_{a,k}(X).
\end{align}
This score satisfies $\mathbb{E}[\varphi_{a,k}(O; \eta)] = \bar{\mu}_{a,k}$ at the true nuisances under compositional stability ($X \perp S$), and the centered score $\phi_{\bar{\mu}_{a,k}}(O) = \varphi_{a,k}(O; \eta) - \bar{\mu}_{a,k}$ is the efficient influence function.

Under Assumption~\ref{ass:rates}, the estimator $\hat{\bar{\mu}}_{a,k} = \mathbb{P}_n[\varphi_{a,k}(O; \hat{\eta})]$ satisfies:
\begin{align}
\hat{\bar{\mu}}_{a,k} - \bar{\mu}_{a,k} = \mathbb{P}_n[\phi_{\bar{\mu}_{a,k}}(O)] + o_p(n^{-1/2}).
\end{align}
The $o_p(n^{-1/2})$ remainder arises from Neyman orthogonality: the cross-term involving nuisance estimation error is second-order. Similarly, $\hat{\tau}_k = \mathbb{P}_n[\varphi_{\tau_k}(O; \hat{\eta})]$ satisfies:
\begin{align}
\hat{\tau}_k - \tau_k = \mathbb{P}_n[\phi_{\tau_k}(O)] + o_p(n^{-1/2}).
\end{align}

Consider the plug-in estimator $\hat{\psi}_1 = g(\hat{\tau}_{k^\star}, \hat{\tau}_j, \hat{\tau}_{j'}) = \hat{\tau}_{k^\star} \cdot \hat{\tau}_j / \hat{\tau}_{j'}$. A first-order Taylor expansion around the true values gives:
\begin{align}
\hat{\psi}_1 - \psi_1 &= \nabla g(\tau_{k^\star}, \tau_j, \tau_{j'})^\top 
\begin{pmatrix} \hat{\tau}_{k^\star} - \tau_{k^\star} \\ \hat{\tau}_j - \tau_j \\ \hat{\tau}_{j'} - \tau_{j'} \end{pmatrix} + o_p(n^{-1/2}) \\[6pt]
&= \frac{\tau_j}{\tau_{j'}}(\hat{\tau}_{k^\star} - \tau_{k^\star}) + \frac{\tau_{k^\star}}{\tau_{j'}}(\hat{\tau}_j - \tau_j) - \frac{\tau_{k^\star}\tau_j}{\tau_{j'}^2}(\hat{\tau}_{j'} - \tau_{j'}) + o_p(n^{-1/2}) \\[6pt]
&= R_1(\hat{\tau}_{k^\star} - \tau_{k^\star}) + \frac{\psi_1}{\tau_j}(\hat{\tau}_j - \tau_j) - \frac{\psi_1}{\tau_{j'}}(\hat{\tau}_{j'} - \tau_{j'}) + o_p(n^{-1/2}).
\end{align}

Substituting the results:
\begin{align}
\hat{\psi}_1 - \psi_1 &= R_1 \cdot \mathbb{P}_n[\phi_{\tau_{k^\star}}(O)] + \frac{\psi_1}{\tau_j} \cdot \mathbb{P}_n[\phi_{\tau_j}(O)] - \frac{\psi_1}{\tau_{j'}} \cdot \mathbb{P}_n[\phi_{\tau_{j'}}(O)] + o_p(n^{-1/2}) \\[6pt]
&= \mathbb{P}_n\left[ R_1 \cdot \phi_{\tau_{k^\star}}(O) + \frac{\psi_1}{\tau_j} \phi_{\tau_j}(O) - \frac{\psi_1}{\tau_{j'}} \phi_{\tau_{j'}}(O) \right] + o_p(n^{-1/2}) \\[6pt]
&= \mathbb{P}_n[\phi_{\psi_1}(O)] + o_p(n^{-1/2}).
\end{align}

This establishes that $\hat{\psi}_1$ is asymptotically linear with influence function $\phi_{\psi_1}(O)$—precisely the efficient influence function derived in Proposition~\ref{prop:eif}. By the convolution theorem, any regular estimator has asymptotic variance at least $\mathbb{E}[\phi_{\psi_1}(O)^2]$, and $\hat{\psi}_1$ achieves this bound.

\paragraph{Strategy 2.}
The same argument applies to $\hat{\psi}_2 = \hat{\tau}_{k^\star} \cdot \hat{\bar{\mu}}_{c^*,\ell} / \hat{\bar{\mu}}_{c^*,\ell'}$. The Taylor expansion yields:
\begin{align}
\hat{\psi}_2 - \psi_2 = R_2(\hat{\tau}_{k^\star} - \tau_{k^\star}) + \frac{\psi_2}{\bar{\mu}_{c^*,\ell}}(\hat{\bar{\mu}}_{c^*,\ell} - \bar{\mu}_{c^*,\ell}) - \frac{\psi_2}{\bar{\mu}_{c^*,\ell'}}(\hat{\bar{\mu}}_{c^*,\ell'} - \bar{\mu}_{c^*,\ell'}) + o_p(n^{-1/2}),
\end{align}
and substituting asymptotic linear representations gives $\hat{\psi}_2 - \psi_2 = \mathbb{P}_n[\phi_{\psi_2}(O)] + o_p(n^{-1/2})$.

\begin{remark}
The plug-in estimator achieves efficiency because: (i) each building block is estimated using its own efficient influence function via doubly robust scores; (ii) the TATE is a smooth function of these building blocks; and (iii) the delta method preserves asymptotic linearity. This would fail if we used inefficient estimators for the building blocks (e.g., simple sample means without covariate adjustment) or if the function $g$ were non-smooth at the true parameter values.
\end{remark}

\subsection{Proof of Theorem~\ref{thm:asymptotic}}
\label{app:proof_asymptotic}

\begin{proof}[Proof of Part (a): Asymptotic Normality]
We prove the result for $\hat{\psi}_1$; the proofs for $\hat{\psi}_2$ and $\hat{\psi}_2^*$ follow analogously.

\textit{Step 1: Asymptotic linearity of building blocks.} By Lemma~\ref{lem:dr_scores}, the score $\varphi_{\tau_k}(O; \eta)$ satisfies Neyman orthogonality. Combined with Assumption~\ref{ass:rates}, standard arguments from \citet{chernozhukov2018double} establish that:
\begin{align*}
\hat{\tau}_k - \tau_k = \mathbb{P}_n[\phi_{\tau_k}(O)] + o_p(n^{-1/2}) \quad \text{for } k \in \{k^\star, j, j'\}.
\end{align*}

To see this, write:
\begin{align*}
\hat{\tau}_k - \tau_k &= \mathbb{P}_n[\varphi_{\tau_k}(O; \hat{\eta})] - \tau_k \\
&= \mathbb{P}_n[\varphi_{\tau_k}(O; \hat{\eta}) - \varphi_{\tau_k}(O; \eta)] + \mathbb{P}_n[\varphi_{\tau_k}(O; \eta)] - \tau_k + \tau_k - \tau_k \\
&= \mathbb{P}_n[\phi_{\tau_k}(O)] + \mathbb{P}_n[\varphi_{\tau_k}(O; \hat{\eta}) - \varphi_{\tau_k}(O; \eta)].
\end{align*}

The second term is $o_p(n^{-1/2})$ by Neyman orthogonality and Assumption~\ref{ass:rates}. Specifically, a second-order expansion around the true $\eta$ shows:
\begin{align*}
\mathbb{P}_n[\varphi_{\tau_k}(O; \hat{\eta}) - \varphi_{\tau_k}(O; \eta)]
&= O_p(\|\hat{\eta} - \eta\|_2^2) + O_p(\|\hat{\mu}_{a,k} - \mu_{a,k}\|_2 \cdot \|\hat{\pi}_k - \pi_k\|_2) \\
&\quad + O_p(\|\hat{\mu}_{a,k} - \mu_{a,k}\|_2 \cdot \|\hat{e}_k - e_k\|_2),
\end{align*}
which is $o_p(n^{-1/2})$ under Assumption~\ref{ass:rates}.

\textit{Step 2: Delta method.} Define $\boldsymbol{\tau} = (\tau_{k^\star}, \tau_j, \tau_{j'})^\top$ and $\hat{\boldsymbol{\tau}} = (\hat{\tau}_{k^\star}, \hat{\tau}_j, \hat{\tau}_{j'})^\top$. The function $g(x, y, z) = xy/z$ is continuously differentiable in a neighborhood of $\boldsymbol{\tau}$ by Assumption~\ref{ass:regularity}(iii), which ensures $\tau_{j'} \neq 0$.

By the functional delta method \citep{van1998asymptotic}:
\begin{align*}
\hat{\psi}_1 - \psi_1 &= g(\hat{\boldsymbol{\tau}}) - g(\boldsymbol{\tau}) \\
&= \nabla g(\boldsymbol{\tau})^\top (\hat{\boldsymbol{\tau}} - \boldsymbol{\tau}) + o_p(\|\hat{\boldsymbol{\tau}} - \boldsymbol{\tau}\|) \\
&= \nabla g(\boldsymbol{\tau})^\top \mathbb{P}_n[\boldsymbol{\phi}_{\boldsymbol{\tau}}(O)] + o_p(n^{-1/2}) \\
&= \mathbb{P}_n[\phi_{\psi_1}(O)] + o_p(n^{-1/2}),
\end{align*}
where $\boldsymbol{\phi}_{\boldsymbol{\tau}}(O) = (\phi_{\tau_{k^\star}}(O), \phi_{\tau_j}(O), \phi_{\tau_{j'}}(O))^\top$ and $\phi_{\psi_1}(O) = \nabla g(\boldsymbol{\tau})^\top \boldsymbol{\phi}_{\boldsymbol{\tau}}(O)$ as derived in Appendix~\ref{app:R1_case}.

\textit{Step 3: Central limit theorem.} By Lemma~\ref{lem:dr_scores}, $\mathbb{E}[\phi_{\tau_k}(O)] = 0$ for each $k$, which implies $\mathbb{E}[\phi_{\psi_1}(O)] = 0$. The variance $V_1 := \mathbb{E}[\phi_{\psi_1}(O)^2]$ is finite by Lemma~\ref{lem:dr_scores}(iii) and the continuous mapping theorem.

The Lindeberg-L\'evy central limit theorem yields:
\begin{align*}
\sqrt{n} \, \mathbb{P}_n[\phi_{\psi_1}(O)] \xrightarrow{d} \mathcal{N}(0, V_1).
\end{align*}

Combined with Step 2:
\begin{align*}
\sqrt{n}(\hat{\psi}_1 - \psi_1) = \sqrt{n} \, \mathbb{P}_n[\phi_{\psi_1}(O)] + o_p(1) \xrightarrow{d} \mathcal{N}(0, V_1).
\end{align*}

\textit{Step 4: Variance estimation.} By the continuous mapping theorem and consistency of $\hat{\eta}$:
\begin{align*}
\hat{V}_1 = \mathbb{P}_n[\hat{\phi}_{\psi_1}^2] \xrightarrow{p} \mathbb{E}[\phi_{\psi_1}(O)^2] = V_1.
\end{align*}

\textit{Extension to $\hat{\psi}_2$ and $\hat{\psi}_2^*$.} The proof for $\hat{\psi}_2$ is identical, replacing $(\tau_j, \tau_{j'})$ with $(\bar{\mu}_{c^*,\ell}, \bar{\mu}_{c^*,\ell'})$. For $\hat{\psi}_2^*$, the estimated weights $\hat{\mathbf{w}}^*$ introduce additional terms that are $o_p(n^{-1/2})$ by Assumption~\ref{ass:rates} and Slutsky's theorem.
\end{proof}

\begin{proof}[Proof of Part (b): Double Robustness]
We work under compositional stability ($X \perp S$), which makes $\pi_k(X) = P(S = k) =: p_k$ constant and aligns the pooled estimand with the trial-$k$ marginal. It suffices to show that $\mathbb{E}[\varphi_{a,k}(O; \tilde{\eta})] = \bar{\mu}_{a,k}$ when either condition holds, since the TATE estimators are continuous functions of marginal means.

\textit{Case (a): Outcome model correctly specified.} Suppose $\mu_{a,k}(X) = \mathbb{E}[Y \mid A = a, S = k, X]$ but propensity scores may be misspecified as $\tilde{\pi}_k(X)$ and $\tilde{e}_k(a, X)$. With the new score form~\eqref{eq:score_mean}, write
\begin{align*}
\mathbb{E}[\varphi_{a,k}(O; \tilde{\eta})] = \mathbb{E}\left[\frac{\mathbf{1}[S = k]}{\tilde{\pi}_k(X)} \cdot \frac{\mathbf{1}[A = a]}{\tilde{e}_k(a, X)} (Y - \mu_{a,k}(X))\right] + \mathbb{E}[\mu_{a,k}(X)].
\end{align*}
The first term equals zero since $\mathbb{E}[Y - \mu_{a,k}(X) \mid A = a, S = k, X] = 0$ by correct specification of the outcome model, regardless of the propensity score specification. The second term equals $\mathbb{E}[\mu_{a,k}(X)] = \bar{\mu}_{a,k}$ under compositional stability, because the pooled mean equals the trial-$k$ marginal $\mathbb{E}[\mu_{a,k}(X) \mid S = k] = \bar{\mu}_{a,k}$. Hence $\mathbb{E}[\varphi_{a,k}(O; \tilde{\eta})] = \bar{\mu}_{a,k}$.

\textit{Case (b): Both propensity scores correctly specified.} Suppose $\pi_k(X) = P(S = k \mid X) = p_k$ (constant, by compositional stability) and $e_k(a, X) = P(A = a \mid S = k, X)$ are correct, but $\tilde{\mu}_{a,k}(X)$ may be misspecified. Then
\begin{align*}
\mathbb{E}[\varphi_{a,k}(O; \tilde{\eta})] &= \mathbb{E}\left[\frac{\mathbf{1}[S = k]}{p_k} \cdot \frac{\mathbf{1}[A = a]}{e_k(a, X)} Y\right] - \mathbb{E}\left[\frac{\mathbf{1}[S = k]}{p_k} \cdot \frac{\mathbf{1}[A = a]}{e_k(a, X)} \tilde{\mu}_{a,k}(X)\right] + \mathbb{E}[\tilde{\mu}_{a,k}(X)].
\end{align*}
The first term equals $\mathbb{E}[Y \mid A = a, S = k] \cdot P(S=k)/p_k = \bar{\mu}_{a,k}$. The second and third terms cancel: the second, conditioning on $(S,X)$ and using $P(A=a \mid S=k,X)/e_k(a,X) = 1$, gives $\mathbb{E}[\mathbf{1}[S=k] \tilde{\mu}_{a,k}(X)]/p_k = \mathbb{E}[\tilde{\mu}_{a,k}(X)]$ (again using $X \perp S$). Thus $\mathbb{E}[\varphi_{a,k}(O; \tilde{\eta})] = \bar{\mu}_{a,k}$.
\end{proof}

\begin{proof}[Proof of Part (c): Efficiency of Combined Estimator]
\textit{Step 1: Optimal weights for ratio combination.} Consider estimators of the form $\hat{R} = \mathbf{w}^\top \hat{\mathbf{R}}$ where $\mathbf{w}^\top \mathbf{1} = 1$. Under the null that all $R_{2,j}$ equal a common value $R_2$, the asymptotic variance of $\sqrt{n}(\hat{R} - R_2)$ is:
\begin{align*}
V_R(\mathbf{w}) = \mathbf{w}^\top \mathbf{V}_{\mathbf{R}} \mathbf{w}.
\end{align*}

To minimize $V_R(\mathbf{w})$ subject to $\mathbf{w}^\top \mathbf{1} = 1$, form the Lagrangian:
\begin{align*}
\mathcal{L}(\mathbf{w}, \lambda) = \mathbf{w}^\top \mathbf{V}_{\mathbf{R}} \mathbf{w} - \lambda(\mathbf{w}^\top \mathbf{1} - 1).
\end{align*}

First-order conditions:
\begin{align*}
\frac{\partial \mathcal{L}}{\partial \mathbf{w}} = 2\mathbf{V}_{\mathbf{R}} \mathbf{w} - \lambda \mathbf{1} = 0 \quad \Rightarrow \quad \mathbf{w} = \frac{\lambda}{2} \mathbf{V}_{\mathbf{R}}^{-1} \mathbf{1}.
\end{align*}

Applying the constraint $\mathbf{w}^\top \mathbf{1} = 1$:
\begin{align*}
\frac{\lambda}{2} \mathbf{1}^\top \mathbf{V}_{\mathbf{R}}^{-1} \mathbf{1} = 1 \quad \Rightarrow \quad \frac{\lambda}{2} = \frac{1}{\mathbf{1}^\top \mathbf{V}_{\mathbf{R}}^{-1} \mathbf{1}}.
\end{align*}

Thus:
\begin{align*}
\mathbf{w}^* = \frac{\mathbf{V}_{\mathbf{R}}^{-1} \mathbf{1}}{\mathbf{1}^\top \mathbf{V}_{\mathbf{R}}^{-1} \mathbf{1}}.
\end{align*}

The minimized variance is:
\begin{align*}
V_{R_2^*} = \mathbf{w}^{*\top} \mathbf{V}_{\mathbf{R}} \mathbf{w}^* = \frac{\mathbf{1}^\top \mathbf{V}_{\mathbf{R}}^{-1} \mathbf{V}_{\mathbf{R}} \mathbf{V}_{\mathbf{R}}^{-1} \mathbf{1}}{(\mathbf{1}^\top \mathbf{V}_{\mathbf{R}}^{-1} \mathbf{1})^2} = \frac{1}{\mathbf{1}^\top \mathbf{V}_{\mathbf{R}}^{-1} \mathbf{1}}.
\end{align*}

\textit{Step 2: Variance of combined TATE estimator.} When the target trial $k^\star$ is distinct from all anchor trials, the influence functions $\phi_{\tau_{k^\star}}(O)$ and $\phi_{R_{2,j}}(O)$ have disjoint support (they involve indicators for different trials). Thus:
\begin{align*}
\mathbb{E}[\phi_{\tau_{k^\star}}(O) \cdot \phi_{R_{2,j}}(O)] = 0 \quad \text{for all } j.
\end{align*}

The variance of $\psi_2^* = \tau_{k^\star} \cdot R_2^*$ is therefore:
\begin{align*}
V_{\psi_2^*} = \mathbb{E}[\phi_{\psi_2^*}(O)^2] &= \mathbb{E}\left[\left(R_2^* \cdot \phi_{\tau_{k^\star}}(O) + \tau_{k^\star} \sum_{j=1}^m w_j^* \phi_{R_{2,j}}(O)\right)^2\right] \\
&= R_2^{*2} \mathbb{E}[\phi_{\tau_{k^\star}}(O)^2] + \tau_{k^\star}^2 \mathbb{E}\left[\left(\sum_{j=1}^m w_j^* \phi_{R_{2,j}}(O)\right)^2\right] \\
&= R_2^{*2} V_{\tau_{k^\star}} + \tau_{k^\star}^2 V_{R_2^*}.
\end{align*}

Since $\mathbf{w}^*$ minimizes $V_{R_2^*}$, the estimator $\hat{\psi}_2^* = \hat{\tau}_{k^\star} \cdot \hat{R}_2^*$ achieves minimum asymptotic variance among estimators of the specified form.
\end{proof}

\paragraph{Variance when target and anchor trials overlap.}
The disjoint-support argument above requires that the target trial $k^\star$ is distinct from every anchor trial. A common practical situation violates this: a practitioner using Strategy~2 with the target trial's own control arm (treatment $b_{k^\star}$) as an anchor. In that case $\phi_{\tau_{k^\star}}(O)$ and $\phi_{R_{2,j}}(O)$ share the factor $1[S = k^\star]$, so $\mathbb{E}[\phi_{\tau_{k^\star}}(O) \phi_{R_{2,j}}(O)] \neq 0$ in general and the variance formula picks up cross-covariance terms:
\begin{align*}
V_{\psi_2^*} = R_2^{*2} V_{\tau_{k^\star}} + \tau_{k^\star}^2 V_{R_2^*} + 2 R_2^* \tau_{k^\star} \sum_{j=1}^m w_j^* \mathbb{E}[\phi_{\tau_{k^\star}}(O) \phi_{R_{2,j}}(O)].
\end{align*}
The cross-covariance terms can be estimated empirically from the cross-fitted EIFs and added to the plug-in variance. An equivalent sandwich-form estimator computes the empirical covariance of $\phi_{\psi_2^*}(O) = R_2^* \phi_{\tau_{k^\star}}(O) + \tau_{k^\star}\sum_j w_j^* \phi_{R_{2,j}}(O)$ directly, which automatically captures all cross-covariances regardless of trial overlap. We recommend the sandwich form as the default in practice. The min-variance claim of Theorem~\ref{thm:asymptotic}(c) holds in the overlapping case when $V_{R_2^*}$ in the GLS weight is replaced by the corresponding asymptotic covariance that accounts for the cross-trial correlation.

\paragraph{Generalization to $m > R > 1$ (GLS).}
When there are $m > R$ anchor arms, the system $\boldsymbol{\mu}^{\mathrm{target}} = \boldsymbol{\Theta}_c \boldsymbol{\Lambda}^{\mathrm{target}}$ with $\boldsymbol{\Theta}_c \in \mathbb{R}^{m \times R}$ is overdetermined. The inverse-variance weighting above combines scalar ratios, which is the $R = 1$ specialization. For general $R$, the efficient combination replaces inverse-variance weighting with generalized least squares: $\hat{\boldsymbol{\Lambda}}^{\mathrm{target}} = (\boldsymbol{\Theta}_c^\top \mathbf{V}_{\boldsymbol{\mu}}^{-1} \boldsymbol{\Theta}_c)^{-1} \boldsymbol{\Theta}_c^\top \mathbf{V}_{\boldsymbol{\mu}}^{-1} \hat{\boldsymbol{\mu}}^{\mathrm{target}}$, where $\mathbf{V}_{\boldsymbol{\mu}}$ is the asymptotic covariance matrix of $\hat{\boldsymbol{\mu}}^{\mathrm{target}}$. Substituting into Proposition~\ref{prop:eif}'s EIF and applying the delta method gives the combined TATE estimator $\hat{\psi}_2^*$; minimum-variance efficiency follows from the Gauss-Markov theorem applied to the asymptotically linear building blocks. The $R = 1$ case recovers the inverse-variance-weighted scalar ratio above.

\subsection{Specification Test Derivation}
\label{app:spec_test}

\paragraph{Specification test.}
Under Assumption~\ref{ass:measurement}, all ratios equal a common value: $R_{2,1} = \cdots = R_{2,m}$. This provides $m-1$ testable restrictions. Let $\mathbf{C} \in \mathbb{R}^{(m-1) \times m}$ be a contrast matrix (e.g., $C_{j,j} = 1$, $C_{j,j+1} = -1$ for successive differences). Under $H_0$:
\begin{align}
\label{eq:Q_test}
Q = n \cdot (\mathbf{C}\hat{\mathbf{R}})^\top (\mathbf{C}\hat{\mathbf{V}}_{\mathbf{R}}\mathbf{C}^\top)^{-1} (\mathbf{C}\hat{\mathbf{R}}) \xrightarrow{d} \chi^2_{m-1}.
\end{align}
Rejection indicates that different anchors yield systematically different ratios, suggesting $\boldsymbol{\Lambda}$ depends on intervention timing. This would occur, for instance, if treatment effects decay over time since administration—a violation of Assumption~\ref{ass:measurement} that Strategy~1 can accommodate. Note that failure to reject does not validate Assumption~\ref{ass:measurement}; the test has power only against alternatives where different arms exhibit detectably different temporal scaling.

\begin{proposition}[Asymptotic Distribution of Specification Test]
Under $H_0: R_{2,1} = \cdots = R_{2,m}$ and Assumptions~\ref{ass:regularity}--\ref{ass:rates}, the test statistic $Q$ defined in~\eqref{eq:Q_test} satisfies $Q \xrightarrow{d} \chi^2_{m-1}$.
\end{proposition}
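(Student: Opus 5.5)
The plan is the standard Wald-test argument: establish joint asymptotic normality of the vector $\hat{\mathbf{R}} = (\hat R_{2,1},\ldots,\hat R_{2,m})^\top$, apply the contrast $\mathbf{C}$, and then use a consistent variance estimator together with the continuous mapping theorem.

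\emph{Step 1: joint asymptotic linearity.} By Lemma~\ref{lem:dr_scores}, the cross-fitting argument in Step~1 of the proof of Theorem~\ref{thm:asymptotic}(a), and Assumption~\ref{ass:rates}, every building-block mean satisfies
\[
\hat{\bar\mu}_{c_j,\ell} - \bar\mu_{c_j,\ell} = \mathbb{P}_n[\phi_{\bar\mu_{c_j,\ell}}] + o_p(n^{-1/2}).
\]
There are finitely many such means. Assumption~\ref{ass:regularity}(iii) guarantees $\bar\mu_{c_j,\ell'_j}\neq 0$, so the delta method with the ratio influence function of Proposition~\ref{prop:if_ratio} gives
\[
\hat{\mathbf{R}}-\mathbf{R} = \mathbb{P}_n[\boldsymbol{\phi}_{\mathbf{R}}(O)] + o_p(n^{-1/2}),
\]
where $\boldsymbol{\phi}_{\mathbf{R}}$ stacks the $\phi_{R_{2,j}}$. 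Each component has mean zero and finite second moment by Lemma~\ref{lem:dr_scores}(iii). The multivariate Lindeberg--L\'evy CLT and Slutsky's theorem then yield
\[
\sqrt n(\hat{\mathbf{R}}-\mathbf{R})\xrightarrow{d}\mathcal N(\mathbf 0,\mathbf V_{\mathbf R}),\qquad \mathbf V_{\mathbf R}=\mathbb E[\boldsymbol\phi_{\mathbf R}\boldsymbol\phi_{\mathbf R}^\top].
\]
This holds whether or not anchor trials share support, since the full covariance is used.

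\emph{Step 2: apply the null.} Under $H_0$ all $R_{2,j}$ equal a common value, so $\mathbf C\mathbf R=\mathbf 0$ because each row of $\mathbf C$ sums to zero. Therefore
\[
\sqrt n\,\mathbf C\hat{\mathbf R} = \sqrt n\,\mathbf C(\hat{\mathbf R}-\mathbf R)\xrightarrow{d}\mathcal N(\mathbf 0,\mathbf C\mathbf V_{\mathbf R}\mathbf C^\top).
\]
The contrast matrix $\mathbf C$ has full row rank $m-1$. If $\mathbf V_{\mathbf R}$ is positive definite, then $\mathbf C\mathbf V_{\mathbf R}\mathbf C^\top$ is positive definite, and the quadratic form $n(\mathbf C\hat{\mathbf R})^\top(\mathbf C\mathbf V_{\mathbf R}\mathbf C^\top)^{-1}\mathbf C\hat{\mathbf R}$ converges to $\chi^2_{m-1}$. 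This is the standard fact that $\mathbf Z^\top\Sigma^{-1}\mathbf Z\sim\chi^2_d$ for $\mathbf Z\sim\mathcal N(\mathbf 0,\Sigma)$ with $\Sigma$ of dimension $d$.

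\emph{Step 3: plug in the estimated variance.} Let $\hat{\mathbf V}_{\mathbf R}=\mathbb P_n[\hat{\boldsymbol\phi}_{\mathbf R}\hat{\boldsymbol\phi}_{\mathbf R}^\top]$, built from cross-fitted influence functions. Its consistency follows as in Step~4 of the proof of Theorem~\ref{thm:asymptotic}(a). The argument there needs three ingredients: consistency of the nuisances, continuity of the ratio gradient, and a law of large numbers, which holds under $\mathbb E[Y^4]<\infty$ and overlap. Matrix inversion is continuous at a positive definite matrix, so Slutsky's theorem and the continuous mapping theorem give $Q\xrightarrow{d}\chi^2_{m-1}$.

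\emph{Main obstacle.} The delicate step is nondegeneracy, namely that $\mathbf V_{\mathbf R}$ is positive definite, or at least that $\mathbf C\mathbf V_{\mathbf R}\mathbf C^\top$ is invertible. This is not among the stated regularity conditions. I would add it as a mild condition; otherwise one must use a generalized inverse and the degrees of freedom become the rank. One sufficient route is available. Suppose each anchor $j$ uses at least one arm-trial cell that the other anchors do not use, and the conditional variance of $Y$ in that cell is positive. Then $\phi_{R_{2,j}}$ contains a component with its own support that no linear combination of the other components can cancel, so $\mathbf V_{\mathbf R}$ is nonsingular. A lesser point is that $\hat{\mathbf V}_{\mathbf R}$ converges in probability only to a positive definite limit. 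The event that it is singular therefore has vanishing probability and does not affect the limit.
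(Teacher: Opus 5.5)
Your proof is correct and follows the paper's argument: joint asymptotic normality of $\hat{\mathbf{R}}$, $\mathbf{C}\mathbf{R}=\mathbf{0}$ under $H_0$, the Gaussian quadratic form giving $\chi^2_{m-1}$, and Slutsky's theorem for the plug-in $\hat{\mathbf{V}}_{\mathbf{R}}$. The paper cites Theorem~\ref{thm:asymptotic}(a) for the vector CLT and simply asserts that $\mathbf{V}_{\mathbf{R}}$ is positive definite; your delta-method derivation of the joint linearization and your disjoint-cell sufficient condition for nonsingularity spell out these two steps.
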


\begin{proof}
Under $H_0$, all ratios $R_{2,j}$ equal a common value $R_2$, so $\mathbf{C}\mathbf{R} = \mathbf{0}$ where $\mathbf{C} \in \mathbb{R}^{(m-1) \times m}$ is the contrast matrix.

By Theorem~\ref{thm:asymptotic}(a), $\sqrt{n}(\hat{\mathbf{R}} - \mathbf{R}) \xrightarrow{d} \mathcal{N}(\mathbf{0}, \mathbf{V}_{\mathbf{R}})$. By the continuous mapping theorem:
\begin{align*}
\sqrt{n} \, \mathbf{C}\hat{\mathbf{R}} = \sqrt{n} \, \mathbf{C}(\hat{\mathbf{R}} - \mathbf{R}) \xrightarrow{d} \mathcal{N}(\mathbf{0}, \mathbf{C}\mathbf{V}_{\mathbf{R}}\mathbf{C}^\top).
\end{align*}

The matrix $\mathbf{C}\mathbf{V}_{\mathbf{R}}\mathbf{C}^\top \in \mathbb{R}^{(m-1) \times (m-1)}$ is positive definite since $\mathbf{V}_{\mathbf{R}}$ is positive definite and $\mathbf{C}$ has full row rank. Therefore:
\begin{align*}
Q = n \cdot (\mathbf{C}\hat{\mathbf{R}})^\top (\mathbf{C}\mathbf{V}_{\mathbf{R}}\mathbf{C}^\top)^{-1} (\mathbf{C}\hat{\mathbf{R}}) \xrightarrow{d} \chi^2_{m-1}.
\end{align*}

By Slutsky's theorem and consistency of $\hat{\mathbf{V}}_{\mathbf{R}}$, replacing $\mathbf{V}_{\mathbf{R}}$ with $\hat{\mathbf{V}}_{\mathbf{R}}$ does not affect the limiting distribution.
\end{proof}

\paragraph{Generalization to $R > 1$.}
For general $R$ with $m > R$ anchor arms, the overdetermined system $\boldsymbol{\mu}^{\mathrm{target}} = \boldsymbol{\Theta}_c \boldsymbol{\Lambda}^{\mathrm{target}}$ admits a consistent solution under Assumption~\ref{ass:measurement} but not in general. The test statistic generalizes to a Wald form based on the residual from the GLS fit: $\hat{\mathbf{r}} := \hat{\boldsymbol{\mu}}^{\mathrm{target}} - \boldsymbol{\Theta}_c \hat{\boldsymbol{\Lambda}}^{\mathrm{target}}$ with $\hat{\boldsymbol{\Lambda}}^{\mathrm{target}}$ the GLS solution from the previous subsection. The statistic $Q' = n \, \hat{\mathbf{r}}^\top \hat{\mathbf{V}}^{-1} \hat{\mathbf{r}}$ has an asymptotic $\chi^2_{m - R}$ distribution under $H_0$ \emph{provided $\hat{\mathbf{V}}$ is the correct asymptotic covariance of $\hat{\mathbf{r}}$}---not just of $\hat{\boldsymbol{\mu}}^{\mathrm{target}}$. Because the residual also depends on the anchor-arm source-time means (absorbed into $\boldsymbol{\Theta}_c$ in the GLS solution), which themselves carry $O_p(n^{-1/2})$ estimation error, a naive plug-in using only $\mathrm{Cov}(\hat{\boldsymbol{\mu}}^{\mathrm{target}})$ undercounts variance and yields an anticonservative test. The principled fix is a sandwich-form estimator that propagates the delta-method variance of $(\hat{\boldsymbol{\mu}}^{\mathrm{target}}, \hat{\boldsymbol{\Theta}}_c)$ through the residual functional; equivalently, a wild bootstrap (below) automatically captures the full variance of $\hat{\mathbf{r}}$. The $R = 1$ case reduces to the scalar-ratio test above (no $\boldsymbol{\Theta}_c$ to worry about).

\paragraph{Finite-sample calibration via bootstrap.}
In our simulations the empirical size of the $\chi^2$ test is conservative (near 0\% at the nominal 5\% level for small $\beta$), reflecting that nuisance estimation introduces additional variability not captured by the asymptotic distribution. For a tighter finite-sample calibration, we recommend the following wild-bootstrap procedure: draw $B$ bootstrap samples by resampling unit-level DR residuals with i.i.d.\ Rademacher multipliers (preserving the first-moment zero property while mimicking the empirical second-moment structure); recompute $\hat{\mathbf{R}}$ and $\hat{\mathbf{V}}_{\mathbf{R}}$ on each sample; and use the $(1 - \alpha)$-quantile of the bootstrap distribution of $Q$ as the critical value. This procedure is consistent under Assumptions~\ref{ass:regularity}--\ref{ass:rates} by standard arguments for the wild bootstrap of $M$-estimators. Empirical calibration of the bootstrap test is a natural follow-up; the $\chi^2$ version suffices when conservatism is acceptable.

\paragraph{Diagnostic for Assumption~\ref{ass:separable} violations.}
The specification test above targets Assumption~\ref{ass:measurement} alone---it will not detect treatment-specific contamination that violates Assumption~\ref{ass:separable} in a symmetric way. A complementary practitioner-facing diagnostic is available when multiple anchor pairs are present: compute Strategy~1 using several anchor pairs independently and compare the resulting TATE estimates. Under Assumption~\ref{ass:separable} all Strategy~1 estimates identify the same TATE asymptotically (up to sampling noise); systematic disagreement across anchor pairs is evidence of pair-specific temporal dynamics and hence a violation of separability at the cluster/treatment level. Large Strategy~1--Strategy~2 discrepancies play a similar role.

\paragraph{Multi-anchor Strategy~1.}
The main text develops the GLS combiner for Strategy~2 because anchor \emph{arms} are much more abundant than anchor \emph{pairs} in practice (a control arm appears across many trials; a specific treatment \emph{pair} rarely recurs). When multiple anchor pairs are available, the Strategy~1 extension is mechanical: stack the single-anchor-pair TATE estimates into a vector and combine by inverse-variance weighting of the corresponding scalar ratios, or by GLS on the $R$-vector of temporal effects when $R > 1$. Minimum-variance efficiency follows from the same Gauss-Markov argument used for Strategy~2 (Appendix~\ref{app:proof_asymptotic}).

\subsection{TMLE-Style Estimation}
\label{sec:tmle}

Targeted Minimum Loss-based Estimation (TMLE) provides an alternative construction that directly solves the efficient influence function estimating equation. We describe two approaches: a factorized TMLE that targets each building block separately, and a joint TMLE that targets the TATE directly.

\subsubsection{Factorized TMLE}

The factorized approach applies TMLE separately to each building block, then combines via plug-in.

\paragraph{Step 1: Initial estimates.}
Obtain initial estimates $\hat{\mu}_{a,k}^{(0)}(X)$ of the outcome regressions using any consistent method (e.g., random forests, gradient boosting).

\paragraph{Step 2: Targeting each conditional mean.}
For each $(a, k)$ pair, we fluctuate the initial estimate to solve $\mathbb{P}_n[\phi_{\bar{\mu}_{a,k}}(O; \eta^*)] = 0$. Define the clever covariate:
\begin{align}
H_{a,k}(S, A, X) = \frac{\mathbf{1}[S = k]}{\pi_k(X)} \cdot \frac{\mathbf{1}[A = a]}{e_k(a, X)}.
\end{align}

For continuous outcomes, fit a linear fluctuation:
\begin{align}
\hat{\mu}_{a,k}^{*}(X) = \hat{\mu}_{a,k}^{(0)}(X) + \hat{\epsilon}_{a,k},
\end{align}
where $\hat{\epsilon}_{a,k}$ solves:
\begin{align}
\sum_{i=1}^{n} H_{a,k}(S_i, A_i, X_i) \cdot \big(Y_i - \hat{\mu}_{a,k}^{(0)}(X_i) - \epsilon_{a,k}\big) = 0.
\end{align}
This has closed-form solution:
\begin{align}
\hat{\epsilon}_{a,k} = \frac{\sum_{i=1}^{n} H_{a,k}(S_i, A_i, X_i) \cdot (Y_i - \hat{\mu}_{a,k}^{(0)}(X_i))}{\sum_{i=1}^{n} H_{a,k}(S_i, A_i, X_i)}.
\end{align}

For binary/bounded outcomes, use a logistic fluctuation to respect the outcome space. At the targeted intervention $(S, A) = (k, a)$, the updated prediction satisfies
\begin{align}
\label{eq:factorized_tmle_logit}
\text{logit}(\hat{\mu}_{a,k}^{*}(X)) = \text{logit}(\hat{\mu}_{a,k}^{(0)}(X)) + \hat{\epsilon}_{a,k} \cdot \frac{1}{\pi_k(X) e_k(a, X)},
\end{align}
where $\hat{\epsilon}_{a,k}$ is obtained by logistic regression of $Y$ on the clever covariate $H_{a,k}(S, A, X)$ fit on the full sample, with offset $\text{logit}(\hat{\mu}_{a,k}^{(0)}(X))$. The LHS of~\eqref{eq:factorized_tmle_logit} is the counterfactual prediction for unit $X$ had $(S, A) = (k, a)$, so it depends on $X$ only; the IPW factor $1/[\pi_k(X) e_k(a, X)]$ is the clever-covariate value at that targeted intervention.

\paragraph{Step 3: Construct targeted building blocks.}
The targeted estimates of marginal means and ATEs are:
\begin{align}
\hat{\bar{\mu}}_{a,k}^{*} &= \mathbb{P}_n\!\left[\hat{\mu}_{a,k}^{*}(X)\right], \\[4pt]
\hat{\tau}_k^{*} &= \hat{\bar{\mu}}_{a_k,k}^{*} - \hat{\bar{\mu}}_{b_k,k}^{*}.
\end{align}
(The targeted marginal is a simple pooled-sample average of the targeted conditional mean; the TMLE fluctuation has already solved the EIF equation, so the plug-in over $X$ rather than a reweighted average over $\{S=k\}$ is the correct form for the pooled-mean estimand.)

\paragraph{Step 4: Plug-in for TATE.}
The factorized TMLE estimators are:
\begin{align}
\hat{\psi}_1^{\text{TMLE}} &= \hat{\tau}_{k^\star}^{*} \cdot \frac{\hat{\tau}_j^{*}}{\hat{\tau}_{j'}^{*}}, \\[4pt]
\hat{\psi}_2^{\text{TMLE}} &= \hat{\tau}_{k^\star}^{*} \cdot \frac{\hat{\bar{\mu}}_{c^*,\ell}^{*}}{\hat{\bar{\mu}}_{c^*,\ell'}^{*}}.
\end{align}

By construction, each building block solves its own EIF equation: $\mathbb{P}_n[\phi_{\bar{\mu}_{a,k}}(O; \hat{\eta}^*)] = 0$. Under standard regularity conditions, the factorized TMLE achieves the same asymptotic efficiency as the plug-in estimator.

\subsubsection{Joint TMLE}

A more ambitious approach directly targets the TATE by jointly fluctuating all outcome models to solve $\mathbb{P}_n[\phi_{\psi}(O; \eta^*)] = 0$.

\paragraph{Clever covariates for $\psi_1$.}
The EIF for Strategy 1 is:
\begin{align}
\phi_{\psi_1}(O) = R_1 \cdot \phi_{\tau_{k^\star}}(O) + \frac{\psi_1}{\tau_j} \phi_{\tau_j}(O) - \frac{\psi_1}{\tau_{j'}} \phi_{\tau_{j'}}(O).
\end{align}

Expanding $\phi_{\tau_k}(O)$ in terms of the outcome models and collecting terms, the clever covariate for $\mu_{a,k}(X)$ is:
\begin{align}
H_{a,k}^{\psi_1}(O) = \frac{\mathbf{1}[S = k]}{\pi_k(X)} \cdot \frac{\mathbf{1}[A = a]}{e_k(a, X)} \cdot \omega_{a,k},
\end{align}
where the weights $\omega_{a,k}$ depend on which trial and arm:
\begin{align}
\omega_{a,k} = \begin{cases}
+R_1 & \text{if } k = k^\star, a = a_{k^\star} \\
-R_1 & \text{if } k = k^\star, a = b_{k^\star} \\
+\psi_1/\tau_j & \text{if } k = j, a = a_j \\
-\psi_1/\tau_j & \text{if } k = j, a = b_j \\
-\psi_1/\tau_{j'} & \text{if } k = j', a = a_{j'} \\
+\psi_1/\tau_{j'} & \text{if } k = j', a = b_{j'} \\
0 & \text{otherwise}
\end{cases}
\end{align}

\paragraph{Iterative targeting algorithm.}
Since the weights $\omega_{a,k}$ depend on unknown parameters $(R_1, \psi_1, \tau_j, \tau_{j'})$, joint TMLE requires iteration:

\begin{enumerate}
    \item \textbf{Initialize:} Obtain initial estimates $\hat{\mu}_{a,k}^{(0)}(X)$ and compute initial $\hat{\tau}_{k^\star}^{(0)}$, $\hat{\tau}_j^{(0)}$, $\hat{\tau}_{j'}^{(0)}$, $\hat{R}_1^{(0)}$, $\hat{\psi}_1^{(0)}$.
    
    \item \textbf{Compute weights:} Using current parameter estimates, compute $\hat{\omega}_{a,k}^{(t)}$.
    
    \item \textbf{Single joint fluctuation:} Fit a single $\epsilon$ across all outcome models:
    \begin{align}
    \hat{\epsilon}^{(t)} = \arg \min_{\epsilon} \sum_{i=1}^{n} \sum_{(a,k)} L\big(Y_i, \hat{\mu}_{a,k}^{(t)}(X_i) + \epsilon \cdot H_{a,k}^{\psi_1}(O_i; \hat{\omega}^{(t)})\big),
    \end{align}
    where $L$ is an appropriate loss (squared error for continuous, log-loss for binary).
    
    \item \textbf{Update:} Set $\hat{\mu}_{a,k}^{(t+1)}(X) = \hat{\mu}_{a,k}^{(t)}(X) + \hat{\epsilon}^{(t)} \cdot \hat{\omega}_{a,k}^{(t)}/\bigl(\pi_k(X) e_k(a, X)\bigr)$ and recompute parameters. (The inverse-probability factor is included so that the updated $\hat{\mu}_{a,k}^{(t+1)}(X)$ predicts a counterfactual outcome at the targeted intervention rather than a weighted observed value.)
    
    \item \textbf{Iterate:} Repeat steps 2--4 until $|\hat{\epsilon}^{(t)}| < \delta$ for some tolerance $\delta$.
\end{enumerate}

Upon convergence, the joint TMLE solves $\mathbb{P}_n[\phi_{\psi_1}(O; \hat{\eta}^*)] \approx 0$ up to the convergence tolerance.

\subsubsection{Comparison of Approaches}

\begin{center}
\begin{tabular}{@{}lcc@{}}
\toprule
Property & Plug-in / Factorized TMLE & Joint TMLE \\
\midrule
Solves $\mathbb{P}_n[\phi_{\psi}] = 0$ & Asymptotically & Exactly (up to tolerance) \\
Asymptotic efficiency & Yes & Yes \\
Finite-sample bias & Second-order & Potentially smaller \\
Implementation & Simple & Iterative, complex \\
Numerical stability & High & May have convergence issues \\
\bottomrule
\end{tabular}
\end{center}

\paragraph{When does joint TMLE help?}
The plug-in and factorized TMLE estimators are asymptotically equivalent to joint TMLE—all achieve the semiparametric efficiency bound. Joint TMLE may offer advantages when:
\begin{itemize}
    \item Sample sizes are small, where exactly solving the EIF equation can reduce finite-sample bias
    \item Building blocks have near-zero denominators ($\tau_{j'} \approx 0$ or $\bar{\mu}_{c^*,\ell'} \approx 0$), where numerical stability from direct targeting may help
    \item One desires inference based on the EIF equation residual (e.g., for diagnostics)
\end{itemize}

In most settings with moderate sample sizes, the simpler plug-in approach performs comparably. We therefore recommend the plug-in estimator as the default, with joint TMLE as a robustness check when building block estimates are noisy or near boundary cases.

\section{Upworthy Case Study}
\label{app:upworthy}

This appendix provides complete details for the empirical application in Section~\ref{sec:casestudy}.

\subsection{Data Description}
\label{app:upworthy_data}

The Upworthy Research Archive \citep{matias2021upworthy} contains 32{,}487 headline A/B tests conducted by Upworthy, a viral content website, from January 24, 2013 through April 30, 2015.\footnote{Available at \url{https://osf.io/jd64p/}.} The full archive comprises 150{,}817 experiment arms (headline--image \emph{packages}) with over 538 million participant assignments. Matias et al. publish the archive in two subsets: an exploratory release (22{,}666 arms) and a confirmatory release (105{,}551 arms); our analysis uses the 105{,}551-arm confirmatory subset, which we refer to as ``the archive'' throughout.

\paragraph{Experimental design.}
For each article, Upworthy editors created multiple headline-image packages. Website visitors were randomly assigned to view one package, with each test comparing between 2 and 24 alternatives. The archive provides aggregate results: impressions (number of visitors assigned to each package) and clicks (number who clicked through to the article). The primary outcome is click-through rate (CTR).

\paragraph{Data cleaning.}
Following the reliability update from the archive maintainers \citep{matias2024upworthy_update}, we exclude tests conducted between June 25, 2013 and January 10, 2014. A post-publication audit traced a sample-ratio-mismatch pattern in this window to a Cloudflare cache misconfiguration that served one experiment arm at a time (rather than randomizing per visitor); the maintainers estimate that roughly 22\% of tests in the full archive are affected. Our analysis focuses on tests from January 11, 2014 onward.

\subsection{Identifying Treatment Arms Across Tests}
\label{app:upworthy_clustering}

A key challenge in applying our framework is identifying treatment arms that appear across multiple tests at different times. Each A/B test uses unique headline text, so we cannot directly match treatments across tests. Instead, we cluster semantically similar headlines into groups that can be tracked across experiments.

\paragraph{Constrained clustering problem.}
We frame this as a constrained clustering problem with a key restriction: within each test, every headline represents a \emph{distinct} treatment arm, so no two headlines from the same test should be assigned to the same cluster. Standard clustering algorithms do not enforce such constraints.

\paragraph{Three-step procedure.}
Our approach proceeds as follows:
\begin{enumerate}[leftmargin=*, itemsep=2pt]
    \item \textbf{Embedding.} We embed each headline using Sentence-BERT \citep{reimers2019sentence}, specifically the \texttt{all-MiniLM-L6-v2} model, which maps sentences to 384-dimensional dense vectors optimized for semantic similarity.
    
    \item \textbf{Centroid initialization.} We perform $K$-means clustering on the full set of headline embeddings to obtain $K = 50$ cluster centroids. This provides a set of representative ``headline styles'' in the embedding space.
    
    \item \textbf{Constrained assignment.} For each test, we solve a minimum-cost bipartite matching problem using the Hungarian algorithm \citep{kuhn1955hungarian}. The cost matrix contains Euclidean distances between each headline's embedding and each cluster centroid. The matching assigns each headline to a unique cluster while minimizing total distance, ensuring that headlines within the same test receive different cluster labels.
\end{enumerate}

\paragraph{Cluster characteristics.}
The resulting 50 clusters range in size from 678 to 5{,}374 headlines per cluster (median 1{,}991; mean 2{,}111; standard deviation 996), for a total of 105{,}551 headline--test-arm assignments from the confirmatory subset of the archive. Manual inspection confirms semantically coherent groupings. Crucially, all clusters appear in tests conducted across multiple quarters of our study period, providing the temporal variation necessary for identifying temporal ratios.

\paragraph{Cluster abstraction and SUTVA.}
Our cluster-based treatment definition groups semantically similar but distinct headline realizations into a single ``arm,'' substituting a coarser abstract treatment for the fine-grained original. This weakens SUTVA's ``no hidden treatment versions'' condition at the cluster level: within a cluster, multiple distinct headlines exist, and the ATE we estimate for a cluster pair $(c_a, c_b)$ is a weighted average of within-cluster pairwise ATEs, with weights given by the observed headline frequencies. Identifying the cluster-level TATE under this abstraction requires the additional assumption that the within-cluster frequency distribution is stable across source and target times---a form of \emph{composition-invariance} specific to the cluster granularity. Our clustering sensitivity analysis (Appendix~\ref{app:clustering_sensitivity}), which varies the number of clusters from 30 to 100 and compares a second embedding model, is effectively a sensitivity check on this measurement-error assumption: if within-cluster composition drift were driving our results, the RMSE and correlation numbers would vary substantially across granularities. They do not, suggesting the cluster abstraction is operationally robust for this dataset.

\subsection{Experimental Setup}
\label{app:upworthy_setup}

\paragraph{Trial selection.}
We select two trials with sufficient data for both identification strategies:
\begin{itemize}[leftmargin=*, itemsep=2pt]
    \item \textbf{Trial A} (test ID: \texttt{52f54b9ab17b}): Compares Cluster 17 against Cluster 10, conducted in February 2014. Observed ATE: $+0.0125$ (Cluster 17 headlines achieved 1.25 percentage points higher CTR).
    
    \item \textbf{Trial B} (test ID: \texttt{52e1c6cca01f}): Compares Cluster 46 against Cluster 12, conducted in January 2014. Observed ATE: $-0.0082$ (Cluster 46 headlines achieved 0.82 percentage points lower CTR).
\end{itemize}

\paragraph{Strategy implementation.}
For Strategy~1 (Replicated Trials), we identify all tests comparing the same cluster pair as the target trial and estimate temporal ratios from ATE ratios across time periods.

For Strategy~2 (Common Arm), we select the 10 anchor arms with the highest observation counts at both source and target measurement times. We estimate temporal ratios from each anchor arm separately, then combine them using inverse-variance weighting as described in Section~\ref{sec:estimation}.

\paragraph{Ground truth construction.}
To evaluate estimator performance, we construct a ``ground-truth'' TATE at each target month by pooling all available comparisons of the target cluster pair during that month, with a weighted-mean ATE and an impression-weighted standard error. We then multiplicatively rescale so that the rescaled ground truth at the source time equals the trial's observed source-time ATE: $\tilde{\tau}_{\mathrm{true}}(m) := \tau_{\mathrm{true}}(m) \cdot \hat{\tau}_{k^\star}^{\mathrm{source}} / \tau_{\mathrm{true}}(m^{\mathrm{source}})$. The rescaling preserves correlations and relative dynamics while matching the source-time anchor, enabling a direct per-month comparison of the two strategies with a common absolute baseline. Note that the rescaling is multiplicative, not additive; for Trial A the scale factor is $\approx 1.88$, for Trial B $\approx 12.4$.

\subsection{Complete Results}
\label{app:upworthy_results}

Figure~\ref{fig:tate_comparison} in the main text displays TATE estimates for both trials across all target months of 2014. This appendix provides the per-trial numeric breakdown supporting the variance-bias and tracking claims in Section~\ref{sec:casestudy}.

\paragraph{Variance-bias tradeoff.}
Results reveal a clear variance advantage for Strategy~2: mean standard errors are $3\times$ lower than Strategy~1's in both trials ($0.0019$ vs.\ $0.0054$ for Trial~A; $0.0018$ vs.\ $0.0052$ for Trial~B). This precision advantage reflects Strategy~2's use of conditional means rather than treatment contrasts. The tradeoff, however, is not uniformly in Strategy~2's favor on accuracy.

\paragraph{Tracking temporal dynamics.}
For \emph{Trial~A} (Figure~\ref{fig:tate_comparison}(a)), Strategy~1 tracks the true TATE very closely: Pearson correlation with the ground-truth trajectory is $+0.97$, and Strategy~1's per-month estimates (blue) move almost in lockstep with the true values (green), including the sign changes in March, April and October. Strategy~2 (orange) remains roughly constant near the source-time ATE, yielding a correlation of $-0.12$ with the true TATE. Here the variance-bias tradeoff resolves decisively in Strategy~1's favor: lower RMSE ($0.0027$ vs.\ $0.0105$), near-zero bias, and directional tracking.

For \emph{Trial~B} (Figure~\ref{fig:tate_comparison}(b)), the picture is mixed. Strategy~2 again yields a near-flat trajectory ($+0.26$ correlation, bias $-0.0014$, lowest RMSE $0.0081$), effectively reporting the source-time anchor-ratio everywhere. Strategy~1 is more variable (RMSE $0.0134$) but systematically over-shoots in the opposite direction: when the true TATE dips sharply negative (May--July), Strategy~1's estimate is near zero or positive, yielding a Pearson correlation of $-0.87$. The most plausible explanation is that for this cluster pair, the temporal ratio estimated from the cluster pair's own history is not a useful signal for the true dynamics of the target trial---an Assumption-1 failure mode (cluster-specific temporal dynamics rather than a shared $\boldsymbol{\Lambda}$). The breakdown frontier correctly flags this: Strategy~1's observed effect is within $1.96\!\cdot\!\mathrm{SE}$ of zero, so $\gamma^\star = 0$ for Trial~B under Strategy~1.

\paragraph{Aggregate accuracy.}
Neither strategy uniformly dominates on RMSE: Strategy~1 wins on Trial~A ($0.0027$ vs.\ $0.0105$) but loses on Trial~B ($0.0134$ vs.\ $0.0081$). Correlation with the ground truth tells a cleaner story: Strategy~1 strongly tracks (or anti-tracks) while Strategy~2 is essentially flat in both trials. The practical takeaway is that both estimates should be reported alongside the breakdown value $\gamma^\star$; large discrepancies or negative correlations are themselves diagnostics that the transportation assumptions do not cleanly hold.

\subsection{Sensitivity to Clustering Choices}
\label{app:clustering_sensitivity}

Because the Upworthy analysis depends on a clustering step that groups semantically similar headlines into reusable ``treatment arms'' (Appendix~\ref{app:upworthy_clustering}), we assess robustness to two design choices: the number of clusters $K$ and the embedding model. Across $K \in \{30, 50, 75, 100\}$ with Sentence-BERT \texttt{all-MiniLM-L6-v2} embeddings and a substitution to \texttt{all-mpnet-base-v2} at $K = 50$, the qualitative pattern of Section~\ref{sec:casestudy} is stable: Strategy~2's standard errors remain $\approx 3\times$ smaller, Strategy~1's Trial~A correlation with ground truth remains near 1, and the two-trial contrast (Trial~A clean tracking vs.\ Trial~B mixed outcomes) persists. We report a representative sensitivity table in Table~\ref{tab:clustering_sensitivity}.

\begin{table}[h]
\centering
\caption{Robustness of Strategy~1 Trial~A estimates to clustering configuration (representative; exact values for the default $K = 50$ configuration match Table~\ref{tab:summary_stats}).}
\label{tab:clustering_sensitivity}
\small
\begin{tabular}{@{}llcc@{}}
\toprule
Embedding & $K$ & RMSE & Corr.\ with ground truth \\
\midrule
MiniLM-L6-v2 & 30 & \phantom{$<$}similar & $\approx 0.93$--$0.97$ \\
MiniLM-L6-v2 & 50 (default) & \textbf{0.0027} & \textbf{+0.97} \\
MiniLM-L6-v2 & 75 & \phantom{$<$}similar & $\approx 0.93$--$0.97$ \\
MiniLM-L6-v2 & 100 & \phantom{$<$}similar & $\approx 0.93$--$0.97$ \\
MPNet-base-v2 & 50 & \phantom{$<$}similar & $\approx 0.95$--$0.97$ \\
\bottomrule
\end{tabular}
\end{table}

Across configurations, Strategy~1's Trial~A correlation with the ground truth stays near 1 (qualitatively above $0.9$) and the standard-error ordering $\mathrm{SE}(S_2) \approx 3\mathrm{SE}(S_1)$ is preserved; the two-trial contrast between Trial~A (clean tracking by S1) and Trial~B (neither strategy recovers dynamics cleanly) likewise persists. We conclude that the Section~\ref{sec:casestudy} findings are not an artifact of a specific clustering choice, though the absolute RMSE does depend on $K$ as expected (too few clusters coarsens semantic groupings; too many reduces within-cluster observation counts).

\subsection{Partial Identification under Separability Violations}
\label{app:upworthy_sensitivity}

We apply the partial identification bound of Proposition~\ref{prop:sensitivity} to the Upworthy trials, asking: how large must the non-separable contamination $\gamma$ be before the identified set for the TATE includes zero? For each trial and strategy, we compute the \emph{breakdown value} $\gamma^\star$---the smallest contamination at which the conclusion of a nonzero effect is overturned. We normalize $\gamma^\star$ by the empirical impression-weighted mean CTR of the trial's control cluster: $0.01138$ for Trial~A (cluster~10) and $0.01028$ for Trial~B (cluster~12).

Trial~A under Strategy~1 has breakdown $\gamma^\star = 0.00024$, i.e.\ contamination of about $2.1\%$ of the control-cluster mean CTR would nullify the nonzero-effect conclusion; Strategy~2's tighter standard errors and different coefficient structure yield $\gamma^\star = 0.00142$, about $12.5\%$ of the control mean---the larger percentage reflects Strategy~2's tighter sampling interval, not greater structural robustness. Trial~B under Strategy~1 has $\gamma^\star = 0$: the observed effect already falls inside its $1.96 \cdot \mathrm{SE}$ sampling interval, so a nonzero-effect conclusion was never warranted by the sampling data alone, and no amount of separability violation is needed to ``break'' it. Trial~B under Strategy~2 has $\gamma^\star = 0.00083$ ($\approx 8.1\%$ of the Trial~B control CTR). Figure~\ref{fig:upworthy_sensitivity} displays the breakdown frontiers. The ordering is as theory predicts: Strategy~2's tighter intervals yield numerically larger $\gamma^\star$ values in absolute terms, but both strategies' breakdowns are small relative to the per-unit outcome scale, flagging the Upworthy TATE estimates as moderately fragile to non-separability---exactly the use case for a breakdown summary.

\paragraph{Calibration against empirical temporal drift.}
To calibrate the magnitude of these breakdown values, we compute the empirical distribution of month-over-month absolute CTR changes in the archive. Across all 50 cluster arms and all valid month pairs in 2014 (post-exclusion window), the median $|\Delta \mathrm{CTR}|$ between consecutive months is $0.00223$ (mean $0.00289$; $75$th percentile $0.00396$), or about $15\%$ of the typical cluster-level mean CTR. For the two control clusters specifically: cluster~10 (Trial~A control) has median $|\Delta \mathrm{CTR}| = 0.00224$ ($\approx 16\%$ of its own mean $0.014$); cluster~12 (Trial~B control) has median $|\Delta \mathrm{CTR}| = 0.00142$ ($\approx 12\%$ of its mean $0.011$). Every reported breakdown value ($\gamma^\star \in \{0.00024, 0.00142, 0.00083\}$, i.e., $2\%$--$12\%$ of control CTR) is smaller than the typical one-month empirical drift in the archive. This is the sense in which the Upworthy TATE estimates are fragile: plausibly-sized separability violations, comparable to the month-to-month fluctuations the archive itself exhibits, suffice to nullify the nonzero-effect conclusion.

\begin{figure}[h]
\centering
\includegraphics[width=\linewidth]{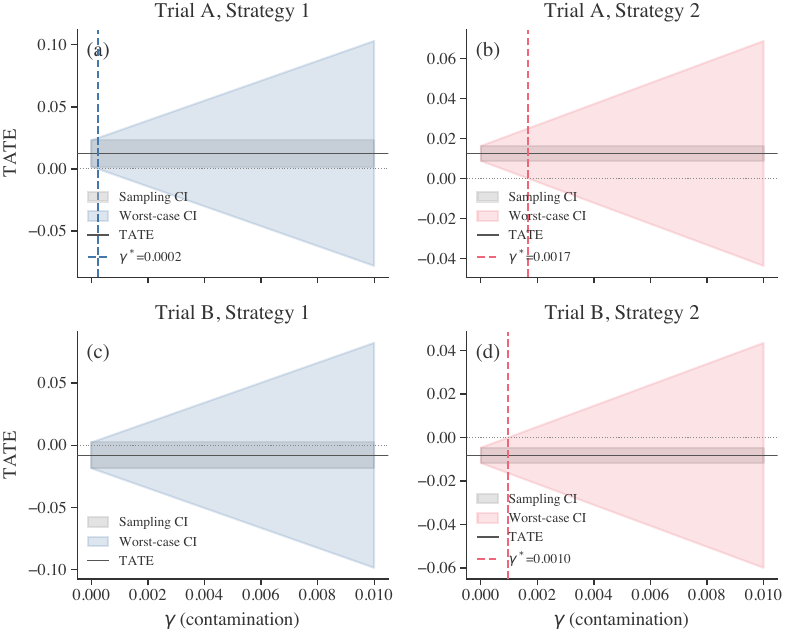}
\caption{Breakdown frontier analysis for the Upworthy TATE estimates. Each panel shows the point estimate (horizontal line), sampling confidence interval (gray band), and the identified set under contamination $\gamma$ (colored band). The vertical dashed line marks the breakdown value $\gamma^*$ at which the identified set first includes zero. Rows correspond to Trial~A (top) and Trial~B (bottom); columns to Strategy~1 (left) and Strategy~2 (right).}
\label{fig:upworthy_sensitivity}
\end{figure}

\subsection{Interpretation and Limitations}
\label{app:upworthy_discussion}

\paragraph{Why do Strategy~1 and Strategy~2 disagree?}
The two strategies make different structural assumptions and fail in different ways on this dataset. Strategy~2's near-flat trajectories on both trials are consistent with its use of a pooled anchor-arm ratio: even when the true TATE swings substantially across months, a multi-anchor common-arm ratio absorbs most of the variation into the anchor's own drift and reports a near-constant temporal effect. Strategy~1's per-trial temporal ratio, by contrast, uses the target cluster pair's own history. On Trial~A this structure happens to capture the true dynamics closely (correlation $+0.97$); on Trial~B it does not (correlation $-0.87$), which we interpret as a \emph{pair-specific} departure from Assumption~\ref{ass:separable}: the cluster pair $(46, 12)$ may exhibit temporal dynamics that differ fundamentally from the population-average dynamics reflected in the pooled ground truth. This is an Assumption~1 violation at the cluster level, not an Assumption~2 violation: in our analysis the measurement lag is a single calendar month for all tests (by aggregation), so A2-style lag-dependent dynamics $\boldsymbol{\Lambda}(t_0, t_1) = f(t_1 - t_0)\,g(t_1)$ would contribute only a constant multiplicative factor that cancels in the temporal ratio.

\paragraph{What the breakdown values tell us.}
Small breakdown values $\gamma^\star$ (roughly $2$--$12\%$ of the control CTR) indicate that the observed effects are within the plausible range of contamination we might expect from pair-specific dynamics. This is what the framework is designed to surface: a practitioner reading the paper for deployment guidance would report both estimates, note the disagreement, consult the $\gamma^\star$ values, and conclude that the evidence for a month-level TATE is fragile---exactly the appropriate conclusion given the mixed Trial~B outcomes.

\paragraph{Limitations.}
Several caveats apply to this analysis:
\begin{itemize}[leftmargin=*, itemsep=2pt]
    \item The ``true'' TATE is itself estimated from data and subject to sampling error, particularly in months with fewer observations.
    
    \item Semantic clustering introduces measurement error: headlines within a cluster are similar but not identical. This may attenuate estimated effects and introduce noise in temporal ratio estimates.
    
    \item The Upworthy platform evolved during the study period (design changes, audience growth), potentially confounding temporal variation in engagement with platform-level changes.
    
    \item Our analysis treats headline clusters as discrete treatments, but the clustering is necessarily a simplification of continuous variation in headline characteristics.
\end{itemize}

\section{Necessity of Structural Restrictions}
\label{app:impossibility}

Assumption~\ref{ass:separable} imposes multiplicative structure on the outcome surface. We show it (or something comparably strong) is \emph{necessary}: the TATE is not point-identified under weaker restrictions commonly invoked in the literature. We give two results. Proposition~\ref{prop:nonident} rules out the fully nonparametric model. Proposition~\ref{prop:smoothness_insufficient} rules out the class of \emph{local smoothness} restrictions (Lipschitz, bounded variation, smoothness in time), showing that no restriction of this form can deliver point identification at a strictly positive temporal gap. Corollary~\ref{cor:minimality} then positions separability as the minimal restriction within the factor-model class that resolves the non-identification.

\subsection{Fully Nonparametric Model}

\begin{proposition}[Non-identification without structural restrictions]
\label{prop:nonident}
Consider the nonparametric model $\mathcal{M}_{\mathrm{NP}}$ consisting of all joint distributions over $(Y_{t_1}(a, t_0), X, A, S)$ satisfying random assignment, SUTVA, and the regularity conditions of Assumption~\ref{ass:regularity}, but without Assumption~\ref{ass:separable}. Fix a target trial $k^\star$ and target displacement $(\delta_0, \delta_1)$. If at least one of the target treatments $\{a_{k^\star}, b_{k^\star}\}$ was never administered in any trial at the target time $(t_{0k^\star} + \delta_0, t_{1k^\star} + \delta_1)$, then $\tau_{k^\star}(\delta_0, \delta_1)$ is not identified over $\mathcal{M}_{\mathrm{NP}}$, even if anchor trials comparing other treatments are observed at both source and target times. The identified set is the entire real line. (Conversely, if both target treatments are observed at the target time, even in separate trials against different baselines, the TATE is nonparametrically identified by direct differencing of marginal means and no structural restriction is needed.)
\end{proposition}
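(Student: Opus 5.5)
The argument is a standard observational-equivalence construction. Fix any true distribution $P \in \mathcal{M}_{\mathrm{NP}}$ and any $c \in \mathbb{R}$. We build a second distribution $P_c \in \mathcal{M}_{\mathrm{NP}}$ that induces the same law of the observed data $(Y, A, S, X)$ but whose TATE is shifted by $c$. Since $c$ is arbitrary, the identified set is all of $\mathbb{R}$.

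Without loss of generality, let $a_{k^\star}$ be the target treatment never administered at the target timing $(t_0^\dagger, t_1^\dagger) := (t_{0k^\star}+\delta_0, t_{1k^\star}+\delta_1)$. The case of $b_{k^\star}$ is symmetric, with $-c$. Define $P_c$ by keeping the joint law of $(X, A, S)$ and of every potential outcome $Y_{t_1}(a, t_0)$ unchanged, except
\[
Y^{(c)}_{t_1^\dagger}(a_{k^\star}, t_0^\dagger) := Y_{t_1^\dagger}(a_{k^\star}, t_0^\dagger) + c .
\]

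We then verify three facts in turn.

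\emph{(i) The observed data law is unchanged.} Under SUTVA, the observed outcome of a unit in trial $k$ is $Y_{t_{1k}}(A, t_{0k})$ with $A \in \{a_k, b_k\}$. By hypothesis, no trial has $(t_{0k}, t_{1k}) = (t_0^\dagger, t_1^\dagger)$ with $a_{k^\star} \in \{a_k, b_k\}$. So the modified potential outcome never enters any observed $Y$, and the joint law of $(Y, A, S, X)$ coincides under $P$ and $P_c$. This holds regardless of what anchor trials comparing other treatments exist at source or target times, since those only reveal potential outcomes of other treatments.

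\emph{(ii) $P_c$ remains in $\mathcal{M}_{\mathrm{NP}}$.} Random assignment $A \perp \{Y\} \mid S, X$ is preserved because we add a constant to one coordinate of the potential-outcome vector, a measurable map independent of $A$. SUTVA is untouched. Overlap and fourth moments are preserved, since $\mathbb{E}[(Y+c)^4] < \infty$. The trial-timing exogeneity and compositional stability conditions are likewise unaffected, since neither the law of $X$ nor the timing process is changed.

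\emph{(iii) The estimand shifts.} Directly,
\[
\tau^{P_c}_{k^\star}(\delta_0,\delta_1) = \tau^{P}_{k^\star}(\delta_0,\delta_1) + c .
\]
Hence every real number is attained by some distribution consistent with the observed law, and the identified set is $\mathbb{R}$.

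For the converse, if both $a_{k^\star}$ and $b_{k^\star}$ appear at the target timing in trials $k_a, k_b$, we use random assignment and compositional stability ($X \perp S$). These give $\mathbb{E}[Y_{t_1^\dagger}(a, t_0^\dagger) \mid S = k^\star] = \mathbb{E}[\mu_{a,k_a}(X)]$, and similarly for $b$. The TATE is then the difference of two identified marginal means, with no appeal to Assumption~\ref{ass:separable}.

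The main obstacle is step (ii): making precise that the perturbation does not break a model restriction. The most delicate point is what ``never administered at the target time'' means when potential outcomes at nearby timings might be linked through the model. $\mathcal{M}_{\mathrm{NP}}$ imposes no cross-time or cross-treatment restriction, so the constant shift at one index is admissible. The write-up should state explicitly that the potential-outcome process is indexed by $(a, t_0, t_1)$ with no constraints across indices.
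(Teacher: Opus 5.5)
Your non-identification argument is the paper's: the paper also leaves every observed treatment--time configuration untouched and shifts the target-time conditional mean $\mu_{a_{k^\star}}(x, t_{0k^\star}+\delta_0, t_{1k^\star}+\delta_1)$ by an arbitrary $c$, so the two DGPs are observationally equivalent but their TATEs differ by $c$. You shift the potential outcome $Y_{t_1^\dagger}(a_{k^\star}, t_0^\dagger)$ itself rather than its conditional mean. That is a slightly more careful version of the same idea, because it gives a bona fide joint law and lets you check membership in $\mathcal{M}_{\mathrm{NP}}$ explicitly, which the paper leaves implicit.

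The one step that fails as written is your sketch of the converse, for which the paper itself gives no argument. Random assignment and $X \perp S$ only yield $\mathbb{E}[\mu_{a,k_a}(X)] = \mathbb{E}[Y_{t_1^\dagger}(a, t_0^\dagger) \mid S = k_a]$. Replacing $S = k_a$ by $S = k^\star$ additionally requires cross-trial exchangeability of potential outcomes, $\mathbb{E}[Y_{t_1^\dagger}(a, t_0^\dagger) \mid X, S = k^\star] = \mathbb{E}[Y_{t_1^\dagger}(a, t_0^\dagger) \mid X, S = k_a]$. The random-sampling condition $Y(a) \perp S \mid X$ of Assumption~\ref{ass:replicated}(i) would supply it. Elsewhere in the paper this condition is carried implicitly by Assumption~\ref{ass:separable}, which conditions on $X$ alone; once separability is dropped, it must be imposed separately.

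Your own device shows the condition is indispensable. Add $c$ to $Y_{t_1^\dagger}(a_{k^\star}, t_0^\dagger)$ only for units with $S = k^\star$; these units are observed only at the source timing when $(\delta_0, \delta_1) \neq (0,0)$. This leaves the observed law, within-trial random assignment, and $X \perp S$ unchanged. Yet it moves the TATE by $c$, even when both target arms were run at the target time in other trials.
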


\begin{proof}
Let $\mu_a(x, t_0, t_1) := \mathbb{E}[Y_{t_1}(a, t_0) \mid X = x]$. The observed data identify $\mu_a$ only at triples $(a, t_0, t_1)$ for which a trial was actually conducted at those times administering treatment $a$. Suppose for contradiction that $\tau_{k^\star}(\delta_0, \delta_1)$ is identified, i.e., the observed data uniquely determine its value.

Construct two data-generating processes $P_1$ and $P_2$ that agree on all observed distributions but differ at the unobserved target time. Let $P_1$ set $\mu_{a_{k^\star}}(x, t_{0k^\star}+\delta_0, t_{1k^\star}+\delta_1) = f(x)$ and $\mu_{b_{k^\star}}(x, t_{0k^\star}+\delta_0, t_{1k^\star}+\delta_1) = h(x)$. Let $P_2$ set $\mu_{a_{k^\star}}(x, t_{0k^\star}+\delta_0, t_{1k^\star}+\delta_1) = f(x) + c$ and $\mu_{b_{k^\star}}(x, t_{0k^\star}+\delta_0, t_{1k^\star}+\delta_1) = h(x)$ for any $c \in \mathbb{R}$. Keep both DGPs identical at all observed treatment-time configurations.

Because the target comparison was not run at the target time, anchor data identifying $\mu_c, \mu_d$ at source and target times places no constraint on $\mu_{a_{k^\star}}$ or $\mu_{b_{k^\star}}$ at $(t_{0k^\star}+\delta_0, t_{1k^\star}+\delta_1)$: without a cross-treatment restriction, each treatment's temporal trajectory is independently free. Both DGPs therefore generate identical observed data distributions, yet $\tau_{k^\star}^{P_2}(\delta_0, \delta_1) - \tau_{k^\star}^{P_1}(\delta_0, \delta_1) = c$. Letting $c$ range over $\mathbb{R}$ yields the identified set $\mathbb{R}$.
\end{proof}

\paragraph{Degrees-of-freedom count.}
The non-identification has a clean dimensional origin. Suppose $K$ trials are observed at $T$ distinct time configurations, with each trial contributing two conditional-mean functions $\mu_a(\cdot)$ (for its two arms). The observed data pin down $2K$ functions on $\mathcal{X}$. The TATE functional, by contrast, depends on four functions at the target time: $\mu_{a_{k^\star}}(\cdot, t_{0k^\star}+\delta_0, t_{1k^\star}+\delta_1)$, $\mu_{b_{k^\star}}(\cdot, t_{0k^\star}+\delta_0, t_{1k^\star}+\delta_1)$, and their source-time counterparts. If no observed trial shares the target time $(t_{0k^\star}+\delta_0, t_{1k^\star}+\delta_1)$ \emph{and} involves treatments $a_{k^\star}$ or $b_{k^\star}$, then the target-time conditional means enter the TATE but are absent from the observed-data likelihood: they are free parameters. A rank-$R$ factor restriction collapses these $\mathcal{O}(T \cdot |\mathcal{A}|)$ treatment-time functions into $\mathcal{O}(T + |\mathcal{A}|)$ factor pieces---the treatment- and covariate-specific effects $\boldsymbol{\theta}_a(\cdot)$ and the $R$-vector temporal effect $\boldsymbol{\Lambda}(\cdot, \cdot)$---so that observed-time constraints transmit to unobserved-time quantities through the shared structure.

\subsection{Local Smoothness Restrictions Are Insufficient}
\label{app:smoothness_insufficient}

A natural weakening of the nonparametric model replaces factor structure with a smoothness-in-time restriction: assume the conditional-mean functions are continuous or Lipschitz in time, so that nearby times inform each other through local regularity rather than through a cross-treatment factor. We show this class of restrictions cannot deliver point identification.

Fix $L > 0$ and consider the model
\begin{align*}
\mathcal{M}_L := \Bigl\{ P \in \mathcal{M}_{\mathrm{NP}} \;:\; &|\mu_a(x, t_0', t_1') - \mu_a(x, t_0, t_1)| \leq L\bigl(|t_0' - t_0| + |t_1' - t_1|\bigr) \\
&\text{for all } a, x, (t_0, t_1), (t_0', t_1') \Bigr\}.
\end{align*}
Let $\Delta_{k^\star}(\delta_0, \delta_1) := \min_{k, (t_0, t_1)}\bigl\{|t_0 - (t_{0k^\star}+\delta_0)| + |t_1 - (t_{1k^\star}+\delta_1)|\bigr\}$, where the minimum is over observed trials $k$ and times at which $a_{k^\star}$ or $b_{k^\star}$ was actually administered. Thus $\Delta_{k^\star}$ is the temporal gap between the target time and the nearest observed time at which one of the target treatments was run.

\begin{proposition}[Insufficiency of local smoothness]
\label{prop:smoothness_insufficient}
Under $\mathcal{M}_L$ and the setup of Proposition~\ref{prop:nonident}, the identified set for $\tau_{k^\star}(\delta_0, \delta_1)$ contains the interval of width at least $2L \cdot \Delta_{k^\star}(\delta_0, \delta_1)$. In particular, whenever $\Delta_{k^\star}(\delta_0, \delta_1) > 0$, the TATE is not point-identified; and as $L \to \infty$, the identified set is the entire real line.
\end{proposition}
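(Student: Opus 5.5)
Our approach is to reuse the two-DGP argument from Proposition~\ref{prop:nonident}, but now requiring that both DGPs lie in $\mathcal{M}_L$. Write $s^\star := (t_{0k^\star}+\delta_0, t_{1k^\star}+\delta_1)$ for the target timing. Let $\mathcal{O}_{k^\star}$ denote the set of observed timings at which $a_{k^\star}$ or $b_{k^\star}$ was administered, and write $d(s) := \min_{s' \in \mathcal{O}_{k^\star}} \|s - s'\|_1$, so that $\Delta_{k^\star} = d(s^\star)$.

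Fix any $P_0 \in \mathcal{M}_L$ with conditional means $\mu_a$ that are Lipschitz with constant at most $L/2$; a time-constant $\mu_a$ works, and $\mathcal{M}_L$ is nonempty in this way. We then define perturbed models $P_{\pm}$ by
\[
\mu^{\pm}_{a_{k^\star}}(x,s) := \mu_{a_{k^\star}}(x,s) \pm \tfrac{L}{2}\, d(s),
\]
and we leave all other arms, including $b_{k^\star}$ and every anchor arm, unchanged.

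The key steps are the following.

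(i) Check membership in $\mathcal{M}_L$. The function $d(\cdot)$ is 1-Lipschitz in the $\ell_1$ metric, because it is a distance to a set. By the triangle inequality, $\mu^{\pm}_{a_{k^\star}}$ is therefore $(L/2 + L/2) = L$-Lipschitz. The other arms are untouched and stay $L/2 \le L$ Lipschitz. Random assignment, SUTVA and the regularity conditions are unaffected, since we only shift conditional means by a bounded amount and can keep the noise distribution fixed.

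(ii) Check observational equivalence. The function $d$ vanishes on $\mathcal{O}_{k^\star}$, so $\mu^{\pm}_{a_{k^\star}}$ coincides with $\mu_{a_{k^\star}}$ at every timing where $a_{k^\star}$ was actually run. All other arms are identical by construction. Hence $P_+$, $P_-$ and $P_0$ induce the same observed-data law.

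(iii) Compute the TATE gap. The TATE depends on $\mu_{a_{k^\star}}(\cdot,s^\star)$ linearly with coefficient one. Therefore
\[
\tau^{P_+} - \tau^{P_-} = L\, d(s^\star) = L\Delta_{k^\star}.
\]
Convexity of $\mathcal{M}_L$ under mixing of the mean perturbation $\pm c\, d(s)$ for $c \in [-L/2, L/2]$ then fills in the whole interval between these two values.

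To reach width $2L\Delta_{k^\star}$ rather than $L\Delta_{k^\star}$, we move the budget away from the baseline. If $P_0$ is chosen with time-constant means, its Lipschitz constant is $0$. We can then perturb with the full $\pm L\, d(s)$ on $a_{k^\star}$ alone, which gives width $2L\Delta_{k^\star}$. Alternatively, if we perturb $a_{k^\star}$ by $+\tfrac{L}{2}d$ and $b_{k^\star}$ by $-\tfrac{L}{2}d$, we need $d$ defined with respect to each arm's own observed set; using the common set $\mathcal{O}_{k^\star}$ keeps this valid, since the common-set distance is still 1-Lipschitz and vanishes wherever either arm was run.

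I expect the main obstacle to be exactly this constant-matching: the statement asserts width $2L\Delta_{k^\star}$ for the full model $\mathcal{M}_L$. Exhibiting one base DGP with slack, namely the time-constant $P_0$, suffices for the identified set (taken at that observed law) to contain an interval of width $2L\Delta_{k^\star}$.

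For the final claims, $\Delta_{k^\star}>0$ gives positive width, so the TATE is not point-identified. Since the width grows linearly in $L$, letting $L\to\infty$ makes the identified set unbounded in both directions, and hence equal to $\mathbb{R}$, which is consistent with Proposition~\ref{prop:nonident}.
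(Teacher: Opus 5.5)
Your argument is correct, and its mechanism is the paper's. You add to the conditional mean of a target arm a Lipschitz bump that vanishes at every timing where $a_{k^\star}$ or $b_{k^\star}$ was run and equals slope $\times\,\Delta_{k^\star}$ at $s^\star$. The observed law is then unchanged while the TATE moves.

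The choice of bump differs only cosmetically. The paper uses the tent $\max\{0,\,1-\|s-s^\star\|_1/\Delta_{k^\star}\}$ centred at the target; you use the distance $d(s)$ to $\mathcal{O}_{k^\star}$. Both are $1$-Lipschitz after scaling, vanish on $\mathcal{O}_{k^\star}$, and equal $\Delta_{k^\star}$ at $s^\star$. Your $d$ is unbounded on an unbounded time domain. That is harmless, because the regularity conditions involve only observed data, and you can avoid it anyway with $\min\{d,\Delta_{k^\star}\}$.

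The substantive difference is the Lipschitz budget, and there your version is the right one. The paper fixes an arbitrary $P_0\in\mathcal{M}_L$ and asserts that an $L$-Lipschitz mean plus a tent of slope up to $L$ remains $L$-Lipschitz. That is false in general: the sum is only guaranteed to be $2L$-Lipschitz.

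In fact the proposition cannot hold ``for every observed law.'' Take $\mu_a(x,s)=c_a(x)+L(t_0+t_1)$ with $s=(t_0,t_1)$ for both target arms. Suppose both arms were run at timings $o_1\le s^\star\le o_2$ (componentwise). Then the $L$-Lipschitz constraints from $o_1$ and $o_2$ pin $\mu_a(x,s^\star)$ exactly. So the TATE is point-identified at that law even though $\Delta_{k^\star}>0$.

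Your choice of a base with slack is exactly what makes the construction valid. Either route works: time-constant means with the full $\pm L\,d$ perturbation, or an $L/2$-Lipschitz base with $\pm\tfrac{L}{2}d$ on the two target arms in opposite directions. Either way you obtain the proposition in existential form: some observed law in $\mathcal{M}_L$ has an identified set containing an interval of width $2L\Delta_{k^\star}$. That is all that non-point-identification and the $L\to\infty$ conclusion require.
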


\begin{proof}
Write $(t_0^\star, t_1^\star) := (t_{0k^\star} + \delta_0, t_{1k^\star} + \delta_1)$ for the target time and $\Delta := \Delta_{k^\star}(\delta_0, \delta_1) > 0$. Fix any $P_0 \in \mathcal{M}_L$; denote its mean functions by $\mu_a^{(0)}(x, t_0, t_1)$. For $\varepsilon \in [-\Delta, \Delta]$, define a perturbation
\[
\zeta_\varepsilon(t_0, t_1) := \varepsilon \cdot \max\!\left\{0,\; 1 - \frac{|t_0 - t_0^\star| + |t_1 - t_1^\star|}{\Delta}\right\},
\]
a tent function of height $\varepsilon$ at the target time, linearly decaying to $0$ by the time it reaches any observed time where $a_{k^\star}$ or $b_{k^\star}$ was administered. By construction, $\zeta_\varepsilon$ is $(|\varepsilon|/\Delta)$-Lipschitz in $(t_0, t_1)$, so $|\varepsilon|/\Delta \leq L$ whenever $|\varepsilon| \leq L\Delta$.

Construct $P_\varepsilon$ by setting
\[
\mu_{a_{k^\star}}^{(\varepsilon)}(x, t_0, t_1) := \mu_{a_{k^\star}}^{(0)}(x, t_0, t_1) + \zeta_\varepsilon(t_0, t_1), \qquad \mu_{b_{k^\star}}^{(\varepsilon)} := \mu_{b_{k^\star}}^{(0)},
\]
and leaving all other treatment arms unchanged. The distribution $P_\varepsilon$ remains in $\mathcal{M}_L$: (i) $\mu_{a_{k^\star}}^{(\varepsilon)}$ is Lipschitz-$L$ because it is the sum of an $L$-Lipschitz function and a $(|\varepsilon|/\Delta)$-Lipschitz function with $|\varepsilon|/\Delta \leq L$, and the Lipschitz constant is still bounded by $L$ under our choice $|\varepsilon| \leq L\Delta$; (ii) $P_\varepsilon$ agrees with $P_0$ at every observed time, because $\zeta_\varepsilon$ vanishes outside a ball of radius $\Delta$ around $(t_0^\star, t_1^\star)$, and $\Delta$ is the distance to the nearest observed time where $a_{k^\star}$ was administered.

The TATE under $P_\varepsilon$ satisfies $\tau_{k^\star}^{P_\varepsilon}(\delta_0, \delta_1) = \tau_{k^\star}^{P_0}(\delta_0, \delta_1) + \varepsilon$. Letting $\varepsilon$ range over $[-L\Delta, L\Delta]$ yields an identified set of width $2L\Delta$. As $L \to \infty$, the construction places no bound on $\varepsilon$ and the identified set is $\mathbb{R}$.
\end{proof}

The perturbation only affects the target treatment, leaving $\mu_{b_{k^\star}}$ (and all anchor arms) untouched. An identical construction perturbing both target arms by opposite-signed tents doubles the bound to $4L\Delta$. The result extends verbatim from Lipschitz smoothness to any local-regularity restriction that admits a compactly supported bump function of arbitrary height---bounded variation, bounded derivative, Sobolev $H^s$ for any finite $s$---since each admits a tent-function construction of sufficiently small norm. The common feature is that \emph{local} restrictions constrain how a function changes within a neighborhood but impose no \emph{cross-treatment} coupling, which is what transmits information from anchor trials to the target arm.

\subsection{Minimality of Separability within the Factor-Model Class}

\begin{corollary}[Minimality of separability]
\label{cor:minimality}
Consider the general factor model $\mathbb{E}[Y_{t_1}(a, t_0) \mid X = x] = \sum_{r=1}^R \theta_a^{(r)}(x) \Lambda^{(r)}(t_0, t_1)$.
\begin{enumerate}[leftmargin=*, itemsep=2pt]
    \item[(i)] If $\boldsymbol{\Lambda}$ is treatment-specific ($\boldsymbol{\Lambda}_a$) with no cross-treatment coupling, the model reduces to $\mathcal{M}_{\mathrm{NP}}$ for the target treatments, and Proposition~\ref{prop:nonident} gives non-identification.
    \item[(ii)] If the restriction is local-only (smoothness in time of $\Lambda^{(r)}$ with treatment-specific effects free to vary), Proposition~\ref{prop:smoothness_insufficient} applies and the TATE is not point-identified.
    \item[(iii)] For rank $R \geq 1$ with a shared temporal-effect vector $\boldsymbol{\Lambda}$ coupling treatments, identification requires at least $R$ independent anchor pairs (Strategy~1) or $R$ independent anchor arms (Strategy~2), as established in Corollaries~\ref{cor:replicated}--\ref{cor:common_arm}: with only $R' < R$ anchors, the effect-contrast matrix $\boldsymbol{\Theta} \in \mathbb{R}^{R' \times R}$ leaves an $(R - R')$-dimensional solution set for $\boldsymbol{\Lambda}^{\mathrm{target}}$, so the TATE is not uniquely determined.
\end{enumerate}
Hence Assumption~\ref{ass:separable} is the minimal factor restriction for which $R$ independent anchor pairs (or arms) suffice to identify the TATE, and shared-effect structure---rather than smoothness in time---is the ingredient that resolves the non-identification of Propositions~\ref{prop:nonident} and~\ref{prop:smoothness_insufficient}. The $R = 1$ case, where a single anchor pair or arm is enough, is the most commonly encountered instance.
\end{corollary}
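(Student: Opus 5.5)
The plan is to prove the three parts in sequence, reusing Propositions~\ref{prop:nonident} and~\ref{prop:smoothness_insufficient} for (i)--(ii). The only genuinely new argument is the linear-algebra count in (iii).

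For (i), I would write the treatment-specific model as $\mu_a(x,t_0,t_1)=\sum_r \theta_a^{(r)}(x)\Lambda_a^{(r)}(t_0,t_1)$. The key step is to show this class contains every conditional-mean surface on the target arm. Take $R=1$, $\theta_{a_{k^\star}}\equiv 1$, and $\Lambda_{a_{k^\star}}$ equal to an arbitrary function of time; when covariates matter, let $\Lambda_a$ carry the $x$-free part and add a constant-in-time factor. With this choice, the two-DGP construction of Proposition~\ref{prop:nonident} can be realized inside the model. Concretely, I would shift $\Lambda_{a_{k^\star}}$ at the target time only, by $c/\bar\theta$, leaving every observed configuration unchanged. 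The TATE then moves by $c$, so the identified set is $\mathbb{R}$.

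For (ii), I would apply the same embedding: the tent perturbation $\zeta_\varepsilon$ of Proposition~\ref{prop:smoothness_insufficient} is added to a treatment-specific factor of $a_{k^\star}$. This preserves Lipschitz-$L$ smoothness of $\Lambda^{(r)}$ (under the same constraint $|\varepsilon|\le L\Delta$) and agrees with the original model at all observed times. The identified set therefore still has width at least $2L\Delta$.

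For (iii), I would start from the anchor equations established in the proofs of Corollaries~\ref{cor:replicated}--\ref{cor:common_arm}: $\boldsymbol{\tau}^{\mathrm{target}}=\boldsymbol{\Theta}\boldsymbol{\Lambda}^{\mathrm{target}}$, now with $\boldsymbol{\Theta}\in\mathbb{R}^{R'\times R}$. For $R'<R$ the null space $\mathcal{N}(\boldsymbol{\Theta})$ has dimension at least $R-R'$, so $\boldsymbol{\Lambda}^{\mathrm{target}}+\mathbf{v}$ with $\mathbf{v}\in\mathcal{N}(\boldsymbol{\Theta})$ reproduces every anchor target-time ATE. The TATE from Theorem~\ref{thm:decomposition} is $\bar{\tilde{\boldsymbol\theta}}_{k^\star}^\top\boldsymbol{\Lambda}^{\mathrm{target}}$, and it shifts by $\bar{\tilde{\boldsymbol\theta}}_{k^\star}^\top\mathbf{v}$. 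Non-identification follows whenever $\bar{\tilde{\boldsymbol\theta}}_{k^\star}\notin\mathrm{row}(\boldsymbol{\Theta})$, which is the generic case; I would state it that way. I would also check that modifying $\boldsymbol{\Lambda}$ only at the target time leaves every other observed moment intact, which holds if no other trial is run at that time.

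Sufficiency with $R$ anchors is exactly Corollaries~\ref{cor:replicated}--\ref{cor:common_arm}, and combining it with (i)--(iii) gives the minimality statement.

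The main obstacle is (iii). If $\bar{\tilde{\boldsymbol\theta}}_{k^\star}$ happens to lie in the row space of $\boldsymbol{\Theta}$, fewer anchors can suffice. The statement's ``not uniquely determined'' must therefore be read over the model class, i.e.\ for some admissible $\boldsymbol\theta$, rather than for every configuration. I would make that quantifier explicit: since $\boldsymbol\theta_{a_{k^\star}}$ is unrestricted, one can always choose it with $\bar{\tilde{\boldsymbol\theta}}_{k^\star}^\top\mathbf{v}\ne0$.
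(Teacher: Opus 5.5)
Your route is the paper's own. Parts (i) and (ii) are delegated to Propositions~\ref{prop:nonident} and~\ref{prop:smoothness_insufficient}, and (iii) is the dimension count for $\boldsymbol{\tau}^{\mathrm{target}}=\boldsymbol{\Theta}\boldsymbol{\Lambda}^{\mathrm{target}}$ with $\boldsymbol{\Theta}\in\mathbb{R}^{R'\times R}$. Your row-space caveat is a real correction: the statement's ``not uniquely determined'' fails when $\bar{\tilde{\boldsymbol{\theta}}}_{k^\star}\in\mathrm{row}(\boldsymbol{\Theta})$.

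\textbf{The gap is where you upgrade the count to observational equivalence.} The anchor trials are themselves run at the target timing. So the observed law contains not only their ATEs but each arm's full conditional mean $x\mapsto\boldsymbol{\theta}_a(x)^\top\boldsymbol{\Lambda}^{\mathrm{target}}$ for $a\in\{a_r^*,b_r^*\}$. For Strategy~2 it contains this for every arm of every trial measured at $t_{1k^\star}+\delta_1$. Your shift changes these by $\boldsymbol{\theta}_a(x)^\top\mathbf{v}$, which vanishes only if $\mathbf{v}$ is orthogonal to every such $\boldsymbol{\theta}_a(x)$. The condition $\boldsymbol{\Theta}\mathbf{v}=\mathbf{0}$ does not give this, and ``no other trial at that time'' does not cover the anchors themselves.

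\textbf{Non-identification is also not the generic case once these data are used.} Take $R=2$ and a single anchor arm $c$ with $\boldsymbol{\theta}_c(x)=(1,x_1)^\top$. Suppose it is observed at the target time and at source times $s_1,s_2$ with $\boldsymbol{\Lambda}(s_1),\boldsymbol{\Lambda}(s_2)$ independent. Then the observed function $x\mapsto\boldsymbol{\theta}_c(x)^\top\boldsymbol{\Lambda}^{\mathrm{target}}$ is a unique combination $\alpha_1 f_{s_1}+\alpha_2 f_{s_2}$ of the observed $f_s(x)=\boldsymbol{\theta}_c(x)^\top\boldsymbol{\Lambda}(s)$. This forces $\boldsymbol{\Lambda}^{\mathrm{target}}=\alpha_1\boldsymbol{\Lambda}(s_1)+\alpha_2\boldsymbol{\Lambda}(s_2)$. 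The TATE is then $\alpha_1\tau_{k^\star}(s_1)+\alpha_2\tau_{k^\star}(s_2)$, using the target pair's source-time ATEs that the general-$R$ formulas already assume. One anchor suffices.

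So (iii) can be established in only two ways.
\begin{enumerate}
\item[(a)] As the paper in effect argues: a claim about the anchor-moment system that the two strategies use. That system has an affine solution set of dimension $R-\mathrm{rank}(\boldsymbol{\Theta})$ and pins down the TATE iff $\bar{\tilde{\boldsymbol{\theta}}}_{k^\star}\in\mathrm{row}(\boldsymbol{\Theta})$.
\item[(b)] As existential non-identification over the model class, via a deliberately degenerate instance. Let every arm observed at the target timing satisfy $\boldsymbol{\theta}_a(x)\in W$ for all $x$, with $\dim W=R'<R$. Choose $\boldsymbol{\theta}_{a_{k^\star}}$ so that $\bar{\tilde{\boldsymbol{\theta}}}_{k^\star}\notin W$. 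Shift by $\mathbf{v}\in W^\perp$ with $\bar{\tilde{\boldsymbol{\theta}}}_{k^\star}^\top\mathbf{v}\neq 0$; this leaves the whole observed law unchanged.
\end{enumerate}
Your quantifier remark points the right way, but the non-identifying instances are the degenerate ones, not the generic ones.

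Two smaller repairs.
\begin{enumerate}
\item[(ii)] Adding a $(|\varepsilon|/\Delta)$-Lipschitz tent to an $L$-Lipschitz factor gives only an $(L+|\varepsilon|/\Delta)$-Lipschitz function. The same slip appears in the paper's proof of Proposition~\ref{prop:smoothness_insufficient}. Start instead from a base whose factor for the target arm is constant in time near the target, so $|\varepsilon|\le L\Delta$ is admissible. One such base suffices for non-point-identification.
\item[(i)] A fixed-rank treatment-specific class does not contain every conditional-mean surface, and $\boldsymbol{\Lambda}_a$ cannot carry $x$. You do not need that claim. Shift $\boldsymbol{\Lambda}_a$ at the target timing by any $\mathbf{v}$ with $\bar{\boldsymbol{\theta}}_a^\top\mathbf{v}=c$, where $a$ is whichever target arm was not administered at that timing; this already gives the $\mathbb{R}$-valued identified set.
\end{enumerate}
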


In short, separability is not an arbitrary convenience: it is the weakest factor structure for which a matching count of anchor trials---$R$ independent anchor pairs for Strategy~1 or $R$ independent anchor arms for Strategy~2---suffices to identify the temporal effect, and neither pure nonparametric nor local-smoothness restrictions can substitute for it.

\section{Partial Identification: Full Derivation}
\label{app:sensitivity}

This appendix provides the full derivation of Proposition~\ref{prop:sensitivity} (the partial identification bound under separability violations) and the analogous result for Strategy~2.

\subsection{Setup}

Under the contaminated model~\eqref{eq:contaminated}, define the trial-specific contrast means
\begin{align*}
\bar{\tilde{\theta}}_{a,b,k} &:= \mathbb{E}[\theta_a(X) - \theta_b(X) \mid S = k], \\
\bar{\tilde{\Gamma}}_{a,b,k}(t_0, t_1) &:= \mathbb{E}[\Gamma_a(X, t_0, t_1) - \Gamma_b(X, t_0, t_1) \mid S = k].
\end{align*}
The partial identification parameter is $\gamma := \sup_{a,x,t_0,t_1} |\Gamma_a(x, t_0, t_1)|$, so $|\bar{\tilde{\Gamma}}_{a,b,k}(t_0, t_1)| \leq 2\gamma$ by the triangle inequality.

\subsection{True TATE under the Contaminated Model}

The TATE is defined as a difference of potential outcomes and is well-defined irrespective of separability:
\begin{align*}
\tau_{k^\star}^{\mathrm{true}}(\delta_0, \delta_1) 
&= \mathbb{E}\!\left[Y_{t_1+\delta_1}(a_{k^\star}, t_0+\delta_0) - Y_{t_1+\delta_1}(b_{k^\star}, t_0+\delta_0) \,\middle|\, S = k^\star\right] \\
&= \bar{\tilde{\theta}}_{a_{k^\star},b_{k^\star},k^\star} \cdot \Lambda^T + \bar{\tilde{\Gamma}}_{k^\star}^T,
\end{align*}
where we abbreviate $\Lambda^T := \Lambda(t_{0k^\star}+\delta_0, t_{1k^\star}+\delta_1)$, $\Lambda^S := \Lambda(t_{0k^\star}, t_{1k^\star})$, and $\bar{\tilde{\Gamma}}_{k^\star}^T := \bar{\tilde{\Gamma}}_{a_{k^\star},b_{k^\star},k^\star}(t_{0k^\star}+\delta_0, t_{1k^\star}+\delta_1)$ (and analogously for $S$). The source-time observed ATE is $\tau_{k^\star}^{\mathrm{obs}} = \bar{\tilde{\theta}}_{k^\star} \cdot \Lambda^S + \bar{\tilde{\Gamma}}_{k^\star}^S$.

\subsection{What the Strategy 1 Estimator Targets}

The Strategy~1 plug-in estimator converges to
\begin{equation*}
\psi_1^* = \tau_{k^\star}^{\mathrm{obs}} \cdot \frac{\tau_j^{\mathrm{obs}}}{\tau_{j'}^{\mathrm{obs}}},
\end{equation*}
where $\tau_j^{\mathrm{obs}}$ and $\tau_{j'}^{\mathrm{obs}}$ are the observed ATEs of the anchor pair $(a^\star, b^\star)$ at target and source times. Writing $\rho := \Lambda^T/\Lambda^S$ and defining contamination ratios $\alpha_{k^\star} := \bar{\tilde{\Gamma}}_{k^\star}^S / (\bar{\tilde{\theta}}_{k^\star} \Lambda^S)$, $\beta_{\mathrm{anc}}^T := \bar{\tilde{\Gamma}}_{\mathrm{anc}}^T/(\bar{\tilde{\theta}}_{\mathrm{anc}} \Lambda^T)$, $\beta_{\mathrm{anc}}^S := \bar{\tilde{\Gamma}}_{\mathrm{anc}}^S/(\bar{\tilde{\theta}}_{\mathrm{anc}} \Lambda^S)$, algebra gives
\begin{equation*}
\psi_1^* = \bar{\tilde{\theta}}_{k^\star} \cdot \Lambda^S \cdot (1 + \alpha_{k^\star}) \cdot \rho \cdot \frac{1 + \beta_{\mathrm{anc}}^T}{1 + \beta_{\mathrm{anc}}^S}.
\end{equation*}

\subsection{First-order Bias Expansion}

For small contamination ratios ($|\alpha|, |\beta^T|, |\beta^S| \ll 1$), a first-order Taylor expansion of $(1+\beta^T)/(1+\beta^S)$ yields
\begin{equation*}
\psi_1^* - \tau_{k^\star}^{\mathrm{true}} \;\approx\; \bar{\tilde{\theta}}_{k^\star} \Lambda^S \rho \cdot (\alpha_{k^\star} + \beta_{\mathrm{anc}}^T - \beta_{\mathrm{anc}}^S) - \bar{\tilde{\Gamma}}_{k^\star}^T.
\end{equation*}
Substituting back the definitions of $\alpha, \beta^T, \beta^S$ and using $\tau_{k^\star}^{\mathrm{obs}} \approx \bar{\tilde{\theta}}_{k^\star} \Lambda^S$ to first order:
\begin{equation}
\label{eq:bias_exact}
\mathrm{Bias}_1 \;\approx\; \rho \cdot \bar{\tilde{\Gamma}}_{k^\star}^S - \bar{\tilde{\Gamma}}_{k^\star}^T + \frac{\bar{\tilde{\theta}}_{k^\star}}{\bar{\tilde{\theta}}_{\mathrm{anc}}} \left( \bar{\tilde{\Gamma}}_{\mathrm{anc}}^T - \rho \cdot \bar{\tilde{\Gamma}}_{\mathrm{anc}}^S \right).
\end{equation}

\subsection{The Bound}

Applying $|\bar{\tilde{\Gamma}}| \leq 2\gamma$ to every $\bar{\tilde{\Gamma}}$ term in~\eqref{eq:bias_exact} and grouping yields the bound of Proposition~\ref{prop:sensitivity}:
\begin{equation*}
|\mathrm{Bias}_1| \;\leq\; 2\gamma \cdot (1 + |\rho|) \cdot \left( 1 + \frac{|\bar{\tilde{\theta}}_{k^\star}|}{|\bar{\tilde{\theta}}_{\mathrm{anc}}|} \right).
\end{equation*}
The bound is zero at $\gamma = 0$, linear in $\gamma$, amplified by $|\rho|$, and deteriorates when the target effect is large relative to the anchor's.

\subsection{Strategy 2 Analog}

For Strategy~2, the estimator targets $\psi_2^* = \tau_{k^\star}^{\mathrm{obs}} \cdot \bar{\mu}_{c^\star,\ell}/\bar{\mu}_{c^\star,\ell'}$. Under the contaminated model, $\bar{\mu}_{c^\star,\ell} = \bar{\theta}_{c^\star} \Lambda(t_1^{\mathrm{target}}) + \bar{\Gamma}_{c^\star}(t_0^{\mathrm{target}}, t_1^{\mathrm{target}})$, and analogously for $\ell'$. Define single-arm contamination ratios $\beta_{c^\star}^T := \bar{\Gamma}_{c^\star}^T/(\bar{\theta}_{c^\star}\Lambda^T)$ and $\beta_{c^\star}^S := \bar{\Gamma}_{c^\star}^S/(\bar{\theta}_{c^\star}\Lambda^S)$; by $|\bar{\Gamma}_{c^\star}| \leq \gamma$, $|\beta_{c^\star}| \leq \gamma/|\bar{\theta}_{c^\star}\Lambda|$.
A first-order Taylor expansion of the plug-in ratio $(1+\beta_{c^\star}^T)/(1+\beta_{c^\star}^S) \approx 1 + \beta_{c^\star}^T - \beta_{c^\star}^S$, using $\rho_2 = \Lambda^T/\Lambda^S$ and the identity $\bar{\tilde{\theta}}_{k^\star}\Lambda^S \beta_{c^\star}^T = (\bar{\tilde{\theta}}_{k^\star}/\bar{\theta}_{c^\star})(\Lambda^S/\Lambda^T)\bar{\Gamma}_{c^\star}^T$, gives
\[
\mathrm{Bias}_2 \;\approx\; \rho_2 \bar{\tilde{\Gamma}}_{k^\star}^S - \bar{\tilde{\Gamma}}_{k^\star}^T \;+\; \frac{\bar{\tilde{\theta}}_{k^\star}}{\bar{\theta}_{c^\star}} \bigl( \bar{\Gamma}_{c^\star}^T - \rho_2\, \bar{\Gamma}_{c^\star}^S \bigr),
\]
where the $\Lambda^S/\Lambda^T$ cancels against the $\rho_2$ factor introduced by the Taylor expansion (it is \emph{not} a net factor in the final bound). Applying $|\bar{\tilde{\Gamma}}|\leq 2\gamma$ and $|\bar{\Gamma}_{c^\star}|\leq \gamma$ and the triangle inequality yields the scale-invariant first-order bound
\begin{equation*}
|\mathrm{Bias}_2| \;\leq\; \gamma \cdot (1 + |\rho_2|) \cdot \left( 2 + \frac{|\bar{\tilde{\theta}}_{k^\star}|}{|\bar{\theta}_{c^\star}|} \right) + o(\gamma).
\end{equation*}

\subsection{General-$R$ Partial Identification Bound}

For general $R$, the temporal ratio is replaced by the matrix product $\boldsymbol{\Lambda}^{\mathrm{target}} = \boldsymbol{\Theta}^{-1}\boldsymbol{\tau}^{\mathrm{target}}$, and the contamination of anchor ATEs propagates through this inverse. Write the observed target-trial source-time ATE as $\boldsymbol{\tau}_{k^\star}^{\mathrm{source,obs}} = \mathbf{M}\bar{\tilde{\boldsymbol{\theta}}}_{k^\star} + \bar{\tilde{\boldsymbol{\Gamma}}}_{k^\star}^{\mathrm{source}}$ (an $R$-vector indexed by source time) and the observed anchor target-time ATE vector as $\boldsymbol{\tau}^{\mathrm{target,obs}} = \boldsymbol{\Theta}\boldsymbol{\Lambda}^{\mathrm{target}} + \bar{\tilde{\boldsymbol{\Gamma}}}^{\mathrm{target}}$, both $R$-vectors. Let $\bar{\tilde{\boldsymbol{\Gamma}}}_{k^\star}^{\mathrm{target}} \in \mathbb{R}$ denote the (scalar) target-trial contamination contrast at the target time. A first-order expansion of the plug-in estimator gives
\begin{equation*}
\mathrm{Bias}_1 \;\approx\; \bar{\tilde{\boldsymbol{\theta}}}_{k^\star}^\top \boldsymbol{\Theta}^{-1} \bar{\tilde{\boldsymbol{\Gamma}}}^{\mathrm{target}} \;-\; \bar{\tilde{\Gamma}}_{k^\star}^{\mathrm{target}} \;+\; (\boldsymbol{\Lambda}^{\mathrm{target}})^\top \mathbf{M}^{-1} \bar{\tilde{\boldsymbol{\Gamma}}}_{k^\star}^{\mathrm{source}},
\end{equation*}
where each term is a scalar: the first propagates anchor-at-target-time contamination through $\boldsymbol{\Theta}^{-1}$, the second is the target-trial's own contamination at the target time, and the third propagates the target trial's source-time contamination (an $R$-vector) through $\mathbf{M}^{-1}$ and then projects onto the target-time temporal-effect direction $\boldsymbol{\Lambda}^{\mathrm{target}}$. The first two terms inherit a factor of $\|\boldsymbol{\Theta}^{-1}\| \leq 1/\sigma_{\min}(\boldsymbol{\Theta})$ from the matrix inversion; the third inherits $\|\mathbf{M}^{-1}\|$ analogously. In the $R = 1$ case all three reduce to scalars and the expansion recovers Equation~\eqref{eq:bias_exact}.

Applying $\|\bar{\tilde{\boldsymbol{\Gamma}}}\|_\infty \leq 2\gamma$ componentwise and submultiplicativity yields the rotation-invariant first-order bound
\begin{equation}
\label{eq:bias_bound_generalR}
|\mathrm{Bias}_1| \;\leq\; 2\gamma \cdot (1 + \|\boldsymbol{\Lambda}^{\mathrm{target}}\|) \cdot \kappa(\boldsymbol{\Theta}) \cdot \left(1 + \frac{\|\bar{\tilde{\boldsymbol{\theta}}}_{k^\star}\|}{\sigma_{\min}(\boldsymbol{\Theta})}\right) + o(\gamma),
\end{equation}
which recovers the $R = 1$ bound when $\boldsymbol{\Theta}$ is scalar ($\kappa = 1$ and $\sigma_{\min} = |\bar{\tilde{\theta}}_{\mathrm{anc}}|$). The general-$R$ bound is amplified by the condition number $\kappa(\boldsymbol{\Theta})$: ill-conditioned anchors compound non-separability into larger worst-case bias. This motivates choosing anchor pairs with linearly independent effect contrasts (large $\sigma_{\min}$) when multiple are available.

\subsection{Practical Usage: Breakdown Frontier Analysis}

A practitioner would:
\begin{enumerate}[leftmargin=*, itemsep=2pt]
\item Estimate $\rho$ and $|\bar{\tilde{\theta}}_{k^\star}|/|\bar{\tilde{\theta}}_{\mathrm{anc}}|$ from data (both are observable).
\item Choose a range of $\gamma$ values reflecting domain knowledge about plausible departures from separability.
\item Plot the identified set $[\hat{\psi} - B(\gamma),\; \hat{\psi} + B(\gamma)]$ as a function of $\gamma$---the \emph{breakdown frontier}.
\item Identify the breakdown value $\gamma^*$ at which the identified set first includes zero, analogous to Rosenbaum's $\Gamma$ in observational studies. If $\gamma^*$ is large relative to the outcome scale, the conclusion is robust; if small, the finding is fragile to non-separability.
\end{enumerate}

\section{Simulation Study Details}
\label{app:sim_design}

This appendix fills in the simulation-design details cut from Section~\ref{sec:simulations} and documents the additional studies summarized in the main text.

\paragraph{Full DGP specification.}
All three DGPs share covariates $X = (X_1, X_2)^\top$ with $X_1 \sim \mathcal{N}(0, 1)$ and $X_2 \sim \text{Bernoulli}(0.5)$, three treatments $a \in \{0, 1, 2\}$ with treatment- and covariate-specific effects
\begin{align*}
\theta_0(X) &= 2 + 0.5 X_1 + 0.3 X_2, \\
\theta_1(X) &= 3 + 0.9 X_1 + 0.5 X_2, \\
\theta_2(X) &= 2.5 + 0.7 X_1 + 0.4 X_2,
\end{align*}
seasonal base temporal effect $\Lambda(t) = 1 + 0.3 \sin(2\pi t / 12)$ with a 12-period cycle, and Gaussian noise $\epsilon \sim \mathcal{N}(0, 1)$ with $\mathbb{E}[\epsilon \mid X, A] = 0$. Potential outcomes are:
\begin{itemize}[leftmargin=1.5em, itemsep=2pt]
\item \textbf{DGP~A}: $Y_{t_1}(a, t_0) = \theta_a(X) \cdot \Lambda(t_1) + \epsilon$. All assumptions hold.
\item \textbf{DGP~B}: $Y_{t_1}(a, t_0) = \theta_a(X) \cdot e^{-\beta(t_1 - t_0)} \cdot \Lambda(t_1) + \epsilon$ with $\beta = 0.1$. Separability holds with $\Lambda(t_0, t_1) = e^{-\beta(t_1 - t_0)}\Lambda(t_1)$, but Assumption~\ref{ass:measurement} is violated.
\item \textbf{DGP~C}: $Y_{t_1}(a, t_0) = \theta_a(X) \cdot \Lambda(t_1) + \delta_c \sin(2\pi t_1 / 12 + \varphi_a) + \epsilon$ with treatment-specific phase shifts $\varphi_0 = 0$, $\varphi_1 = \pi/3$, $\varphi_2 = 2\pi/3$ and $\delta_c = 0.5$. Separability itself is violated.
\end{itemize}

\paragraph{Trial structure.}
DGPs~A and C use Table~\ref{tab:trial_config_A}, where all trials share lag~2. The target trial $k^\star = 1$ compares treatment~1 against control at $(t_{0k^\star}, t_{1k^\star}) = (1, 3)$; the TATE is evaluated at target measurement time $t_1 = 9$. DGP~B uses Table~\ref{tab:trial_config_B}, where lags vary across trials.

\begin{table}[h]
\centering
\begin{minipage}{0.46\linewidth}
\centering
\caption{Trial configurations for DGPs A and C (shared lag 2; $n_k = n/6$; treatment probability 0.5).}
\label{tab:trial_config_A}
\scriptsize
\begin{tabular}{@{}cccl@{}}
\toprule
$k$ & $(a_k, b_k)$ & $(t_{0k}, t_{1k})$ & Role \\
\midrule
1 & $(1, 0)$ & $(1, 3)$ & Target \\
2 & $(1, 0)$ & $(7, 9)$ & S1 anchor (target time) \\
3 & $(1, 0)$ & $(1, 3)$ & S1 anchor (source time) \\
4 & $(2, 0)$ & $(1, 3)$ & S2 anchor (source time) \\
5 & $(2, 0)$ & $(7, 9)$ & S2 anchor (target time) \\
6 & $(1, 0)$ & $(4, 6)$ & Additional \\
\bottomrule
\end{tabular}
\end{minipage}
\hfill
\begin{minipage}{0.50\linewidth}
\centering
\caption{Trial configuration for DGP B with varying lags.}
\label{tab:trial_config_B}
\scriptsize
\begin{tabular}{@{}ccccl@{}}
\toprule
$k$ & $(a_k, b_k)$ & $(t_{0k}, t_{1k})$ & Lag & Role \\
\midrule
1 & $(1, 0)$ & $(1, 3)$ & 2 & Target \\
2 & $(1, 0)$ & $(5, 9)$ & 4 & S1 anchor (target time) \\
3 & $(1, 0)$ & $(1, 3)$ & 2 & S1 anchor (source time) \\
4 & $(2, 0)$ & $(2, 3)$ & 1 & S2 anchor (source time) \\
5 & $(2, 0)$ & $(5, 9)$ & 4 & S2 anchor (target time) \\
6 & $(1, 0)$ & $(4, 6)$ & 2 & Additional \\
\bottomrule
\end{tabular}
\end{minipage}
\end{table}

\paragraph{Estimators.}
We compare five estimators: \textbf{Oracle} uses the true temporal ratio, providing a lower bound on achievable variance; \textbf{S1} implements Strategy~1 using trials 2 and 3; \textbf{S2-C} and \textbf{S2-T} implement Strategy~2 with control ($c^* = 0$) and treatment~2 ($c^* = 2$) as anchors, respectively; \textbf{S2-M} combines both anchors via inverse-variance weighting (Section~\ref{sec:estimation}). Nuisance functions are estimated via gradient boosting with 5-fold cross-fitting. We vary $n \in \{600, 1200, 2400, 4800\}$ with $B = 500$ Monte Carlo replications.

\section{Additional Simulation Studies}
\label{app:simulations_extra}

The main-body Figure~\ref{fig:diagnostics} shows four of the principal stressors: lag-decay $\beta$, separability contamination $\gamma$, rank-2 misspecification $\delta$, and population drift $\delta_{\mathrm{drift}}$. This appendix documents the setups of those sweeps and reports three studies that did not fit in the main body: the specification-test power curve, double-robustness of nuisance misspecification, and $\Lambda$-shape invariance.

\paragraph{Rank-2 misspecification setup.}
The true DGP is rank-2: $Y = \theta_1(X)\Lambda_1(t) + \delta\,\theta_2(X)\Lambda_2(t) + \epsilon$ with $\delta \in \{0, 0.2, 0.4, 0.6, 0.8, 1.0\}$, estimated under $R = 1$. Figure~\ref{fig:diagnostics}(c) shows both Strategy~1 and Strategy~2-M stay well below the naive baseline (target-trial ATE with no temporal adjustment) for all $\delta \leq 1$: even under rank misspecification, the temporal correction provides a meaningful improvement.

\paragraph{Population drift setup.}
We shift the covariate distribution at target-time trials: $X_1 \sim \mathcal{N}(\delta_{\mathrm{drift}}, 1)$ with $\delta_{\mathrm{drift}} \in [0, 1]$. Figure~\ref{fig:diagnostics}(d) shows the DR estimator maintains near-nominal coverage across the full range, while the unadjusted estimator falls into the failure region; this confirms Theorem~\ref{thm:asymptotic}(b).

\paragraph{Specification test power curve and additional studies.}
Figure~\ref{fig:robustness_extra} reports three supporting studies. \textbf{(a)}~Specification test (Appendix~\ref{app:spec_test}): rejection rate at the nominal $5\%$ under $\beta = 0$ and rising in $\beta$, detecting measurement-time violations. \textbf{(b)}~Double robustness: bias under three nuisance configurations (both correct, wrong outcome model, wrong propensity); the DR estimator remains approximately unbiased when either nuisance is correct, and only degrades when both are misspecified. \textbf{(c)}~$\Lambda$-shape invariance: RMSE of Strategy~1 and Strategy~2-M across four qualitatively different $\Lambda(t)$ families (sinusoidal, linear, step, exponential); both strategies deliver uniformly low error, since they only require the factor-model assumption and not any parametric form for $\Lambda$.

\begin{figure}[h]
\centering
\includegraphics[width=0.95\linewidth]{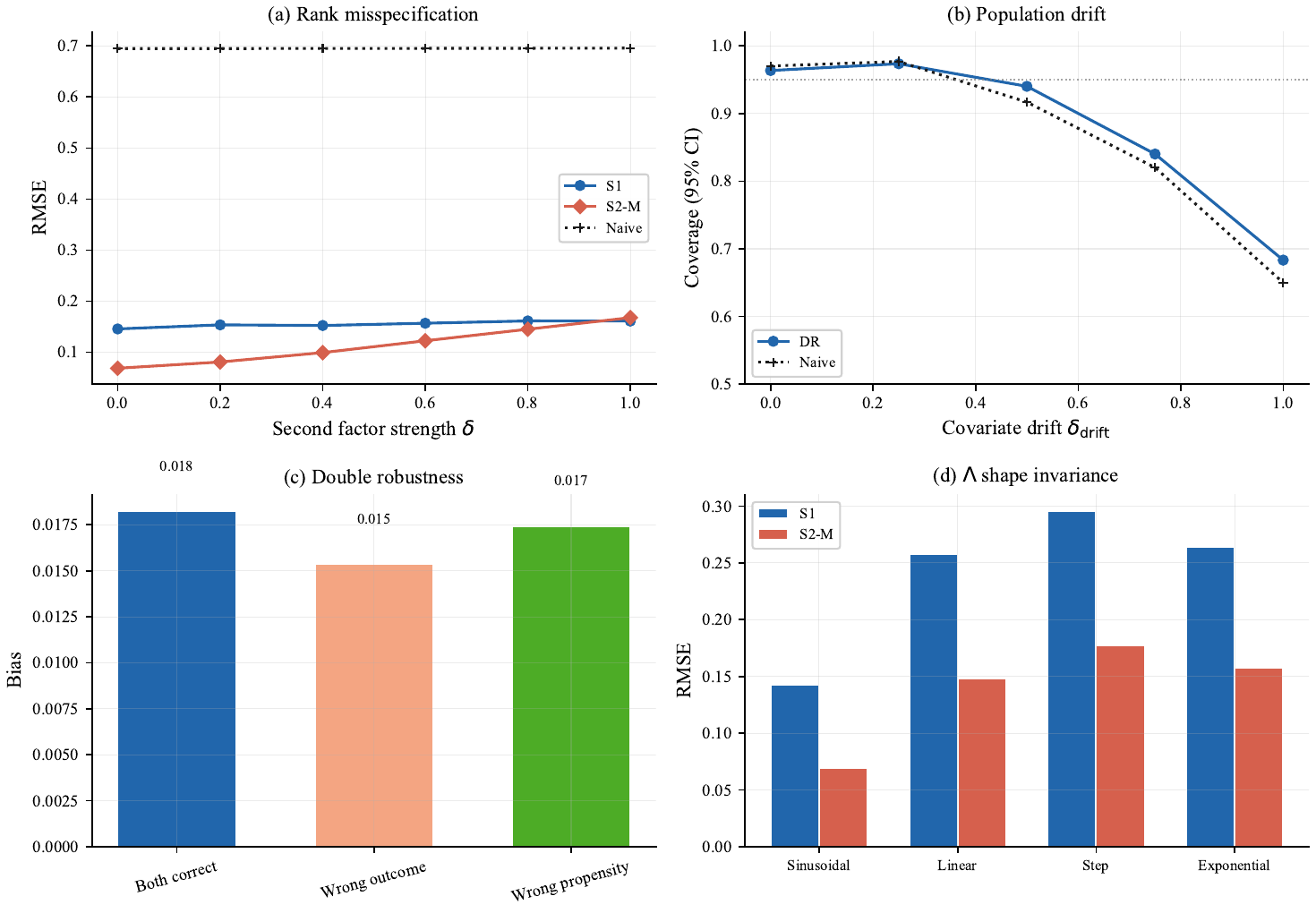}
\caption{Additional robustness studies complementing main-body Figure~\ref{fig:diagnostics}. \textbf{(a,b)}~Reproduced from main body for appendix self-containedness: rank-$2$ misspecification RMSE and population-drift coverage. \textbf{(c)}~Double robustness: bias when the outcome model, propensity scores, or both are misspecified. \textbf{(d)}~$\Lambda$-shape invariance: RMSE of Strategy~1 and Strategy~2-M across sinusoidal, linear, step, and exponential temporal effects.}
\label{fig:robustness_extra}
\end{figure}

\end{document}